\documentclass[11pt,a4paper]{article}

\usepackage[utf8]{inputenc}
\usepackage[T1]{fontenc}
\usepackage[english]{babel}
\usepackage{amsmath,amssymb,amsthm,mathtools}
\usepackage{mathrsfs}
\usepackage{geometry}
\usepackage{booktabs}
\usepackage{enumitem}
\usepackage[final,expansion=false]{microtype}
\usepackage{xcolor}
\usepackage{caption}
\usepackage{authblk}
\usepackage[hidelinks]{hyperref}

\theoremstyle{plain}
\newtheorem{theorem}{Theorem}[section]
\newtheorem{lemma}[theorem]{Lemma}
\newtheorem{proposition}[theorem]{Proposition}
\newtheorem{corollary}[theorem]{Corollary}
\theoremstyle{definition}
\newtheorem{definition}[theorem]{Definition}
\newtheorem{remark}[theorem]{Remark}

\newtheorem{assumption}[theorem]{Assumption}

\newcommand{\R}{\mathbb{R}}
\newcommand{\C}{\mathbb{C}}
\newcommand{\Z}{\mathbb{Z}}
\newcommand{\eco}{\mathcal{E}}
\newcommand{\deq}{\vcentcolon=}
\newcommand{\re}{\operatorname{Re}}
\newcommand{\im}{\operatorname{Im}}
\newcommand{\ind}{\operatorname{ind}}
\newcommand{\sgn}{\operatorname{sgn}}
\newcommand{\diag}{\operatorname{diag}}

\newcommand{\Sym}{\operatorname{Sym}}
\newcommand{\supp}{\operatorname{supp}}
\newcommand{\norm}[1]{\left\lVert #1 \right\rVert}
\newcommand{\abs}[1]{\left\lvert #1 \right\rvert}
\newcommand{\ph}{\widehat{\varphi}}
\newcommand{\one}{\mathbf{1}}
\title{\bfseries Monodromy of Walrasian Equilibrium Prices:\\
A Fixed-Preference Endowment-Redistribution Construction}
\author[1]{Andrea Loi}
\author[2]{Stefano Matta}
\affil[1]{Dipartimento di Matematica e Informatica, Università di Cagliari, Cagliari, Italy; \href{mailto:loi@unica.it}{loi@unica.it}}
\affil[2]{Dipartimento di Scienze Economiche e Aziendali, Università di Cagliari, Cagliari, Italy; \href{mailto:smatta@unica.it}{smatta@unica.it}}
\date{}

\begin{document}
\maketitle

\begin{abstract}
Suppose an economy moves continuously along a closed path of fundamentals and returns exactly
to its initial state. A regular Walrasian equilibrium price followed continuously along that
path need not return to itself. For every number of commodities $L\ge3$,
we construct a closed one-parameter family of pure-exchange economies in which all
preferences are fixed, only three individual endowments vary, and aggregate resources
remain constant. Along the loop, two regular equilibrium-price branches are exchanged while
a third returns to itself; at the end, every economic primitive is exactly as it was
initially. The construction uses at most $2L+2$ consumers. We also provide an exact
realization with at most $2L$ consumers in which all endowments are fixed and only one
preference varies; the two families generate identical aggregate excess demand. The key
realization problem is parametric: classical aggregate-realization results guarantee an
economy at each fixed parameter value, but not a continuous loop of realizing primitives.
We construct such a loop explicitly. The resulting monodromy is structurally stable under
sufficiently small $C^1$ perturbations. It is impossible with one or two commodities,
making three commodities the sharp threshold. Thus local regularity guarantees local
continuation but not a globally path-independent identity of an equilibrium branch, with
implications for global comparative statics under multiplicity and for equilibrium selection
based on continuation.
\end{abstract}

\noindent\textbf{Keywords:} Walrasian equilibrium; monodromy; global comparative statics;
parametric realization; endowment redistribution; homotopy continuation;
Sonnenschein--Mantel--Debreu.

\noindent\textbf{JEL classification:} D50; D51; C62; C63.

\section{Introduction}\label{sec:intro}

Suppose an economy moves continuously along a closed path of fundamentals and returns exactly
to its initial state. Must a regular Walrasian equilibrium price followed continuously along
that path also return to itself? At a regular equilibrium, the implicit function theorem gives
a unique local continuation for sufficiently small changes in fundamentals. It is therefore
natural to speak locally of following ``the same'' equilibrium as the economy changes. The
question addressed in this paper is whether these local identifications necessarily fit
together into a globally consistent identity of the equilibrium branch. We show that they do
not.

The phenomenon can be described without topological terminology. Start from an economy with
several regular equilibrium prices and select one of them. Change the fundamentals
continuously along a closed path and, at each small step, follow the uniquely continued
equilibrium. When the path is completed, every fundamental of the economy has returned
exactly to its initial value. Nevertheless, the equilibrium price reached by continuation need
not be the price from which we started: it can be another equilibrium price of the same
economy. The permutation of equilibrium prices produced by such a closed path is what we call
\emph{real monodromy}. Throughout the experiment both the economic primitives and the
equilibrium prices remain real.

Throughout the paper, $L$ denotes the number of commodities. Our main result shows that, for
every $L\ge3$, nontrivial real monodromy can occur in a finite pure-exchange economy while
\emph{all preferences remain fixed}. We construct a closed one-parameter family in which only
three individual endowments vary, their sum remains constant, and aggregate resources
therefore never change. Along the loop we track three regular equilibrium-price branches in a
compact region of strictly positive normalized prices. One returns to itself, while the other
two are exchanged. At the end of the experiment every preference and every endowment has
returned exactly to its initial value, yet an equilibrium price followed continuously around
the loop may have become another equilibrium price of that same economy. The construction
uses at most $2L+2$ consumers.

We also obtain a second exact realization, using at most $2L$ consumers, in which all
endowments are fixed and only one consumer's preference varies. The two families have
identical aggregate excess demand at every strictly positive price. The one-varying-preference
family is an intermediate constructive result; the fixed-preference redistribution family is
the stronger economic formulation. The monodromy is structurally stable under sufficiently
small $C^1$ perturbations of the reduced aggregate excess-demand system. The commodity
threshold is sharp: with one or two goods nontrivial real monodromy is impossible, whereas it
exists for every $L\ge3$.

The topological mechanism itself is classical: continuation around a loop can permute branches
of solutions, and such configurations can persist under small perturbations. The contribution
of the paper is its economic realization. We show that this geometry can occur in a finite
Walrasian pure-exchange economy and, more strongly, with fixed preferences, constant aggregate
resources, and parameter variation confined to the redistribution of endowments among three
consumers.

The result identifies a limitation of \emph{global comparative statics under multiplicity}.
Regularity is a local property: for a sufficiently small change in fundamentals, the implicit
function theorem identifies the unique nearby equilibrium that continues a given regular one.
Monodromy shows that concatenating such local continuations need not produce a globally
path-independent identity. Thus three logically distinct properties should not be conflated:
\[
\text{local regularity},\qquad
\text{global identifiability of a branch},\qquad
\text{uniqueness of equilibrium}.
\]
An economy may have several regular equilibria, each locally traceable, while two of the
resulting branches fail to admit globally consistent labels.

This distinction matters for equilibrium selection and for continuation-based counterfactuals.
If one selects an equilibrium by ``following it continuously from a benchmark,'' two parameter
paths connecting the same benchmark economy to the same target fundamentals can lead to
different equilibrium prices. The path, or an additional selection criterion, may therefore be
needed to identify the endpoint. Likewise, when a computational counterfactual is defined as
the continuation of a benchmark branch, reproducibility may require reporting the parameter
path as well as the initial and final economies. This issue is distinct from computing the
entire equilibrium set, where no single branch need be selected as the continuation of the
benchmark.

The transposition is not merely a relabelling of the same economic outcome. In our exact
construction the two exchanged prices imply different demands for a consumer whose own
preferences and endowment remain fixed along the loop. Hence the monodromy has allocative
content, although no welfare ranking between the two equilibria is implied.

Our construction belongs to the aggregate-demand realization program initiated by
Sonnenschein and developed by Mantel and Debreu. Sonnenschein established the broad freedom of
aggregate excess demand subject to the restrictions inherited from Walras' law
\cite{sonnenschein72,sonnenschein73}. Mantel supplied a constructive exact realization for a
large differentiable class, using at most two consumers per commodity \cite{mantel74}, while
Debreu obtained exact realization in the general continuous case with $L$ consumers
\cite{debreu74}.

For the present problem, the essential distinction is between \emph{pointwise} and
\emph{parametric} realization. At each fixed parameter value, Debreu's theorem already implies
that the formal excess-demand field used in our construction can be generated by an actual
pure-exchange economy on the relevant compact interior price region. But
\[
\text{for every parameter value there exists some realizing economy}
\]
does not imply
\[
\text{there exists one continuous family of realizing economies}.
\]
Monodromy requires the latter: the economic primitives themselves must form a continuous
closed loop.

We address this gap by adapting Mantel's constructive method to the parameterized setting and
controlling where the parameter enters individual primitives. For our special formal field, all
parameter dependence is concentrated in a single consumer's preference, while the aggregate
residual is independent of the parameter. Debreu's theorem is therefore used only once to
realize this fixed residual, producing a block of consumers that remains unchanged throughout
the loop. No pointwise selection of Debreu realizations is required as the parameter varies.

The final step is specific to our construction. The varying consumer's demand depends
affinely on two real parameter coefficients and is linear in wealth. This allows that consumer
to be replaced exactly by three consumers with fixed preferences, with all parameter variation
transferred to the shares in which a fixed total endowment is distributed among them. The
resulting fixed-preference family has exactly the same aggregate excess demand as the
one-varying-preference family.

Our analysis is formulated at the level of parameterized Walrasian excess-demand systems. The
monodromy mechanism therefore does not rely on a particular global parameterization of economies:
it requires only a regular family of equilibrium equations on the price region being followed.
This makes the construction portable across different economic realizations of the same local
equilibrium geometry. In the main result we realize it within pure exchange, first through
preference variation and then, more strongly, through endowment redistribution with fixed
preferences and constant aggregate resources. The prescribed-price construction shows further
that the same monodromy geometry can be built around arbitrary triples of positive equilibrium-
price rays, including configurations taken from economic models with different primitives.

The equilibrium-manifold literature of Balasko \cite{balasko88,balasko09} and Debreu's analysis
of regular economies \cite{debreu70,mascolell85} provide an important point of comparison. In
the fixed-preference endowment-redistribution setting, Loi and Matta \cite{loimatta10} use the
arc-lifting property of regular economies to show that, once an initial supporting equilibrium
price is fixed, a regular redistribution policy admits a unique continuous lift of equilibrium
prices. Our result concerns the complementary global issue: uniqueness of the lift along each
given path does not imply that the equilibrium reached by continuation is independent of the
path. Since our construction controls only a restricted covering over a compact price region,
we make no claim about monodromy of the full equilibrium covering.

The continuation interpretation also connects the paper to homotopy methods in computational
general equilibrium \cite{eaves99}, parameter homotopy methods \cite{hauenstein20}, and
monodromy-based methods in numerical algebraic geometry \cite{leykin09,duff19}. Hauenstein and
Regan develop a real monodromy action for real solution sets \cite{hr20real}; in our restricted
regular-equilibrium setting, the covering property yields an ordinary monodromy representation
into a permutation group. The two-good examples of Toda and Walsh are used to separate
equilibrium multiplicity from real monodromy \cite{todawalsh17}. The equilibrium fibres
computed by Kehoe and by Gauthier--Kehoe--Quintin are used only as price-and-index templates
for newly constructed pure-exchange families \cite{kehoe85num,gkq22}; no monodromy claim is
made about the source economies themselves.

The result is a possibility, robustness, and sharpness theorem, not a claim that monodromy is
generic or empirically prevalent. Nor does it rule out every global continuous selection: the
fixed branch in our construction is one. What fails is a global continuous identification of
either member of the mobile pair. The equilibrium structure is controlled on a compact
positive-price region; we do not claim that the three displayed branches exhaust the complete
equilibrium set or that equilibria outside that region are everywhere regular. The consumer
counts $2L$ and $2L+2$ are constructive upper bounds.

Finally, continuation here is a geometric operation on the equilibrium correspondence, not an
adjustment dynamics. It is not a t\^atonnement and does not determine which equilibrium an
actual decentralized market will select; in particular, monodromy alone does not imply dynamic
hysteresis or irreversibility. Its implication is structural: \emph{the identity supplied to a
regular equilibrium by local continuation need not extend to a globally path-independent
identity}.

The particular three prices used in the basic construction are also inessential. We show that
the same geometry can be realized with any prescribed triple of distinct positive price rays.
This permits the initial fibre to be chosen, for example, from equilibrium-price configurations
appearing in the general-equilibrium literature. The resulting monodromic family is a new
economy: it is not obtained by deforming the source economy and need not preserve its
preferences, endowments, allocations, production structure, stability properties, or other
equilibria.

Section~\ref{sec:prelim} introduces aggregate excess demand, normalized equilibrium prices,
regularity, restricted equilibrium-price coverings, and monodromy. Section~\ref{sec:formal}
constructs the formal Walrasian family with the required monodromy. Section~\ref{sec:exact}
explains why pointwise realization is not sufficient, and Sections~\ref{sec:realization}--
\ref{sec:endowment} build the continuous economic realizations: first with one varying
preference, then in arbitrary commodity dimension, and finally with fixed preferences and
varying endowments. Section~\ref{sec:main} states the main economic theorem.

Sections~\ref{sec:robust}--\ref{sec:economic} establish structural stability, prove the sharp
two-versus-three-good threshold, develop the prescribed-price construction, and discuss the
implications for comparative statics, equilibrium selection, and continuation algorithms.
Section~\ref{sec:conclusion} concludes. The appendices contain the detailed analytic proof
of the special realization, examples from the literature, and additional two-good examples
illustrating multiplicity without monodromy.

\section{Preliminaries}\label{sec:prelim}

This section introduces the objects needed to describe the global continuation of
Walrasian equilibrium prices. We first summarize the aggregate excess-demand setting and
normalize prices, so that an equilibrium price ray becomes an ordinary point in an
$(L-1)$-dimensional space. We then define regular equilibria and the finite coverings formed
by regular equilibrium branches in a parameterized family. Real monodromy is the
permutation obtained when those branches are followed around a closed loop of fundamentals.

Throughout the paper, $L$ denotes the number of commodities. The preliminary definitions
apply for $L\ge2$; the constructive monodromy results will require $L\ge3$. Prices lie in
$\R^L_{++}$ and consumption bundles in $\R^L_+$. We write $\one=(1,\ldots,1)$,
$\mathbf e_1,\ldots,\mathbf e_L$ for the coordinate unit vectors, and
$\widehat q=\diag(q_1,\ldots,q_L)$.

\begin{definition}[Exchange economy]\label{def:Meconomy}
For a fixed finite $L$, an \emph{exchange economy} consists of finitely many consumers,
each specified by the consumption set $X=\R^L_+$, a complete, transitive, continuous,
monotone and convex preference relation on $X$, and an endowment $\omega_i\in X$. The
aggregate endowment is strictly positive. At prices $p\gg0$, consumer $i$ has the compact
budget set $B(p,p\cdot\omega_i)=\{x\in X:p\cdot x\le p\cdot\omega_i\}$. Demand is the set
of maximal bundles; when a utility representation has a unique maximizer, demand is
identified with that bundle.
\end{definition}

As usual in pure exchange, the economy can be summarized at the aggregate level by its
excess-demand function. An \emph{aggregate excess-demand field} is a map
$Z\colon\R^L_{++}\to\R^L$ satisfying Walras' law $p\cdot Z(p)=0$ and homogeneity of degree
zero, $Z(\mu p)=Z(p)$ for $\mu>0$. We call such a field \emph{formal} when it is specified
directly, before showing that it is generated by utility-maximizing consumers with given
preferences and endowments.

Because only relative prices matter, one price can be used as numeraire. We normalize by
$p_L=1$ and set
\[
x_j=\frac{p_j}{p_L}-1\quad(j=1,\ldots,L-1),\qquad
u=x_1,\quad v=x_2,\quad \tau=(x_3,\ldots,x_{L-1}).
\]
The shift by one simply places the reference price vector $(1,\ldots,1)$ at the origin of
the normalized coordinates. A zero of the normalized reduced excess-demand system is a
\emph{normalized Walrasian equilibrium price}; it represents the corresponding positive
price ray. Unless a price-allocation equilibrium is explicitly mentioned, the terms
\emph{equilibrium}, \emph{equilibrium branch}, and \emph{equilibrium fibre} refer to these
normalized prices and their rays.

For a reduced system $F\colon\R^{L-1}\to\R^{L-1}$, a zero $x^*$ is \emph{regular} if
$D_xF(x^*)$ is invertible. Economically, regularity is what makes an equilibrium locally
traceable: the implicit function theorem gives a unique nearby equilibrium-price branch
for small changes in the parameters. The central question below is whether those locally
unique continuations always fit together into a globally consistent label. The reduced
equilibrium index is
\[
\ind(x^*)=\sgn\det\bigl(-D_xF(x^*)\bigr).
\]
We record the index because it will distinguish the fixed branch from the two mobile
branches in the construction. All regularity and index statements concern the reduced
system. The full derivative $D_pZ$ is singular because homogeneity implies
$D_pZ(p)p=0$.

We also record two elementary identities for a scalar hom-$0$ function
$f\colon\R^L_{++}\to\R$; they are used later in the realization argument. Differentiation
gives
\begin{equation}\label{eq:euler0}
q\cdot\nabla f(q)=0,
\end{equation}
and
\begin{equation}\label{eq:homgrad}
\nabla f(tq)=t^{-1}\nabla f(q),\qquad D^2f(tq)=t^{-2}D^2f(q).
\end{equation}

We now pass from one economy to a parameterized family. Think of $b\in B$ as describing the
fundamentals of the economy, and let $K$ be a fixed compact region of normalized positive
prices in which we track equilibria. Let $B$ be connected, locally path-connected and
locally compact Hausdorff, and let $F\colon U\times B\to\R^{L-1}$ be continuous, continuously
differentiable in $x\in U$, with $D_xF$ jointly continuous. Define
\[
E_K(F)=\{(x,b)\in K\times B:F(x,b)=0\},\qquad \pi_K(x,b)=b.
\]

\begin{definition}[Restricted regular equilibrium-price covering]\label{def:restrictedcover}
The family $F$ is a \emph{restricted regular equilibrium-price covering over $K$} if
\begin{enumerate}[label=\textup{(\roman*)},leftmargin=2.2em]
\item $F(x,b)\ne0$ for every $(x,b)\in\partial K\times B$;
\item every zero in $K\times B$ is regular; and
\item the projection $\pi_K\colon E_K(F)\to B$ is surjective.
\end{enumerate}
Under these conditions $\pi_K$ is a finite covering. We call its fibre the
\emph{restricted equilibrium-price fibre}.
\end{definition}

Thus ``restricted'' means that we track only equilibria inside $K$; ``regular'' means that
none of them becomes singular; and ``covering'' means that locally in the parameter they
move as a finite collection of separate continuous branches. Indeed, the parameterized
implicit function theorem makes $\pi_K$ a local homeomorphism. For every compact $C\subset
B$, the set $E_K(F)\cap(K\times C)$ is compact; hence $\pi_K$ is proper. A proper local
homeomorphism has finite fibres and is a covering. Since $B$ is connected, its degree is
constant.

Once equilibria form a covering, the global question has a precise formulation. Start from
one equilibrium price, move the economy around a closed path of fundamentals, and follow
that equilibrium continuously. When the economy returns to its initial fundamentals, the
continued price belongs again to the initial equilibrium fibre, but it need not be the
price from which we started.

\begin{definition}[Real monodromy of regular equilibrium prices]\label{def:monodromy}
Let $\pi\colon E\to B$ be a finite covering and let
$\gamma\colon[0,1]\to B$ be a loop based at $b_0$. For each
$e\in\pi^{-1}(b_0)$, path lifting gives a unique lift
$\widetilde\gamma_e$ with $\widetilde\gamma_e(0)=e$. Its endpoint belongs again to the
initial fibre. The permutation
\[
\rho([\gamma])(e)=\widetilde\gamma_e(1)
\]
is the \emph{real monodromy} along $\gamma$.
\end{definition}

Path lifting defines the monodromy representation
$\rho\colon\pi_1(B,b_0)\to\Sym(\pi^{-1}(b_0))$. A covering is connected if and only if
its monodromy acts transitively on a fibre \cite[Ch.~1.3]{hatcher02}.

\section{A formal model with real monodromy}\label{sec:formal}

This section isolates the geometry we later want to realize economically. We do not yet
construct preferences or endowments. Instead, we design a formal normalized price system
with exactly three regular equilibria: one remains fixed, while two are exchanged when the
parameter travels around a closed loop. We first build the mechanism in the two independent
relative-price coordinates available with three commodities, then embed it in every
$L\ge3$, and finally complete it to a formal Walrasian excess-demand field.

\subsection{A planar model with two exchanging equilibria}

With three commodities, normalization leaves two independent relative-price coordinates.
We combine them into the complex coordinate $z=u+iv$. Complex notation is used only as a
convenient representation of this two-dimensional real price space; no complex-valued
economic primitive is being introduced. The parameter $\lambda$ likewise represents two
real coordinates of the formal family. At this stage it is only a formal parameter; its
economic realization comes later.

Fix $c>0$ and identify the parameter plane with $\C$. For $\lambda\in\C$ define
\[
S(\lambda)=\lambda\bigl(|\lambda|^2-1\bigr)
=\sigma_1(\lambda)+i\sigma_2(\lambda).
\]
Let
\[
A=\{\lambda\in\C:1+\delta\le |\lambda|^2\le R^2\},
\qquad \delta>0,\ R>1.
\]
We use an annulus because it contains closed paths that wind around a hole. That winding
is what allows a locally defined square root to change sign after one complete circuit. On
$A$, $S$ never vanishes and $\pi_1(A)\cong\Z$.

We want exactly three zeros. Two should be the two square roots of a parameter-dependent
complex number, because those roots exchange when the parameter winds once around the
origin; the third should remain fixed. This motivates the planar generating map
\begin{equation}\label{eq:Phi}
\Phi(z;\lambda)=\bigl(z^2-S(\lambda)\bigr)(\bar z-c),\qquad z=u+iv.
\end{equation}
Here $\Phi\colon\R^2\to\R^2$ is real analytic and is written in complex notation; it is
not holomorphic. The conjugation has no economic interpretation: it is a mathematical
device chosen so that the fixed root has the opposite local index from the two mobile
roots. The zeros are the unordered mobile pair $\pm\sqrt{S(\lambda)}$ and the fixed root
$c$.

To speak unambiguously of three branches over the whole parameter region, we must ensure
that the roots never collide. We therefore choose the constants so that the mobile pair
remains separated from itself and from the fixed root uniformly over the annulus.

\begin{lemma}[Uniform separation]\label{lem:sep}
Suppose
$s_{\max}=R(R^2-1)=\sup_A|S|<c^2$ and put
$s_{\min}=\sqrt{1+\delta}\,\delta=\inf_A|S|>0$. Then the three roots are uniformly
distinct on $A$ and
\[
|z_+-z_-|\ge2\sqrt{s_{\min}},\qquad
\min_{\pm}|z_{\pm}-c|\ge c-\sqrt{s_{\max}}>0.
\]
\end{lemma}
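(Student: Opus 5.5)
The plan is to reduce everything to the modulus of $S$ on the annulus, which is a one-variable computation. First I will compute $\sup_A|S|$ and $\inf_A|S|$. For $\lambda\in A$, write $r=|\lambda|\in[\sqrt{1+\delta},R]$; then $|S(\lambda)|=r(r^2-1)=:g(r)$. Since $g'(r)=3r^2-1>0$ for $r\ge1$, $g$ is strictly increasing on the radial range. Hence $\inf_A|S|=g(\sqrt{1+\delta})=\sqrt{1+\delta}\,\delta$ and $\sup_A|S|=g(R)=R(R^2-1)$, both attained on the boundary circles. In particular $s_{\min}>0$ because $\delta>0$, so $S$ never vanishes on $A$.

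Next I will treat the mobile pair. The zeros of $z^2-S(\lambda)$ are $z_\pm=\pm w$ with $w^2=S(\lambda)$, so $|w|=\sqrt{|S(\lambda)|}$. Therefore
\[
|z_+-z_-|=2|w|=2\sqrt{|S(\lambda)|}\ge2\sqrt{s_{\min}}>0 .
\]
This bound is independent of the choice of square root, so it holds for the unordered pair at every $\lambda\in A$, with no branch selection needed.

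For separation from the fixed root I will use the reverse triangle inequality with $c>0$ real:
\[
|z_\pm-c|\ge c-|z_\pm|=c-\sqrt{|S(\lambda)|}\ge c-\sqrt{s_{\max}}.
\]
This is positive because $s_{\max}<c^2$ and $c>0$ give $\sqrt{s_{\max}}<c$. Combining the two bounds shows that the three roots $z_+,z_-,c$ are pairwise distinct, with gaps bounded below uniformly in $\lambda\in A$.

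There is no real obstacle in this argument. The only point needing care is the monotonicity of $g$ on $[\sqrt{1+\delta},R]$, which identifies the infimum and supremum exactly as stated. It depends on $r>1$; for $r$ below $1/\sqrt3$ the map $g$ would not be monotone, but the annulus excludes that range.
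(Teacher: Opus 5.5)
Your proposal is correct and follows the paper's proof: monotonicity of $r\mapsto r(r^2-1)$ for $r>1$ identifies $\inf_A|S|$ and $\sup_A|S|$, and then $|z_\pm|=\sqrt{|S|}$ together with the reverse triangle inequality gives both separation bounds. Your closing aside about $r<1/\sqrt3$ is loosely phrased ($g$ is decreasing there, and $|S|\neq g$ once $r<1$), but it plays no role, since $r\ge\sqrt{1+\delta}>1$ on $A$.
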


\begin{proof}
$|S|=|\lambda|(|\lambda|^2-1)$ is increasing for $|\lambda|>1$. The stated bounds follow
from $|z_{\pm}|=\sqrt{|S|}$ and the reverse triangle inequality.
\end{proof}

\begin{remark}[Standing constants]\label{rem:numbers}
Choose $\delta,R,c,\rho$ and a transverse radius $\eta$ such that
\[
1+\delta<R^2,\qquad \sqrt{s_{\max}}<c<\rho<1,\qquad 0<\eta<1.
\]
For the numerical illustrations we use
\[
\delta=0.01,\qquad R^2=1.05,\qquad c=\tfrac12,\qquad
\rho=0.70.
\]
The auxiliary cutoff radii introduced later are chosen as
$\rho_1=0.72$ and $\rho_2=0.80$.
\end{remark}

The complex notation has now served its geometric purpose. To connect the construction with
normalized price equations and later with excess demand, we rewrite it as two real
equations in the two independent relative-price coordinates.

Write $z^2-S=\mathcal A+i\mathcal B$, where
$\mathcal A=u^2-v^2-\sigma_1$ and $\mathcal B=2uv-\sigma_2$. Expanding
\eqref{eq:Phi} gives $\Phi=F_1+iF_2$, with
\begin{align}
F_{1}(u,v;\lambda)&=u^{3}-cu^{2}+uv^{2}+cv^{2}-\sigma_{1}u+c\sigma_{1}-\sigma_{2}v,
\label{eq:F1}\\
F_{2}(u,v;\lambda)&=u^{2}v-2cuv+v^{3}+\sigma_{1}v-\sigma_{2}u+c\sigma_{2}.
\label{eq:F2}
\end{align}

The square-root mechanism already gives the monodromy. Along a generator of the annulus,
$S(\lambda)$ winds once around the origin, so a chosen square root acquires a minus sign.
The two mobile branches therefore exchange while the fixed branch remains fixed.

\begin{lemma}[Branch swap]\label{lem:swap}
Let $\gamma(t)=(r\cos t,r\sin t)$, $0\le t\le2\pi$, be a generator of $\pi_1(A)$. Then
continuation along $\gamma$ interchanges the two mobile zeros
$\pm\sqrt{S(\lambda)}$ of the planar system $(F_1,F_2)$ and fixes the third zero $c$.
\end{lemma}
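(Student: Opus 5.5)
We argue directly from the explicit form of the zeros of $(F_1,F_2)=\Phi$. Since $\Phi(z;\lambda)=(z^2-S(\lambda))(\bar z-c)$, a point $z$ is a zero exactly when $z^2=S(\lambda)$ or $\bar z=c$. Because $c$ is real, the second condition gives $z=c$. By Lemma~\ref{lem:sep}, for every $\lambda$ on the circle $|\lambda|=r$ (with $1+\delta\le r^2\le R^2$, so $\gamma$ lies in $A$) the three zeros are pairwise distinct. We first record that they are regular; this is needed so that continuation along $\gamma$ is the unique continuous lift supplied by the implicit function theorem.

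\textbf{Regularity.} Write $\Phi=gh$ with $g=z^2-S$ and $h=\bar z-c$. At a zero of $g$, the real derivative is $D\Phi=h\cdot Dg$, where $Dg$ is multiplication by $2z$ and $h\ne0$. Multiplication by a nonzero complex number is invertible, so $D\Phi$ is invertible. At $z=c$, the real derivative is $D\Phi=g(c)\cdot Dh$, where $Dh$ is conjugation and $g(c)=c^2-S\ne0$ since $|S|\le s_{\max}<c^2$. Conjugation composed with a nonzero complex multiplication is again invertible. Hence all three zeros are regular, and local continuation is well defined and unique.

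\textbf{Explicit lifts.} Along $\gamma$, the constant path $t\mapsto c$ is a continuous zero curve, so by uniqueness of lifts the fixed root returns to itself. For the mobile pair, compute
\[
S(\gamma(t))=re^{it}(r^2-1)=s\,e^{it},\qquad s=r(r^2-1)>0.
\]
The curves $z_\pm(t)=\pm\sqrt{s}\,e^{it/2}$ are continuous in $t$ and satisfy $z_\pm(t)^2=S(\gamma(t))$. They are therefore the lifts of $\gamma$ starting at $\pm\sqrt{s}$, again by uniqueness of path lifting. At $t=2\pi$ we obtain
\[
z_\pm(2\pi)=\pm\sqrt{s}\,e^{i\pi}=\mp\sqrt{s},
\]
so the two mobile zeros are interchanged.

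Identifying the normalized price plane with $\C$ via $z=u+iv$, this is exactly the continuation statement for $(F_1,F_2)$. The only mild obstacle is justifying that these explicit curves are \emph{the} continuations. That step relies on the regularity established above together with the uniform separation in Lemma~\ref{lem:sep}, which ensures no branch switching through collisions along $\gamma$.
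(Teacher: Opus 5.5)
Your proof is correct and follows the same route as the paper: along $\gamma$ one has $S(\gamma(t))=r(r^2-1)e^{it}$, so the square-root lift $\pm\sqrt{s}\,e^{it/2}$ picks up the factor $e^{i\pi}=-1$ after one circuit, while the root $c$ stays constant. You additionally spell out the regularity of the three zeros and the uniqueness of continuous zero curves, which the paper leaves implicit here (it verifies regularity separately in Lemma~\ref{lem:index}).
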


\begin{proof}
Along $\gamma$, $S(\gamma(t))=r(r^2-1)e^{it}$. A square-root lift acquires the factor
$e^{i\pi}=-1$ after one circuit, while the root $c$ is constant.
\end{proof}

\subsection{Stabilization to arbitrary commodity dimension}

The monodromy mechanism uses only two independent relative-price coordinates, hence three
commodities. For larger commodity spaces we do not need a new topological construction. We
embed the same two-dimensional mechanism and force every additional normalized price
coordinate to remain at its reference value.

For $L\ge3$ define the reduced field
\begin{equation}\label{eq:FL}
F^{[L]}(u,v,\tau;\lambda)=
\bigl(F_1(u,v;\lambda),F_2(u,v;\lambda),-\tau\bigr)\in\R^{L-1},
\end{equation}
where the last block is absent when $L=3$. Its zeros are exactly
\[
(\sqrt{S(\lambda)},0),\quad(-\sqrt{S(\lambda)},0),\quad(c,0),
\]
with the complex number recording $(u,v)$. Since the transverse coordinates are
identically zero along all three branches, Lemma~\ref{lem:swap} immediately gives the same
branch transposition for $F^{[L]}$ in every dimension $L\ge3$.

We next verify the two local properties that will matter later: all three branches remain
regular in every commodity dimension, and the two branches that are exchanged have the
same local index while the fixed branch has the opposite index.

\begin{lemma}[Indices in every dimension]\label{lem:index}
For every $L\ge3$, the three zeros of $F^{[L]}$ are regular and have indices
$+1,+1,-1$, respectively.
\end{lemma}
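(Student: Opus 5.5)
The Jacobian of $F^{[L]}$ is block diagonal: the planar block $D_{(u,v)}(F_1,F_2)$ and the block $-I_{L-3}$ coming from the transverse coordinates $\tau$ (absent when $L=3$). The plan is therefore to compute the planar determinants via complex calculus and then account for the transverse block. First, $-D_xF^{[L]}$ is block diagonal with blocks $-D_{(u,v)}(F_1,F_2)$ and $+I_{L-3}$. Hence
\[
\sgn\det\bigl(-D_xF^{[L]}\bigr)=\sgn\det\bigl(-D_{(u,v)}(F_1,F_2)\bigr)=\sgn\det D_{(u,v)}(F_1,F_2),
\]
where the last equality holds because the planar block is $2\times2$. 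Regularity and the index thus both reduce to the sign of the real Jacobian determinant of $\Phi$ at each zero. This reduction is independent of $L$.

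For a real-analytic map $\Phi\colon\C\to\C$, the real Jacobian determinant equals $|\partial_z\Phi|^2-|\partial_{\bar z}\Phi|^2$. I will compute the Wirtinger derivatives of $\Phi=(z^2-S)(\bar z-c)$:
\[
\partial_z\Phi=2z(\bar z-c),\qquad \partial_{\bar z}\Phi=z^2-S.
\]

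At a mobile zero $z_\pm=\pm\sqrt{S(\lambda)}$ we have $\partial_{\bar z}\Phi=0$, so the determinant is $4|z_\pm|^2|z_\pm-c|^2$. Lemma~\ref{lem:sep} gives $|z_\pm|^2=|S|\ge s_{\min}>0$ and $|z_\pm-c|\ge c-\sqrt{s_{\max}}>0$, so the determinant is strictly positive and the index is $+1$.

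At the fixed zero $z=c$ we have $\partial_z\Phi=2c(c-c)=0$. The determinant is therefore $-|c^2-S(\lambda)|^2$, which is strictly negative because $|S|\le s_{\max}<c^2$. The index is $-1$.

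In each case the determinant is nonzero, so every zero is regular, and the bounds from Lemma~\ref{lem:sep} hold uniformly on $A$. I do not expect a real obstacle. The only points needing care are the Wirtinger identity for the Jacobian, which can be checked directly from \eqref{eq:F1}--\eqref{eq:F2} if preferred, and the sign bookkeeping: the minus sign in $\ind$ is harmless in the even-dimensional planar block, and the transverse block $-\tau$ contributes $\det(+I)=1$.
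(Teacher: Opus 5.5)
Your proof is correct and takes essentially the paper's route. The block-diagonal reduction to the $2\times2$ planar block, with the transverse block contributing $\det(+I_{L-3})=1$, is identical. Your Wirtinger formula $|\partial_z\Phi|^2-|\partial_{\bar z}\Phi|^2$ is a uniform way of stating the paper's observation: the planar differential is multiplication by $2\zeta(\bar\zeta-c)$ at the mobile roots, and conjugation followed by multiplication by $c^2-S$ at $z=c$. You also make explicit the appeal to Lemma~\ref{lem:sep} for nonvanishing, which the paper leaves implicit.
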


\begin{proof}
At a mobile root $\zeta^2=S$, the planar differential is multiplication by
$2\zeta(\bar\zeta-c)\ne0$ and therefore has positive real determinant. At $z=c$ it is
multiplication by $c^2-S$ followed by conjugation and therefore has negative determinant.
Since the planar block has dimension two,
$\det(-D F_{\mathrm{pl}})=\det(D F_{\mathrm{pl}})$. The transverse derivative is
$-I_{L-3}$, hence
\[
-D F^{[L]}=
\begin{pmatrix}-D F_{\mathrm{pl}}&0\\0&I_{L-3}\end{pmatrix}.
\]
Thus the added coordinates preserve both regularity and the sign of
$\det(-D F)$.
\end{proof}

\begin{remark}[Index bookkeeping and conditional fibre minimality]\label{rem:conjugation}
The factor $\bar z-c$ reverses orientation at the fixed branch and gives it index $-1$;
the two mobile branches have index $+1$, so the restricted index sum is $+1$. The
transverse equation is $-\tau=0$, rather than $+\tau=0$, because its block in $-DF^{[L]}$
is $+I_{L-3}$ and introduces no parity-dependent sign. Had the last factor in
\eqref{eq:Phi} been $z-c$, all three indices would be positive and the restricted sum
would be $+3$, requiring equilibria of total index $-2$ outside the target region whenever
the global index theorem applies \cite{dierker72}. We do not invoke that theorem for the
restricted construction: the realization controls a compact target cone, not the entire
price simplex.

The index is locally constant on the regular equilibrium locus, so monodromy can permute
only sheets having the same index. This does not force three sheets for an arbitrary
restricted covering, because additional equilibria may lie outside the target region. If,
however, the restricted fibre is the complete finite regular equilibrium fibre, the global
index sum is $+1$ \cite{dierker72}. A nontrivial permutation then requires at least two
equilibria of the same index, while a two-point regular fibre has index sum in
$\{-2,0,2\}$, never $+1$. Under this additional global-completeness hypothesis, three is
the minimal fibre cardinality and the pattern $+1,+1,-1$ is minimal.
\end{remark}

\subsection{The Walrasian formal field and restricted covering}

So far we have only an abstract system of equations with the desired zero set. We now turn
it into an $L$-component field satisfying the two aggregate restrictions of Walrasian excess
demand: homogeneity of degree zero and Walras' law. This still does not mean that the field
is generated by actual consumers; microfoundation begins in Section~\ref{sec:exact}.

For $p\in\R^L_{++}$ let $x_j=p_j/p_L-1$ and define
\begin{equation}\label{eq:Z}
Z^{[L]}_j(p;\lambda)=F^{[L]}_j(x(p);\lambda)\quad(j=1,\ldots,L-1),\qquad
Z^{[L]}_L(p;\lambda)=-\sum_{j=1}^{L-1}\frac{p_j}{p_L}Z^{[L]}_j(p;\lambda).
\end{equation}

\begin{lemma}[Structural properties]\label{lem:struct}
$Z^{[L]}$ is $C^{\infty}$, homogeneous of degree zero, and satisfies Walras' law. On the
slice $p_L=1$, its equilibria are exactly the three zeros of $F^{[L]}$.
\end{lemma}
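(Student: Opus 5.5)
The plan is to verify the three claims directly from the definition \eqref{eq:Z}, treating the first $L-1$ components and the last one separately.

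\emph{Smoothness.} On $\R^L_{++}$ the map $p\mapsto x(p)$, with $x_j=p_j/p_L-1$, is $C^\infty$ because $p_L>0$. The components $F_1,F_2$ in \eqref{eq:F1}--\eqref{eq:F2} are polynomials in $(u,v)$. Their coefficients are $\sigma_1,\sigma_2$, which are fixed for fixed $\lambda$; in fact they are polynomials in the real coordinates of $\lambda$, since $S(\lambda)=\lambda(|\lambda|^2-1)$. The transverse block $-\tau$ is linear. Hence $Z^{[L]}_j$, $j\le L-1$, is a composition of smooth maps, jointly smooth in $(p,\lambda)$. The last component $Z^{[L]}_L$ is a finite sum of products of the smooth functions $p_j/p_L$ and $Z^{[L]}_j$, so it is smooth as well.

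\emph{Homogeneity.} For $\mu>0$ we have $x(\mu p)=x(p)$, because each ratio $p_j/p_L$ is invariant under scaling. So the first $L-1$ components are unchanged. The same invariance applies to the coefficients $p_j/p_L$ appearing in $Z^{[L]}_L$, so $Z^{[L]}_L(\mu p)=Z^{[L]}_L(p)$.

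\emph{Walras' law.} This is a one-line computation:
\[
p\cdot Z^{[L]}=\sum_{j<L}p_jZ^{[L]}_j+p_L\Bigl(-\sum_{j<L}\tfrac{p_j}{p_L}Z^{[L]}_j\Bigr)=0.
\]

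\emph{Equilibria on the slice $p_L=1$.} I would argue both inclusions.
\begin{itemize}
\item If $Z^{[L]}(p;\lambda)=0$, then in particular the first $L-1$ components vanish, so $F^{[L]}(x(p);\lambda)=0$.
\item Conversely, if $F^{[L]}(x;\lambda)=0$, then the first $L-1$ components of $Z^{[L]}$ at $p=(x+\one,1)$ vanish. Then $Z^{[L]}_L$ vanishes too, being a linear combination of them.
\end{itemize}
The correspondence $x\leftrightarrow p=(x_1+1,\dots,x_{L-1}+1,1)$ is a bijection between $\{x_j>-1\}$ and the slice $\{p\gg0,\ p_L=1\}$. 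So equilibria on the slice correspond exactly to zeros of $F^{[L]}$ with $x_j>-1$.

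The only step needing care is this last one: the three zeros listed after \eqref{eq:FL} must actually lie in the admissible region $x_j>-1$, so that they correspond to strictly positive prices. This is where Lemma~\ref{lem:sep} and Remark~\ref{rem:numbers} come in:
\begin{itemize}
\item The fixed root is $(c,0,0)$ with $0<c<1$, so its coordinates exceed $-1$.
\item The mobile roots satisfy $|u|,|v|\le|\pm\sqrt{S}|\le\sqrt{s_{\max}}<c<1$, so $u,v>-1$.
\item The transverse coordinates are $\tau=0$.
\end{itemize}
Hence all three zeros of $F^{[L]}$ correspond to equilibria in $\R^L_{++}$. No other zeros exist, because the zeros of $F^{[L]}$ are exactly those three. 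This completes the identification.
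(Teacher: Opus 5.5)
Your proof is correct and follows the same direct verification as the paper. Scale-invariance of the normalized coordinates gives homogeneity, the Walrasian completion of the last component gives $p\cdot Z^{[L]}=0$, and vanishing of the first $L-1$ components forces the last one to vanish. You also check that the three zeros satisfy $x_j>-1$ (via $\sqrt{s_{\max}}<c<1$, hence for $\lambda\in A$), so that they correspond to strictly positive prices; this is a worthwhile detail that the paper's one-line proof leaves implicit.
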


\begin{proof}
The normalized coordinates are hom-$0$, and the final component in \eqref{eq:Z} gives
$p\cdot Z^{[L]}=0$. If the first $L-1$ components vanish, Walras' law forces the last one
to vanish.
\end{proof}

We now isolate a compact normalized price region containing exactly the three branches.
The boundary condition below ensures that, as the parameter varies, no equilibrium can
enter or leave the controlled region through its boundary.

Set
\[
K_L=\{(u,v,\tau):u^2+v^2\le\rho^2,\ \|\tau\|_{\infty}\le\eta\},
\]
with the transverse condition omitted for $L=3$, and define
\[
\eco_{K_L}=\{(x,\lambda)\in K_L\times A:F^{[L]}(x;\lambda)=0\},\qquad
\pi_{K_L}\colon\eco_{K_L}\to A.
\]

\begin{lemma}[Boundary nonvanishing and covering]\label{lem:cover}
No zero of $F^{[L]}$ lies on $\partial K_L$. Consequently
$\pi_{K_L}$ is a three-sheeted covering for every $L\ge3$.
\end{lemma}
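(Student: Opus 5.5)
The plan is to verify the three conditions of Definition~\ref{def:restrictedcover} directly, since the zero set of $F^{[L]}$ is known explicitly. Along the preceding discussion of \eqref{eq:FL}, the zeros of $F^{[L]}(\cdot;\lambda)$ in all of $\R^{L-1}$ are exactly $(\pm\sqrt{S(\lambda)},0)$ and $(c,0)$. The reason is that the last block forces $\tau=0$, and $\Phi=F_1+iF_2$ vanishes iff $z^2=S(\lambda)$ or $\bar z=c$. Boundary nonvanishing therefore reduces to locating these three points strictly inside $K_L$. I would check the two parts of $\partial K_L$ separately: the piece where $u^2+v^2=\rho^2$, and (for $L\ge4$) the piece where $\|\tau\|_\infty=\eta$.

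For the first piece I would use Lemma~\ref{lem:sep} and Remark~\ref{rem:numbers}. The mobile roots satisfy $|z_\pm|=\sqrt{|S(\lambda)|}\le\sqrt{s_{\max}}<c<\rho$, and the fixed root has $|c|=c<\rho$. So every zero satisfies $u^2+v^2<\rho^2$. For the transverse piece, every zero has $\tau=0$, whence $\|\tau\|_\infty=0<\eta$. Hence no zero lies on $\partial K_L$, and in fact all three lie in the interior for every $\lambda\in A$, uniformly.

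For the covering statement, regularity of every zero in $K_L\times A$ is Lemma~\ref{lem:index}, applied at each $\lambda\in A$. The nondegeneracy there uses $\zeta\ne0$ (because $|S|\ge s_{\min}>0$ on $A$), $\bar\zeta\ne c$, and $c^2\ne S$; all three are guaranteed by Lemma~\ref{lem:sep}. Surjectivity of $\pi_{K_L}$ holds because, for instance, the fixed zero $(c,0)$ lies in $K_L$ over every $\lambda$. The parameter space $A$ is a closed annulus, so it is connected, locally path-connected and locally compact Hausdorff. $F^{[L]}$ is polynomial in $(x,\sigma_1,\sigma_2)$, and $\sigma_j$ are polynomial in $\lambda$, so the smoothness hypotheses hold. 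The general argument recorded after Definition~\ref{def:restrictedcover} (proper local homeomorphism) then makes $\pi_{K_L}$ a finite covering.

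The remaining point is the number of sheets. Over each $\lambda$ the fibre consists of exactly the three points listed, and they are pairwise distinct by Lemma~\ref{lem:sep}: $|z_+-z_-|\ge2\sqrt{s_{\min}}>0$ and $|z_\pm-c|\ge c-\sqrt{s_{\max}}>0$. So the fibre cardinality is $3$. The only step needing genuine care is the strict inequality $\sqrt{s_{\max}}<\rho$ ensuring the mobile roots stay off the circle $|z|=\rho$. This is a standing choice of constants rather than a real obstacle; numerically $s_{\max}=\sqrt{1.05}\cdot0.05\approx0.051$, so $\sqrt{s_{\max}}\approx0.23<0.5<0.7$.
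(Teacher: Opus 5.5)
Your proof is correct. For boundary nonvanishing you do the same thing as the paper. The paper uses the lower bound $|\Phi(z;\lambda)|\ge(\rho^2-s_{\max})(\rho-c)>0$ on $|z|=\rho$ and notes that $-\tau\ne0$ on a transverse face. This is the functional counterpart of your explicit location of the three zeros strictly inside $K_L$, and both rest on $s_{\max}<c^2<\rho^2$.

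You genuinely differ from the paper in how you deduce the covering.
\begin{itemize}
\item \emph{Your route.} You verify conditions (i)--(iii) of Definition~\ref{def:restrictedcover} and take regularity from Lemma~\ref{lem:index}. You then use the general fact that a proper local homeomorphism (obtained from the parameterized implicit function theorem) is a finite covering, and read off the degree from the fibre count.
\item \emph{The paper's route.} It trivializes the covering directly. Over any simply connected relatively open $V\subset A$, the nonvanishing function $S$ has a continuous square root $w$. The three graphs $\lambda\mapsto(\pm w(\lambda),0)$ and $\lambda\mapsto(c,0)$ are disjoint by Lemma~\ref{lem:sep}, and together they exhaust $\pi_{K_L}^{-1}(V)$.
\end{itemize}

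The paper's argument is more elementary: it needs neither regularity, nor properness, nor the implicit function theorem. It also produces exactly the local sections that Lemma~\ref{lem:swap} and Proposition~\ref{prop:structure} continue around the loop.

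Your argument uses more machinery, but it certifies along the way that $F^{[L]}$ is a restricted regular equilibrium-price covering in the sense of the definition. It is also the argument that still works when the zero set is not explicit, as in the homotopy proof of Proposition~\ref{prop:robust}.
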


\begin{proof}
On the planar boundary $|z|=\rho$,
$|\Phi(z;\lambda)|\ge(\rho^2-s_{\max})(\rho-c)>0$. On a transverse boundary face some
coordinate of $-\tau$ is nonzero. The three zeros lie in the interior. Over a simply
connected subset of $A$, a branch of $\sqrt S$ gives three disjoint local sections.
\end{proof}

Geometrically, the restricted equilibrium set therefore consists of one globally
identifiable equilibrium and a pair that are locally distinguishable but globally
intertwined: going once around the annulus reverses the identities of the mobile pair.

\begin{proposition}[Structure of the equilibrium locus]\label{prop:structure}
For every $L\ge3$, $\eco_{K_L}$ is the disjoint union of a trivial one-sheeted covering
formed by the fixed equilibrium and a connected two-sheeted covering formed by the mobile
pair. Hence the mobile equilibria have no continuous global labelling over $A$.
\end{proposition}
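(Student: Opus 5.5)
The plan is to describe $\eco_{K_L}$ explicitly as a union of two closed subsets and then use the monodromy criterion for connectedness of a covering. By Lemma~\ref{lem:struct} and the description of the zeros of $F^{[L]}$, together with Lemma~\ref{lem:cover} (all zeros lie in the interior of $K_L$), we have
\[
\eco_{K_L}=E_0\cup E_m,\qquad E_0=\{((c,0,0),\lambda):\lambda\in A\},\qquad
E_m=\{((\zeta,0),\lambda):\lambda\in A,\ \zeta^2=S(\lambda)\},
\]
where $\zeta$ is read as the point $(\re\zeta,\im\zeta)$ of the $(u,v)$-plane.

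The first step is disjointness and separation of the two pieces. By Lemma~\ref{lem:sep}, the mobile roots satisfy $|\zeta-c|\ge c-\sqrt{s_{\max}}>0$ uniformly on $A$. Hence $E_0$ and $E_m$ are disjoint and at positive distance from each other. Each is closed: $E_0$ is a graph of a constant map, and $E_m$ is the zero set of the continuous map $(z^2-S(\lambda),\tau)$ restricted to $\eco_{K_L}$. So each is open and closed in $\eco_{K_L}$, and the restriction of $\pi_{K_L}$ to each piece is again a covering, since a union of sheets over an evenly covered set is evenly covered.

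The second step handles the fixed piece. The restriction $\pi_{K_L}|_{E_0}$ is a homeomorphism onto $A$, with inverse $\lambda\mapsto((c,0,0),\lambda)$. It is therefore a trivial one-sheeted covering.

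The third step, the only substantive one, is connectedness of the mobile piece. Over simply connected subsets of $A$, a branch of $\sqrt S$ gives two disjoint sections, so $E_m\to A$ is a two-sheeted covering. By Lemma~\ref{lem:swap}, lifting the generator $\gamma$ interchanges $\pm\sqrt{S(\gamma(0))}$. Thus the monodromy action of $\pi_1(A,\gamma(0))\cong\Z$ on the two-point fibre is transitive. By the criterion quoted after Definition~\ref{def:monodromy} (a covering is connected iff monodromy acts transitively on a fibre), $E_m$ is connected. Here $A$ is a closed annulus, which is connected, locally path-connected and locally compact, so the criterion applies. One checks that $\gamma$ lies in $A$ by taking $r^2\in[1+\delta,R^2]$.

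Finally, suppose there were a continuous global labelling, that is, a continuous section $s\colon A\to E_m$. Its image would be open, since a section of a covering is a local sheet, and closed, since it is a section of a Hausdorff covering. It would also be a proper nonempty subset of $E_m$, because it misses the other point of each fibre. This contradicts connectedness of $E_m$. Equivalently, $s\circ\gamma$ would be a lift of $\gamma$ returning to its starting point, which contradicts Lemma~\ref{lem:swap}.

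The main obstacle is only the bookkeeping that the pieces are clopen and that the restrictions remain coverings. The uniform separation estimate of Lemma~\ref{lem:sep} is what makes this immediate.
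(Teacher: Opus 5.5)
Your proposal is correct and follows essentially the same route as the paper: the explicit zero set together with the uniform separation of Lemma~\ref{lem:sep} gives the decomposition into the fixed sheet and the mobile pair, and the transposition of Lemma~\ref{lem:swap} makes the mobile double cover connected and hence without a section. Your version adds details the paper leaves implicit: that the two pieces are clopen, that each restriction is still a covering, and the clopen-image argument ruling out a continuous section.
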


\begin{proof}
The decomposition follows from the explicit zero set and uniform separation. The generator
acts on the mobile fibre by a transposition (Lemma~\ref{lem:swap}), so the double cover is
connected and admits no section.
\end{proof}

\begin{theorem}[Formal monodromy in every dimension]\label{thm:formal}
For every $L\ge3$, the formal family $Z^{[L]}$ has over $A$ a three-sheeted restricted
equilibrium covering with three regular zeros of indices $+1,+1,-1$. Its monodromy along a
generator is
\[
\rho_{K_L}([\gamma])=(q_+\ q_-),\qquad q_0\text{ fixed},
\]
and the image of the monodromy representation is isomorphic to $\Z/2\Z$.
\end{theorem}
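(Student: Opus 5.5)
The theorem assembles results already proved for $F^{[L]}$, transferred to the Walrasian field $Z^{[L]}$ through the normalization $p_L=1$. My first step is to use Lemma~\ref{lem:struct} to identify the equilibria of $Z^{[L]}(\cdot;\lambda)$ in the normalized chart with the zeros of $F^{[L]}(\cdot;\lambda)$. On this slice the reduced system attached to $Z^{[L]}$ is literally $F^{[L]}$, since $Z^{[L]}_j=F^{[L]}_j\circ x$ for $j\le L-1$. Hence the restricted equilibrium set of $Z^{[L]}$ over $K_L\times A$ is exactly $\eco_{K_L}$. Regularity and indices are statements about the reduced system, so they need no separate treatment.

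Next I verify the three conditions of Definition~\ref{def:restrictedcover} with $B=A$ and $K=K_L$. Condition (i) is the boundary nonvanishing in Lemma~\ref{lem:cover}. Condition (ii) together with the indices $+1,+1,-1$ is Lemma~\ref{lem:index}. Condition (iii) holds because the fixed zero $(c,0)$ lies in $K_L$ for every $\lambda$, as does each mobile zero, since $|\sqrt{S}|\le\sqrt{s_{\max}}<c<\rho$. The annulus $A$ is connected, locally path-connected and locally compact Hausdorff, so the general argument of Section~\ref{sec:prelim} applies. Lemma~\ref{lem:cover} and Lemma~\ref{lem:sep} then give a three-sheeted covering, with the fibre over $b_0$ equal to $\{q_+,q_-,q_0\}$, where $q_\pm=(\pm\sqrt{S(b_0)},0)$ and $q_0=(c,0)$.

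For the monodromy, I take $\gamma$ to be the generator in Lemma~\ref{lem:swap}, with $r^2\in[1+\delta,R^2]$. Along $\gamma$ the continuous lifts are explicit: $t\mapsto(\pm r^{1/2}(r^2-1)^{1/2}e^{it/2},0)$ and the constant lift $(c,0)$. By uniqueness of path lifting these are the lifts of Definition~\ref{def:monodromy}, and evaluating at $t=2\pi$ gives $\rho_{K_L}([\gamma])=(q_+\ q_-)$ with $q_0$ fixed. Because $\pi_1(A,b_0)\cong\Z$ is generated by $[\gamma]$ and $\rho$ is a homomorphism, the image is generated by a transposition. That image is therefore $\{\mathrm{id},(q_+\ q_-)\}\cong\Z/2\Z$, which agrees with the decomposition in Proposition~\ref{prop:structure}.

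The only real care needed is the base point. If $b_0$ is not on the chosen circle, I either conjugate by a path from $b_0$ to the circle or note that the image of the monodromy representation does not depend on the base point up to isomorphism. I also must make sure the explicit square-root lifts really stay inside $K_L$ and never collide, which Lemma~\ref{lem:sep} guarantees uniformly.
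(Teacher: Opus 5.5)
Your proposal is correct and follows essentially the same route as the paper, whose proof simply combines Lemmas~\ref{lem:index}, \ref{lem:swap}, and~\ref{lem:cover} with Proposition~\ref{prop:structure}. You spell out details the paper leaves implicit: the transfer from $Z^{[L]}$ to $F^{[L]}$ via Lemma~\ref{lem:struct}, the check of the conditions in Definition~\ref{def:restrictedcover}, the explicit square-root lifts, and the base-point remark.
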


\begin{proof}
Combine Lemmas~\ref{lem:index},~\ref{lem:swap}, and~\ref{lem:cover} with
Proposition~\ref{prop:structure}.
\end{proof}

\begin{remark}[A concrete initial fibre]\label{rem:prices}
With the numerical constants of Remark~\ref{rem:numbers} and
$\lambda_0=\sqrt{1.03}\in\R$, one has
$S(\lambda_0)=\sqrt{1.03}\,(0.03)\approx0.030447$. On the normalization $p_L=1$, the
three equilibrium prices are
\[
q_+\approx(1.17449,1,\ldots,1),\qquad
q_-\approx(0.82551,1,\ldots,1),\qquad
q_0=(1.5,1,\ldots,1).
\]
Continuation around the standard generator exchanges $q_+$ and $q_-$ and fixes $q_0$.
\end{remark}

\section{Why pointwise realization is not enough}\label{sec:exact}

The previous section produced a continuous family of formal aggregate excess-demand fields
with nontrivial monodromy. Classical aggregate-realization results already imply that, for
each fixed parameter value, the corresponding field can be generated by a pure-exchange
economy. This is not yet enough for our purpose. Choosing one realizing economy separately
at each parameter value does not imply that those economies can be chosen continuously as
the parameter moves.

Throughout the paper, an \emph{exact realization} of a formal aggregate excess-demand field
on a specified price domain means a pure-exchange economy whose aggregate excess-demand
function coincides with that field on the stated domain; no equality outside that domain is
implied unless explicitly stated. Thus realization refers to the construction of preferences
and endowments that microfound the prescribed aggregate field. A \emph{pointwise realization}
establishes this identity separately at each parameter value, whereas a \emph{parametric
realization} constructs a single parameterized family of primitives satisfying the identity
with the stated continuity in the parameter.

Pointwise realization answers the question ``does some economy generate this aggregate
field at this parameter value?'' Monodromy instead requires a single path of economic
primitives: the entire path of aggregate fields must be generated coherently by one
continuous family of economies. Thus there is no pointwise obstacle; every snapshot can
already be microfounded. The remaining difficulty is parametric.

The following theorem isolates exactly what the classical realization result gives us---and
what it does not.

\begin{theorem}[Pointwise exact aggregate realization]\label{thm:aggregate}
Fix $L\ge3$ and $\lambda\in A$. There exists an $L$-consumer pure-exchange economy whose
aggregate excess demand equals $Z^{[L]}(\,\cdot\,;\lambda)$ on the positive-price cone generated by $K_L$.
Consequently, at that parameter, its restricted equilibrium fibre, reduced Jacobians, and
local indices coincide with those of the formal field. This pointwise statement does not
select the realizing economies continuously in $\lambda$ and therefore does not by itself
produce an economic monodromy theorem.
\end{theorem}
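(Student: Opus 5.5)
The plan is to invoke Debreu's exact realization theorem \cite{debreu74} at the fixed parameter $\lambda$. Debreu's theorem states the following. Let $f$ be a continuous function on $S_\epsilon=\{p\in\R^L_{++}:\|p\|=1,\ p_j\ge\epsilon\ \forall j\}$ satisfying Walras' law. Then $f$ coincides on $S_\epsilon$ with the aggregate excess demand of some economy with $L$ consumers having continuous, monotone, strictly convex preferences. Two steps therefore need checking. First, the positive-price cone generated by $K_L$ must sit inside such an $\epsilon$-region. Second, $Z^{[L]}(\cdot;\lambda)$, restricted there after the appropriate renormalization, must be a continuous Walras-law field.

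For the first step, I will show that the cone generated by $K_L$ (on the slice $p_L=1$) meets the sphere in a compact subset of $S_\epsilon$ for some $\epsilon>0$. On $K_L$ we have $p_j=1+x_j$. Here $|x_1|,|x_2|\le\rho<1$ and $|x_j|\le\eta<1$ for $j\ge3$, so every coordinate lies in $[1-\max(\rho,\eta),\,2]$. After Euclidean normalization, each coordinate is therefore bounded below by
\[
\epsilon=\frac{1-\max(\rho,\eta)}{2\sqrt L}>0 .
\]
Hence the relevant unit-sphere slice is contained in $S_\epsilon$. By homogeneity of degree zero (Lemma~\ref{lem:struct}), it suffices to prescribe the field on $S_\epsilon$.

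For the second step, note that $Z^{[L]}(\cdot;\lambda)$ is $C^\infty$ and satisfies $p\cdot Z=0$ by Lemma~\ref{lem:struct}. Its restriction to $S_\epsilon$ is therefore an admissible input for Debreu's theorem. Debreu's theorem then yields an $L$-consumer economy whose aggregate excess demand equals $Z^{[L]}(\cdot;\lambda)$ on $S_\epsilon$. By homogeneity on both sides, the equality extends to the whole cone over $S_\epsilon$, and this cone contains the cone over $K_L$. I expect the main care to be needed here, in matching Debreu's normalization (the unit sphere, or the simplex in some statements) with our $p_L=1$ normalization. Homogeneity of degree zero of both the individual demands and $Z^{[L]}$ handles this translation. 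One further check is that the aggregate endowment Debreu produces is strictly positive, as Definition~\ref{def:Meconomy} requires; I will confirm that his construction gives each consumer a strictly positive endowment.

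The consequences then follow from local agreement. Two functions that coincide on an open cone containing the cone over $K_L$ have the same zeros there. Their reduced systems $x\mapsto Z_{1..L-1}(1+x_1,\dots,1+x_{L-1},1)$ coincide on a neighbourhood of $K_L$, so their derivatives agree at interior points. Because Lemma~\ref{lem:cover} places every zero in the interior of $K_L$, the restricted fibre, the reduced Jacobians and the indices $\sgn\det(-D_xF)$ all agree with those in Lemma~\ref{lem:index}. To get agreement on an open set, I would apply Debreu with a slightly smaller $\epsilon$, so that the realized region strictly contains $K_L$. The final sentence of the theorem needs no proof beyond a remark: Debreu's construction is existential at each $\lambda$. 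Nothing in it guarantees that the preferences and endowments can be chosen to vary continuously with $\lambda$, or even measurably, so no closed loop of economies is produced.
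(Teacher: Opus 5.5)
Your proposal is correct and follows the paper's own argument. You apply Debreu's compact-interior realization theorem once at the fixed $\lambda$, extend the identity to the cone by degree-zero homogeneity, transfer the restricted zero set, reduced Jacobians and indices by exact agreement, and note the pointwise (non-continuous-in-$\lambda$) nature as a remark. Your extra precautions make explicit points the paper leaves implicit: you shrink $\epsilon$ so that agreement holds on an open neighbourhood of the cone over $K_L$, and you check the endowments, where Definition~\ref{def:Meconomy} only requires the aggregate endowment, not each individual one, to be strictly positive.
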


\begin{proof}
The field $Z^{[L]}(\,\cdot\,;\lambda)$ is continuous, hom-$0$, and Walrasian. Its normalized
price region is compactly contained in the interior of the simplex. Debreu's exact
realization theorem \cite{debreu74}, in the compact-interior form stated as
Theorem~\ref{thm:debreu}, gives $L$ consumers whose excess demands sum to this field on a
truncated simplex containing the image of the target cone. Homogeneity extends the identity
to the cone. Exact equality transfers all restricted zero-set data. The theorem is pointwise: no continuity of the selected consumers in $\lambda$ is asserted.
\end{proof}

What is missing is therefore not realization, but continuous parametric realization.

\section{Parametric realization in dimension three}\label{sec:realization}

The previous section identified the missing step: we need to realize the entire formal
family by one continuous family of economies. We now do this explicitly in the minimal
three-commodity case. The construction keeps all endowments fixed and all but one
preference fixed. Only one consumer changes with the parameter, and on the target price
region the resulting aggregate excess demand agrees exactly with the formal field.

The proof has two conceptual stages. First, we decompose the aggregate field into scalar
potentials in such a way that all parameter dependence is concentrated in one potential.
Second, we adapt Mantel's constructive realization to turn those potentials into individual
demands. The explicit block leaves a parameter-independent aggregate residual; Debreu's
theorem is then applied once to cancel that fixed residual. Thus no pointwise selection of
Debreu economies is made along the parameter path.

In this section only, write $Z=Z^{[3]}$, $K=K_3$, and normalize by $p_3=1$.

\subsection{Technical domains and standing constants}

The construction needs some room around the region on which exact realization is required.
We therefore place the controlled equilibrium region inside a larger target region, and
that target region inside a still larger compact region where auxiliary potentials can be
modified without approaching the boundary of the positive price simplex. The same idea is
used in parameter space. The cutoff below leaves the construction unchanged on the target
region and modifies it only outside.

\begin{assumption}[Standing data]\label{ass:data}
Fix once and for all:
\begin{enumerate}[label=\textup{(D\arabic*)},leftmargin=2.6em]
\item radii $0<\rho<\rho_{1}<\rho_{2}<1$;
\item the closed disks $\overline D_{r}=\{(u,v):u^{2}+v^{2}\le r^{2}\}\subset\R^{2}$ and the
cones
\[
\Gamma_{r}\deq\bigl\{p\in\R^{3}_{++}:\ (u(p),v(p))\in\overline D_{r}\bigr\},\qquad r\in\{\rho,\rho_{1},\rho_{2}\};
\]
the \emph{target cone} is $\Gamma\deq\Gamma_{\rho_{1}}$, and the price region $K$ of Section~\ref{sec:formal} is recovered as $\Gamma_{\rho}\cap\{p_{3}=1\}$, so that the cone over $K$ is contained
in the interior of $\Gamma$;
\item two \emph{compact} sets $\Lambda,\Lambda_{1}\subset\R^{2}$ and the \emph{open} set
$W\deq\operatorname{int}\Lambda_{1}$, with $\Lambda\subset W$ (in the application,
$\Lambda$ is a compact neighborhood of the annulus $A$ and $\Lambda_{1}$ a slightly larger
one). All constants below are taken uniform over $\Lambda_{1}$; all
smoothness-in-$\lambda$ statements hold on the open set $W\supset\Lambda$. (If $\Lambda$
itself has empty interior this costs nothing: the data are polynomial in $\lambda$, so any
enlargement is available.)
\item a cutoff $\chi\in C^{\infty}(\R^{2},[0,1])$ with
\[
\chi\equiv1\ \text{on an open set containing }\overline D_{\rho_{1}},\qquad
\supp\chi\subset \overline D_{\rho_{2}}.
\]

\end{enumerate}
Since $\rho_{2}<1$, the normalized prices on $\supp\chi$ remain in a compact subset of
$\R^{3}_{++}$.
\end{assumption}

\subsection{Continuity of the economic family}

We work with the exchange-economy class of Definition~\ref{def:Meconomy}. To turn the
pointwise statement of Section~\ref{sec:exact} into a genuine parameterized family, we must
also specify what it means for the single varying preference to depend continuously on the
parameter. The topology below is used only for that represented preference family; the
fixed-preference endowment realization constructed later lives in an ordinary
finite-dimensional Euclidean endowment space.

\begin{definition}[Topology for represented preference families]\label{def:reptopology}
A \emph{represented economy} records the chosen utility representatives for the explicitly
constructed block, the preference relations of the residual block, and all endowments. We
equip $C(\R^L_+)$ with the compact-open topology, the endowment space with its Euclidean
topology, and hold the residual preferences fixed. A parameterized represented family is
continuous when the chosen utility representatives and endowments vary continuously in
this product topology. For the purposes of the construction, no separate
quotient topology on ordinal preference relations is required.
\end{definition}

\begin{remark}[Role of utility representatives]\label{rem:representatives}
The equilibrium correspondence and its monodromy are ordinal objects: they depend
on the consumers' preference relations, not on a particular cardinal normalization
of utility. The chosen utility representatives are used only to formulate a concrete
and sufficiently strong topology for the explicitly constructed family. Along the
closed loop, the varying representative returns exactly to its initial value, and
therefore so does the associated preference relation. All other preference relations
and all endowments remain fixed throughout the loop. Thus the use of represented
economies does not alter the economic content of the monodromy statement.
\end{remark}

\begin{remark}[The two-level regularity convention]\label{rem:whyclass}
Throughout this document a single convention is maintained \emph{without exception}: global
properties on the closed orthant are the \emph{weak} ones (continuity, monotonicity,
quasiconcavity), while all \emph{strict} and \emph{smooth} properties (joint $C^{\infty}$
regularity, strong monotonicity, strict quasiconcavity) are asserted only on the open
orthant $\R^{3}_{++}$, where---as will be proved---the demand always lies at strictly
positive prices and positive wealth. Concretely, the varying-block utilities are built as
smooth functions $u^{i}_{\lambda}$ on the open orthant (Section~\ref{sec:direct}) and then
extended to $X$ as $U^{i}_{\lambda}=e^{u^{i}_{\lambda}}$, prolonged by $0$ on the boundary
(Proposition~\ref{prop:extension}); the extension changes nothing in the interior analysis,
because on $\R^{3}_{++}$ maximizing $U^{i}_{\lambda}$ and maximizing $u^{i}_{\lambda}$ are
the same problem.
\end{remark}

\subsection{Main three-good realization theorem}

The result can be summarized before its technical formulation: the three-good formal family
is realized exactly by six consumers; all endowments are fixed, five consumers have fixed
preferences, and only consumer $3$ varies continuously with the parameter.

\begin{theorem}[Explicit three-good parametric Mantel realization]\label{thm:parametric}
Under Assumption~\ref{ass:data} there exist a constant $k>0$, a cutoff $\chi$,
utilities $U^{1},U^{2}$, preference relations $\succeq^{4},\succeq^{5},\succeq^{6}$, and
endowments $\omega^{1},\dots,\omega^{6}$---all chosen once and for all, independently
of $\lambda$---together with a family of utilities $U^{3}_{\lambda}$, such that for each
$\lambda\in\Lambda$ the list
\[
\eco_{\lambda}=\bigl(U^{1},\,U^{2},\,U^{3}_{\lambda},\,
\succeq^{4},\,\succeq^{5},\,\succeq^{6};\
\omega^{1},\dots,\omega^{6}\bigr)
\]
is a six-consumer exchange economy in the standard class of
Definition~\ref{def:Meconomy}, with the following properties.
\begin{enumerate}[label=\textup{(\alph*)},leftmargin=2.2em]
\item \textbf{Varying block (homotheticity and two-level regularity).} For
$i=1,2,3$, the utility $U^{i}_{\lambda}\colon\R^{3}_{+}\to\R$ is nonnegative,
continuous, monotone, concave, and positively homogeneous of degree one on the closed
orthant, with $U^{i}_{\lambda}=0$ exactly on the boundary; for $i=3$ the map
$(x,\lambda)\mapsto U^{3}_{\lambda}(x)$ is jointly continuous on
$\R^{3}_{+}\times\Lambda$. The restriction of $U^{i}_{\lambda}$ to the open orthant is
$C^{\infty}$ and strongly monotone and strictly quasiconcave. Its logarithm
$u^{i}_{\lambda}=\log U^{i}_{\lambda}$ is $C^{\infty}$, strongly monotone, and strictly
concave; for $i=3$, $(x,\lambda)\mapsto u^{3}_{\lambda}(x)$ is jointly $C^{\infty}$ on
$\R^{3}_{++}\times W$. No strict property is asserted on boundary faces; those faces are never reached by the
varying block's demand at strictly positive prices and positive wealth. The endowments are
$\omega^{i}=k\,\mathbf e_i\in X$, so each endowment belongs to its consumer's consumption set,
and already
\[
\sum_{i=1}^{3}\omega^{i}=k\,\one\gg0 ,
\]
so the aggregate-resources condition of Definition~\ref{def:Meconomy} holds independently
of the residual block's endowments.
\item \textbf{Residual block.} Consumers $4,5,6$ are specified by $\lambda$-independent
preference relations $\succeq^{4},\succeq^{5},\succeq^{6}$ on $X=\R^{3}_{+}$ that are
continuous, monotone (in the sense of Debreu's construction \cite{debreu74}) and strictly convex, with
$\lambda$-independent endowments; they are supplied by Debreu's theorem
(Theorem~\ref{thm:debreu} below) applied once, to a single $\lambda$-free field, and they
belong to the class of Definition~\ref{def:Meconomy} natively. (Continuous utility
representations exist by Debreu's representation theorem, but are not needed.)
\item \textbf{Exact realization on the target region.} For every $\lambda\in\Lambda$ and
every $p\in\Gamma$,
\[
\sum_{i=1}^{6} z^{i}_{\lambda}(p)\;=\;Z(p;\lambda),
\]
where $z^{i}_{\lambda}(p)=x^{i}_{\lambda}(p)-\omega^{i}$ and $x^{i}_{\lambda}(p)$ is
consumer $i$'s demand at prices $p$ and wealth $p\cdot\omega^{i}$. In particular the
identity holds on the cone over $K$. The identity is asserted \emph{only} on $\Gamma$: off
the open set where the cutoff $\chi$ is identically $1$, the auxiliary gauge components are
modified and the aggregate need not equal $Z$.
\item \textbf{Parametric regularity; only one consumer varies.} The individual demands of
the varying block, $(p,m,\lambda)\mapsto x^{i}_{\lambda}(p,m)$ ($i=1,2,3$), are
$C^{\infty}$ jointly for $p\gg0$, $m>0$; all mixed derivatives
$\partial^{\beta}_{\lambda}\partial^{\alpha}_{x}u^{3}_{\lambda}(x)$ exist and are jointly
continuous, uniformly on compact subsets of $\R^{3}_{++}\times\Lambda$ (equivalently, by
the exponential law for spaces of smooth maps \cite{km97}, $\lambda\mapsto u^{3}_{\lambda}$ is a
smooth map from $W$ into $C^{\infty}(\R^{3}_{++})$ with the compact-open $C^{\infty}$
topology), and $(x,\lambda)\mapsto U^{3}_{\lambda}(x)$ is jointly continuous on the closed
orthant. All the remaining data---$U^{1}$, $U^{2}$, the relations
$\succeq^{4},\succeq^{5},\succeq^{6}$, the endowments $\omega^{1},\dots,\omega^{6}$, the
constant $k$, and the cutoff $\chi$---are fixed once and do not depend on $\lambda$: the
parameter enters the economy through consumer $3$'s preferences alone.
\end{enumerate}
\end{theorem}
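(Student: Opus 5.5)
The plan is to adapt Mantel's constructive scheme. Its basic identity is this: a homothetic consumer with utility $U$ (positively homogeneous of degree one, concave) and endowment $k\mathbf e_i$ has excess demand
\[
z^i(p)=k\,p_i\,\nabla e_i(p)/e_i(p)\cdot(\dots)-k\mathbf e_i ,
\]
where $e_i$ is the unit expenditure function. Equivalently, by Roy's identity, the demand is $(p\cdot\omega^i)\,\nabla\log e(p)$ for the dual function $e$. So I will work entirely on the dual side. I will first decompose the formal field $Z(\cdot;\lambda)$ on the cone $\Gamma_{\rho_2}$ into scalar hom-$0$ potentials $f_1,f_2,f_3$, with all $\lambda$-dependence placed in $f_3$. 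Concretely, I will write
\[
Z_j(p;\lambda)=\frac{k\,p_j}{\,\cdot\,}\partial_j(\ldots)+R_j(p),
\]
where the $\lambda$-dependent part of $Z$ is absorbed into the demand of consumer $3$ alone. This is possible because $\lambda$ enters $F_1,F_2$ only through the affine terms $-\sigma_1u+c\sigma_1-\sigma_2v$ and $\sigma_1v-\sigma_2u+c\sigma_2$. Those terms are linear in $(\sigma_1,\sigma_2)$, and each can be written as $p_3$-wealth times a gradient of a $\lambda$-affine hom-$0$ function of the form $\sigma_1 g_1(p)+\sigma_2 g_2(p)$. The remaining $\lambda$-free part of the explicit block, plus the Walras-law compensation, is collected into a residual $R(p)$. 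The residual is hom-$0$, Walrasian, and independent of $\lambda$.

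Next I will define the varying-block expenditure functions
\[
e^i_\lambda(p)=\exp\bigl(\chi(x(p))\,\epsilon\,h^i_\lambda(p)\bigr)\cdot\ell_i(p),
\]
with $\ell_i$ a fixed smooth linearly homogeneous concave base function such as a CES function. The perturbation $\epsilon h^i_\lambda$ is hom-$0$ and supported in the cone $\Gamma_{\rho_2}$ through the cutoff $\chi$. I will take $k$ large, or equivalently $\epsilon$ small, with uniform bounds over the compact sets $\Lambda_1$ and $\supp\chi$, using the scaling identities \eqref{eq:euler0}--\eqref{eq:homgrad}. This keeps each $e^i_\lambda$ strictly concave transversally to rays and increasing. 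Duality then gives the utility
\[
U^i_\lambda(x)=\inf_{p\gg0}\,\frac{p\cdot x}{e^i_\lambda(p)} ,
\]
which is homogeneous of degree one, concave, continuous, and vanishes exactly on the boundary. On the interior it is $C^\infty$, strongly monotone and strictly quasiconcave, and its demands are jointly smooth in $(p,m,\lambda)$ by the implicit function theorem applied to the dual. Joint continuity in $\lambda$ on the closed orthant follows from uniform convergence of the infimum. Smooth dependence on $\lambda\in W$ holds because $h^3_\lambda$ is polynomial in $\lambda$. This step yields (a) and (d). Then I apply Theorem~\ref{thm:debreu} exactly once, to the fixed field $Z(p;\lambda)-\sum_{i\le3}z^i_\lambda(p)$ on $\Gamma$. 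By construction this field is $\lambda$-independent on $\Gamma$, where $\chi\equiv1$. The theorem supplies consumers $4,5,6$ with fixed preferences and endowments, which gives (b). Summing the block and residual consumers then gives (c) on $\Gamma$.

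The main obstacle is the central step. I must verify simultaneously that (i) the required $\lambda$-dependent excess demand is exactly of the Roy form $k p_3\nabla\log e^3_\lambda$ minus a $\lambda$-free term, and (ii) the resulting $e^3_\lambda$ remains concave uniformly in $\lambda\in\Lambda_1$. For (i) I will first try to choose $h^3_\lambda=\sigma_1(\lambda)g_1+\sigma_2(\lambda)g_2$. Here $g_1,g_2$ are hom-$0$ potentials whose gradients reproduce the vector fields $\partial_{\sigma_i}F$ after multiplication by the wealth $kp_3$. Any non-gradient remainder will be shifted to consumers $1,2$, using the two extra fixed potentials as in Mantel's two-consumers-per-good scheme; the non-gradient remainder is $\lambda$-free, so this is possible. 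For (ii) I will exploit the fact that $|\sigma_i|\le s_{\max}$ is bounded on $\Lambda_1$. Taking $\epsilon$ small, by rescaling $k$ and correspondingly rescaling the decomposition, makes the Hessian perturbation dominated by the strictly negative transversal Hessian of $\ell_3$ on the compact cutoff region.
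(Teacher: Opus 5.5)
Your outline is correct, and its analytic engine is the paper's in dual notation. With the Cobb--Douglas base $\ell_i=(p_1p_2p_3)^{1/3}$ and $\epsilon h^i_\lambda=-\varphi^i_\lambda/k$, your $\log e^i_\lambda$ is exactly $-\psi^i_\lambda$ of Definition~\ref{def:gauge}. Likewise $\inf_p p\cdot x/e(p)$ is the exponential of Mantel's formula in Lemma~\ref{lem:equivalence}. The cutoff, the uniform $k$, the endowments $k\mathbf e_i$ and the single Debreu step all coincide with the paper's.

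The genuine difference is the decomposition. The paper solves the full identity \eqref{eq:poteq}, so it must cancel the $\lambda$-free curl $\partial_uF_2-\partial_vF_1=-4cv$ through consumer~1's potential $g_1=-2cv^2$. In return, the Debreu block only has to absorb Mantel's universal residual $-kR$, which does not depend on the target field; it reappears as the background $H$ in Section~\ref{sec:robust}.

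You instead place only the $\sigma$-affine part in consumer~3 and let Debreu absorb the whole $\lambda$-free remainder $Z(\cdot;\lambda_0)-\sum_{i\le3}z^i_{\lambda_0}$. This needs only that $\partial_{\sigma_1}(F_1,F_2)=(c-u,v)$ and $\partial_{\sigma_2}(F_1,F_2)=(-v,c-u)$ be curl-free. They are: their potentials are $cu-\tfrac12u^2+\tfrac12v^2$ and $cv-uv$, precisely the $\sigma$-terms of \eqref{eq:g3}. Your route buys several simplifications. The quartic potential and the $g_1$ correction become unnecessary. Your ``shift to consumers 1,2'' step is redundant, since they can be plain Cobb--Douglas. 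Consumer~3 also stays affine in $\sigma$, as Proposition~\ref{prop:affinedemand} later requires.

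Two steps you compress are where the paper does real work.
First, the implicit function theorem gives smoothness of $U$ only near bundles whose infimum is attained at an interior price. Global smoothness and strict quasiconcavity on $\R^3_{++}$ need surjectivity of $p\mapsto\nabla\log e(p)$ onto $\R^3_{++}$. The paper obtains this in Proposition~\ref{prop:diffeo} from the logarithmic blow-up at the boundary.
Second, ``vanishes exactly on the boundary'' comes from the two-sided Cobb--Douglas bound of Proposition~\ref{prop:boundary}. So the base cannot be an arbitrary CES: with elasticity of substitution above one, the dual utility is positive on boundary faces.
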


Throughout, ``demand'' means the unique maximizer of the utility on the compact budget set
$B(p,m)=\{x\in\R^{3}_{+}:p\cdot x\le m\}$; existence, interiority and uniqueness for the
varying block are part of what is proved (Theorem~\ref{thm:rationalize} and
Lemma~\ref{lem:localization}).

\subsection{Concentrating the parameter in one consumer}\label{sec:potential}

Mantel's construction turns suitable scalar functions of prices into individual demands.
To realize an aggregate field with several consumers, we therefore seek scalar potentials
whose wealth-weighted gradients add up to the prescribed excess demand. Walras' law makes
such a decomposition easy to obtain. What is not automatic is the parameter structure: if
several potentials depend on the parameter, several consumers will vary. Our goal is to
choose the decomposition so that exactly one potential carries all parameter dependence.
Each potential will later become one consumer, so this concentration is precisely what
produces a one-varying-preference economy.

\begin{lemma}[Walras' law as a canonical weighted-gradient identity]\label{lem:walraspotential}
Let $Z\colon\R^L_{++}\to\R^L$ be $C^1$, homogeneous of degree zero, and Walrasian. Then
\[
\sum_{i=1}^L p_i\nabla Z_i(p)=-Z(p).
\]
Consequently the canonical choice $\varphi^i=Z_i$ solves the Mantel-type weighted-gradient
equation $\sum_i p_i\nabla\varphi^i=-Z$ that underlies Mantel's construction.
\end{lemma}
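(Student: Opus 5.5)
The plan is to differentiate Walras' law and then use Euler's identity for hom-$0$ functions. Walras' law states that
\[
\sum_{i=1}^L p_i Z_i(p)=0\quad\text{for all } p\in\R^L_{++}.
\]
Because $Z$ is $C^1$, we may take the gradient of this identity with respect to $p$. By the product rule, the $k$-th partial derivative is
\[
Z_k(p)+\sum_{i=1}^L p_i\,\partial_k Z_i(p)=0 .
\]
Assembling these in vector form gives
\[
Z(p)+\sum_{i=1}^L p_i\nabla Z_i(p)=0,
\]
which is exactly the asserted identity $\sum_i p_i\nabla Z_i(p)=-Z(p)$.

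Notably, this first step already proves the identity and uses only Walras' law and differentiability. Homogeneity of degree zero is not needed for the identity itself; it enters only when we interpret the $\varphi^i$ as Mantel potentials. We will note that each $Z_i$ is a scalar hom-$0$ function, so \eqref{eq:euler0} gives $p\cdot\nabla Z_i(p)=0$ and \eqref{eq:homgrad} gives the scaling rules for $\nabla Z_i$. These are the structural properties Mantel's construction requires of the potentials $\varphi^i$.

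The consequence then follows by substitution: setting $\varphi^i=Z_i$ turns the identity into $\sum_i p_i\nabla\varphi^i=-Z$. We also check that this is consistent with Walras' law. Pairing both sides with $p$, the left side becomes $\sum_i p_i\,(p\cdot\nabla\varphi^i)=0$ by Euler's identity, and the right side is $-p\cdot Z=0$.

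There is no real obstacle here. The only care needed is index bookkeeping in the product rule: the term $Z_k$ arises from differentiating the factor $p_k$, and the remaining terms come from differentiating the $Z_i$. Differentiating the identity is legitimate because $\R^L_{++}$ is open and Walras' law holds on all of it.
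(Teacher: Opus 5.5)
Your proof is correct and matches the paper's argument: differentiating Walras' law $\sum_i p_iZ_i(p)=0$ with respect to $p_j$ gives $Z_j(p)+\sum_i p_i\partial_jZ_i(p)=0$, which is the stated vector identity. Your side remarks are accurate but not needed for the lemma: homogeneity plays no role in the identity itself, and the Euler-identity consistency check holds.
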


\begin{proof}
Differentiate $\sum_i p_iZ_i(p)=0$ with respect to $p_j$. This gives
$Z_j(p)+\sum_i p_i\partial_jZ_i(p)=0$ for every $j$, which is the stated vector identity.
\end{proof}

The lemma explains both what is easy and what is new. A potential solution always exists in
closed form, but the canonical choice inherits the parameter dependence of every component
of $Z$. In the present field this would make several consumers vary. The explicit
decomposition below is not needed for existence; its role is to place the parameter in one
potential only. This is the parsimonious step that pointwise SMD realization does not
provide and that later makes the endowment conversion possible.

The required special solution is written first in the equivalent sign convention
\begin{equation}\label{eq:poteq}
\sum_{i=1}^{3}p_{i}\,\nabla_{p}\,\widehat g_{i}(p;\lambda)\;=\;Z(p;\lambda)
\qquad(p\in\R^{3}_{++}).
\end{equation}
Equation \eqref{eq:poteq} converts the wealth-weighted gradients of Mantel's construction
into the target field. Only the final potential below depends on the parameter.

\begin{proposition}[Explicit potentials]\label{prop:potential}
Define, as functions of $(u,v)$ with parameters $(\sigma_{1},\sigma_{2})$,
\begin{align}
g_{1}(u,v)&=-2cv^{2},\qquad g_{2}=0,\label{eq:g1}\\
g_{3}(u,v;\lambda)&=\frac{u^{4}}{4}-\frac{c}{3}u^{3}+\frac{u^{2}v^{2}}{2}+cuv^{2}
-\frac{\sigma_{1}}{2}u^{2}+c\sigma_{1}u-\sigma_{2}uv
+\frac{v^{4}}{4}+\frac{\sigma_{1}}{2}v^{2}+c\sigma_{2}v+2cv^{2},\label{eq:g3}
\end{align}
and set $\widehat g_{i}(p;\lambda)\deq g_{i}\bigl(u(p),v(p);\lambda\bigr)$. Then each
$\widehat g_{i}$ is $C^{\infty}$ on $\R^{3}_{++}\times\R^{2}$, homogeneous of degree zero in
$p$, polynomial in $(u,v,\sigma_{1},\sigma_{2})$, with $g_{1},g_{2}$ independent of
$\lambda$, and identity \eqref{eq:poteq} holds for all $p\in\R^{3}_{++}$ and all
$\lambda\in\R^{2}$.
\end{proposition}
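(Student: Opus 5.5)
The plan is to verify identity \eqref{eq:poteq} by direct computation. The computation reduces to the normalization $p_3=1$ and then to two polynomial identities in $(u,v)$. The regularity and homogeneity claims are immediate. Each $g_i$ is a polynomial in $(u,v,\sigma_1,\sigma_2)$, and $u(p)=p_1/p_3-1$, $v(p)=p_2/p_3-1$ are $C^\infty$ and homogeneous of degree zero on $\R^3_{++}$. Hence every $\widehat g_i$ is $C^\infty$ on $\R^3_{++}\times\R^2$ and hom-$0$ in $p$. Moreover $g_1,g_2$ visibly contain no $\sigma_j$.

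For the identity, I would first use the chain rule:
\[
\partial_{p_1}\widehat g=\frac{g_u}{p_3},\qquad
\partial_{p_2}\widehat g=\frac{g_v}{p_3},\qquad
\partial_{p_3}\widehat g=-\frac{p_1g_u+p_2g_v}{p_3^2}.
\]
By \eqref{eq:homgrad}, each $p_i\nabla\widehat g_i$ is hom-$0$, and so is $Z$. It therefore suffices to check \eqref{eq:poteq} on the slice $p_3=1$, where $p_1=1+u$ and $p_2=1+v$. On this slice the third component of the left side equals $-(p_1\cdot\text{first}+p_2\cdot\text{second})$. This is exactly the Walras completion used to define $Z_3$ in \eqref{eq:Z}. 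Consequently only the first two components need checking, namely
\[
\sum_i p_i\,\partial_u g_i=F_1,\qquad \sum_i p_i\,\partial_v g_i=F_2 .
\]

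The next step is to substitute the explicit potentials. Since $g_2=0$, $\partial_u g_1=0$ and $\partial_v g_1=-4cv$, the two conditions become
\[
\partial_u g_3=F_1,\qquad \partial_v g_3-4cv(1+u)=F_2 .
\]
Differentiating \eqref{eq:g3} term by term gives $\partial_u g_3=u^3-cu^2+uv^2+cv^2-\sigma_1u+c\sigma_1-\sigma_2v$, which is \eqref{eq:F1}. It also gives $\partial_v g_3=u^2v+2cuv-\sigma_2u+v^3+\sigma_1v+c\sigma_2+4cv$. Subtracting $4cv+4cuv$ from this yields \eqref{eq:F2}.

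No step is a real obstacle; the only delicate point is conceptual. The planar field $(F_1,F_2)$ is not a gradient, since $\partial_vF_1-\partial_uF_2=4cv$. So $g_3$ alone cannot produce it, and a correction weighted by a price is needed. The term $g_1=-2cv^2$, carried with weight $p_1=1+u$, supplies exactly $-4cv(1+u)$ in the $v$-equation and nothing in the $u$-equation. The compensating $+2cv^2$ inside $g_3$ then absorbs the unweighted part $-4cv$. I would present the verification in this order, chain rule, reduction to $p_3=1$, Walras completion, then the two polynomial checks, so that the reason the parameter stays confined to $g_3$ is transparent.
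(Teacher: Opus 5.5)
Your proposal is correct and follows essentially the paper's proof. Both use the chain rule for hom-$0$ functions, reduce the identity via the Walras completion of the third component to the two scalar identities $g_{3,u}=F_1$ and $g_{3,v}=F_2+4cv(1+u)$, and verify these by direct differentiation; the only difference is that the paper first derives $g_1$ from the closedness condition $\partial_uF_2-\partial_vF_1=-4cv$ and integrates to obtain $g_3$, whereas you check the given formulas directly and cite closedness only as motivation.
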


The explicit polynomial in \eqref{eq:g3} is less important than its parameter structure:
$g_1$ and $g_2$ are fixed, while all dependence on $\lambda$ has been confined to $g_3$.
That is the field-specific algebraic step on which the one-consumer concentration rests.

\begin{proof}
Homogeneity and smoothness are clear since $u(p),v(p)$ are hom-$0$ and smooth and
$g_{i}$ are polynomials. We verify \eqref{eq:poteq}. Write $q_{1}=p_{1}/p_{3}=1+u$,
$q_{2}=p_{2}/p_{3}=1+v$. For a hom-$0$ function $\widehat g(p)=g(u,v)$, the chain rule gives
\[
\frac{\partial \widehat g}{\partial p_{1}}=\frac{g_{u}}{p_{3}},\qquad
\frac{\partial \widehat g}{\partial p_{2}}=\frac{g_{v}}{p_{3}},\qquad
\frac{\partial \widehat g}{\partial p_{3}}=-\frac{p_{1}}{p_{3}^{2}}g_{u}-\frac{p_{2}}{p_{3}^{2}}g_{v}.
\]
Hence the left side of \eqref{eq:poteq} has components
\begin{align*}
\Bigl(\sum_{i}p_{i}\nabla\widehat g_{i}\Bigr)_{1}
&=(1+u)\,g_{1,u}+(1+v)\,g_{2,u}+g_{3,u},\\
\Bigl(\sum_{i}p_{i}\nabla\widehat g_{i}\Bigr)_{2}
&=(1+u)\,g_{1,v}+(1+v)\,g_{2,v}+g_{3,v},\\
\Bigl(\sum_{i}p_{i}\nabla\widehat g_{i}\Bigr)_{3}
&=-(1+u)\Bigl(\sum_{i}p_{i}\nabla\widehat g_{i}\Bigr)_{1}
-(1+v)\Bigl(\sum_{i}p_{i}\nabla\widehat g_{i}\Bigr)_{2},
\end{align*}
where the third line follows by collecting the $\partial/\partial p_{3}$ terms exactly as in
the first two. Comparing with \eqref{eq:Z}, identity \eqref{eq:poteq} is therefore
equivalent to the two scalar identities
\begin{equation}\label{eq:2eq}
(1+u)g_{1,u}+(1+v)g_{2,u}+g_{3,u}=F_{1},\qquad
(1+u)g_{1,v}+(1+v)g_{2,v}+g_{3,v}=F_{2}.
\end{equation}
With the ansatz $g_{2}=0$ and $g_{1}=g_{1}(v)$ (so $g_{1,u}=0$), \eqref{eq:2eq} reads
$g_{3,u}=F_{1}$ and $g_{3,v}=F_{2}-(1+u)g_{1}'(v)$; a $C^{2}$ solution $g_{3}$ exists iff
the right sides form a closed $1$-form, i.e.
\[
\partial_{v}F_{1}=\partial_{u}\bigl[F_{2}-(1+u)g_{1}'(v)\bigr]
=\partial_{u}F_{2}-g_{1}'(v).
\]
From \eqref{eq:F1}--\eqref{eq:F2} one computes $\partial_{u}F_{2}-\partial_{v}F_{1}=-4cv$,
so $g_{1}'(v)=-4cv$ and $g_{1}=-2cv^{2}$, which is \eqref{eq:g1}. Integrating along the
$L$-shaped path $(0,0)\to(u,0)\to(u,v)$ (the disk is convex, hence the potential is
well-defined) gives \eqref{eq:g3}. Finally one checks by direct differentiation of
\eqref{eq:g3} that
\[
g_{3,u}=F_{1},\qquad g_{3,v}=F_{2}+4cv(1+u),
\]
which is \eqref{eq:2eq}.
\end{proof}

\subsection{From potentials to consumers}

The explicit potentials are the only part of the construction that is specific to the formal field. Once the parameter has been concentrated in $g_3$, the remaining steps implement a uniform parametric version of Mantel's gauge method.
A homogeneous cutoff first localizes the potentials inside the positive price simplex while
preserving the exact potential identity on the target cone. A sufficiently large common
constant $k$ then makes the logarithmic gauges uniformly strictly convex and coordinatewise
monotone for all parameters. Their negative gradients are global diffeomorphisms, whose
inverses define smooth, strongly monotone and strictly quasiconcave direct utilities. The
resulting first block has the prescribed excess demand plus a fixed universal residual
$kR$; one parameter-free application of Debreu's theorem cancels that residual.

Uniform estimates are taken over a compact neighborhood of the parameter annulus, so the
utilities and demands vary jointly smoothly in the interior and continuously on the closed
orthant. Since only $\varphi^3_\lambda$ depends on the parameter, the completed family has
fixed endowments and only consumer $3$ varies. Appendix~\ref{app:analytic} gives the full
cutoff, global-inverse, boundary-extension, localization, residual-block, and assembly
arguments. It proves Theorem~\ref{thm:parametric} without any appeal to a pointwise choice
of realizing economies.

\section{Parametric realization in every commodity dimension}\label{sec:stabilization}

The three-good case contains the entire nontrivial geometry. Additional commodities do not
create new monodromy; they add transverse normalized price coordinates that must simply be
kept at their reference values. We now show that the potential decomposition and the
Mantel realization extend to every $L\ge3$ without introducing any new parameter-dependent
consumer.

In three goods the prescribed potential already reproduces the two active price
coordinates $(u,v)$. For every additional commodity we only need to reproduce the
transverse equation $-x_j=0$. This is achieved by adding the elementary quadratic term
$-x_j^2/2$ to the same parameter-dependent potential, since its derivative is $-x_j$.

Let $g_{\mathrm{pl}}(u,v;\lambda)$ denote the polynomial called $g_3$ in
\eqref{eq:g3}. For $L\ge3$, with $\tau=(x_3,\ldots,x_{L-1})$, define
\begin{equation}\label{eq:gL}
g^{[L]}_1=-2cv^2,\qquad g^{[L]}_i=0\ (2\le i\le L-1),\qquad
g^{[L]}_L=g_{\mathrm{pl}}(u,v;\lambda)-\frac12\sum_{j=3}^{L-1}x_j^2,
\end{equation}
where the sum is empty for $L=3$. Let
$\widehat g^{[L]}_i(p;\lambda)=g^{[L]}_i(x(p);\lambda)$.

\begin{proposition}[Explicit potentials in arbitrary dimension]\label{prop:potentialL}
For every $L\ge3$,
\begin{equation}\label{eq:poteqL}
\sum_{i=1}^{L}p_i\nabla_p\widehat g^{[L]}_i(p;\lambda)=Z^{[L]}(p;\lambda)
\qquad(p\in\R^L_{++}).
\end{equation}
Only $\widehat g^{[L]}_L$ depends on $\lambda$.
\end{proposition}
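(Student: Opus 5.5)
The plan is to repeat the chain-rule reduction used in the proof of Proposition~\ref{prop:potential}, now in $L-1$ normalized coordinates. For a hom-$0$ function $\widehat g(p)=g(x(p))$ with $x_j=p_j/p_L-1$, we have
\[
\partial\widehat g/\partial p_j=g_{x_j}/p_L\quad(j\le L-1),\qquad
\partial\widehat g/\partial p_L=-\sum_{j\le L-1}(p_j/p_L^2)\,g_{x_j}.
\]
Multiplying by $p_i$ and summing, the $j$-th component ($j\le L-1$) of the left side of \eqref{eq:poteqL} is therefore
\[
\sum_{i=1}^{L-1}(1+x_i)\,\partial_{x_j}g^{[L]}_i+\partial_{x_j}g^{[L]}_L .
\]
By construction, the $L$-th component equals $-\sum_{j\le L-1}(p_j/p_L)$ times the $j$-th components. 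This matches the Walras-law completion in \eqref{eq:Z}. An alternative route is to apply Euler's identity \eqref{eq:euler0} to each $\widehat g_i$, which forces $p\cdot\bigl(\sum_i p_i\nabla\widehat g_i\bigr)=0$. Either way, \eqref{eq:poteqL} reduces to the $L-1$ scalar identities
\[
\sum_{i=1}^{L-1}(1+x_i)\,\partial_{x_j}g^{[L]}_i+\partial_{x_j}g^{[L]}_L=F^{[L]}_j,\qquad j=1,\dots,L-1.
\]

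Next, substitute the choices \eqref{eq:gL}. The only nonzero term in the first sum is $(1+u)\partial_{x_j}(-2cv^2)$, which vanishes unless $j=2$, where it equals $-4cv(1+u)$. For $j=1,2$, the last potential contributes only through $g_{\mathrm{pl}}$, since the quadratic transverse term does not involve $u,v$. The identities for $j=1,2$ are thus exactly \eqref{eq:2eq}, and these were verified in Proposition~\ref{prop:potential} via $g_{3,u}=F_1$ and $g_{3,v}=F_2+4cv(1+u)$. For $3\le j\le L-1$, the first sum vanishes and $\partial_{x_j}g^{[L]}_L=-x_j$. This equals $F^{[L]}_j$ by \eqref{eq:FL}. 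When $L=3$ there are no transverse coordinates, and the statement is Proposition~\ref{prop:potential} itself.

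Parameter dependence is read off directly: $g^{[L]}_1$ and the zero potentials are $\lambda$-free, and $\lambda$ enters only through $\sigma_1,\sigma_2$ in $g_{\mathrm{pl}}$, hence only through $\widehat g^{[L]}_L$. No step is a real obstacle. The only care needed is the bookkeeping of the $p_L$-derivative, to confirm that the last component reproduces the Walras-law completion rather than introducing a new constraint. The Euler-identity argument settles this cleanly.
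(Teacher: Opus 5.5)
Your proposal is correct and follows essentially the same route as the paper: the chain-rule formula for the first $L-1$ components, the reduction of $j=1,2$ to the identities \eqref{eq:2eq} verified in Proposition~\ref{prop:potential}, the transverse term $-\tfrac12\sum_{j\ge3} x_j^2$ producing $-x_j$ for $j\ge3$, and Euler's identity showing that the last component is the Walrasian completion. Your explicit check that the $L$th component equals $-\sum_{j<L}(p_j/p_L)$ times the first $L-1$ components simply spells out what the paper compresses into a one-line appeal to Euler's identity.
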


Thus the higher-dimensional extension does not spread the parameter across consumers: only
the final potential continues to depend on $\lambda$.

\begin{proof}
For a hom-$0$ function $\widehat g(p)=g(x_1,\ldots,x_{L-1})$,
\[
\partial_{p_j}\widehat g=p_L^{-1}g_{x_j}\quad(j<L),\qquad
\partial_{p_L}\widehat g=-p_L^{-1}\sum_{j<L}(1+x_j)g_{x_j}.
\]
Therefore the $j$th component, $j<L$, of the left side of \eqref{eq:poteqL} is
\[
\sum_{i=1}^{L-1}(1+x_i)\,\partial_{x_j}g^{[L]}_i
+\partial_{x_j}g^{[L]}_L.
\]
For $j=1,2$ these are exactly the two identities proved in
Proposition~\ref{prop:potential}. For $j\ge3$ the expression is
$\partial_{x_j}g^{[L]}_L=-x_j$, the transverse component of $F^{[L]}$. The last component
is the Walrasian completion of the first $L-1$ components by Euler's identity, hence equals
$Z^{[L]}_L$. Parameter dependence occurs only through $g_{\mathrm{pl}}$ in the last
potential.
\end{proof}

Set $\varphi^{[L]}_i=-g^{[L]}_i$. Then
\begin{equation}\label{eq:minusZL}
\sum_{i=1}^{L}p_i\nabla_p\widehat\varphi^{[L]}_{i,\lambda}(p)=-Z^{[L]}(p;\lambda),
\end{equation}
and only $\varphi^{[L]}_L$ varies with $\lambda$.

The economic takeaway is unchanged in every dimension: for $L\ge3$ the formal field can
be realized with at most $2L$ consumers, all endowments fixed, and only one varying
preference.

\begin{theorem}[Explicit finite-agent realization with one varying consumer]\label{thm:parametricL}
Let $L\ge3$. There is a closed target cone $\Gamma_L$ whose normalized image is compactly
contained in the positive simplex and contains the cone over $K_L$, together with the
following continuous family of $N=2L$ exchange economies:
\[
\eco^{[L]}_\lambda=
\bigl(U^1,\ldots,U^{L-1},U^L_\lambda,
\succeq^{L+1},\ldots,\succeq^{2L};
\omega^1,\ldots,\omega^{2L}\bigr),\qquad \lambda\in A,
\]
such that:
\begin{enumerate}[label=\textup{(\alph*)},leftmargin=2.2em]
\item all endowments and all primitives except $U^L_\lambda$ are independent of $\lambda$;
\item the first $L$ utilities are nonnegative, continuous, monotone, concave, and
positively homogeneous of degree one on $\R^L_+$; their logarithms are strongly monotone,
strictly concave, and smooth on $\R^L_{++}$, with the same joint parameter regularity as in
Theorem~\ref{thm:parametric};
\item the residual $L$ consumers have fixed continuous, monotone, strictly convex
preferences;
\item the aggregate excess demand equals $Z^{[L]}(\,\cdot\,;\lambda)$ exactly on
$\Gamma_L$.
\end{enumerate}
The family is continuous in the topology of Definition~\ref{def:reptopology}. Thus
$N\le2L$ is a constructive, not necessarily minimal, bound.
\end{theorem}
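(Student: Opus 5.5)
The proof will repeat the three-good argument of Theorem~\ref{thm:parametric} in dimension $L$, with Proposition~\ref{prop:potentialL} in place of Proposition~\ref{prop:potential}. First we fix the geometry. Choose radii $\rho<\rho_1<\rho_2<1$ as in Assumption~\ref{ass:data}, and transverse radii $\eta<\eta_1<\eta_2<1$. Define the closed cones
\[
\Gamma^{(r,s)}_L=\{p\gg0:\ (u,v)\in\overline D_r,\ \|\tau\|_\infty\le s\}.
\]
Take $\Gamma_L=\Gamma^{(\rho_1,\eta_1)}_L$. Its normalized image is compactly contained in the open simplex because every $|x_j|<1$, and it contains the cone over $K_L$. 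Next pick a homogeneous cutoff $\chi_L$ that equals $1$ near $\Gamma_L$ and is supported in $\Gamma^{(\rho_2,\eta_2)}_L$; a product of a planar cutoff and transverse cutoffs will do. Let $\Lambda\supset A$ be a compact neighbourhood and $\Lambda_1$ a slightly larger one.

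Starting from the potentials $\varphi^{[L]}_i=-g^{[L]}_i$ of \eqref{eq:minusZL}, set $\widehat\varphi_i=\chi_L\,\varphi^{[L]}_i$. On $\Gamma_L$ these still satisfy $\sum_i p_i\nabla\widehat\varphi_i=-Z^{[L]}$. Only $\widehat\varphi_L$ depends on $\lambda$, and it does so polynomially, so all $C^2$ bounds on the compact support are uniform over $\Lambda_1$. Following Mantel, form for each $i\le L$ the gauge
\[
\psi^i_\lambda(p)=\widehat\varphi_{i,\lambda}(p)+k\,h_i(p),
\]
where $h_i$ is a fixed, $\lambda$-free, strictly convex, coordinatewise decreasing hom-$0$-compatible log-gauge. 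One natural candidate is the one generating the demand of the consumer with endowment $k\mathbf e_i$, e.g.\ a Cobb--Douglas log-indirect-utility term. Choose $k$ so large that, uniformly in $\lambda\in\Lambda_1$ and on the compact normalized support, each $\psi^i_\lambda$ is strictly convex and strictly decreasing in every price. This is possible because the perturbation is bounded in $C^2$ with bounds uniform in $\lambda$.

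The step I expect to be the main obstacle is \emph{duality}: turning each gauge into an indirect utility and then a direct utility. The plan is to use the appendix arguments (global inverse, boundary extension, localization) coordinate-free. Two facts are needed, and they do not depend on $L$:
\begin{enumerate}[label=\textup{(\roman*)},leftmargin=2.2em]
\item $-\nabla\psi^i_\lambda$ is a global diffeomorphism of the appropriate price domain onto $\R^L_{++}$. This should follow from uniform strict convexity together with the exact Cobb--Douglas form outside the support of $\chi_L$.
\item Its inverse gives a homothetic direct utility $U^i_\lambda$ that is smooth, strongly monotone and strictly quasiconcave on $\R^L_{++}$, extended by $0$ on the boundary.
\end{enumerate}
Joint smoothness in $(x,\lambda)$ on $\R^L_{++}\times W$ comes from the implicit function theorem applied to the inverse map, since $\psi^L_\lambda$ is jointly smooth. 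Roy's identity applied to wealth $p\cdot k\mathbf e_i$ then shows that the block $1,\dots,L$ has aggregate excess demand $Z^{[L]}(\cdot;\lambda)+kR_L(p)$ on $\Gamma_L$, where the residual $R_L$ is the $\lambda$-free contribution of the $h_i$ terms. By Walras' law and homogeneity, $-kR_L$ is a continuous Walrasian field. A single application of Debreu's theorem (Theorem~\ref{thm:debreu}) on a truncated simplex containing $\Gamma_L$ realizes $-kR_L$ with $L$ fixed consumers $L+1,\dots,2L$ whose preferences are continuous, monotone and strictly convex.

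Summing the two blocks gives (d). Items (a)--(c) hold by construction, since only $U^L_\lambda$ involves $\lambda$, and continuity in the topology of Definition~\ref{def:reptopology} follows from the joint continuity of $U^L_\lambda$ on $\R^L_+\times\Lambda$. Restricting to $A\subset\Lambda$ completes the proof.
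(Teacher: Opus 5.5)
Your proposal is correct and follows the paper's proof essentially step for step. The shared steps are: cut off the cores $\varphi^{[L]}_i$ with a hom-$0$ cutoff that is flat near $\Gamma_L$; form Mantel gauges with a symmetric Cobb--Douglas logarithmic benchmark and one $k$ uniform in $i$ and $\lambda$; apply the dimension-free inverse, direct-utility and boundary arguments of Appendix~\ref{app:analytic}; use vertex endowments $k\mathbf e_i$, giving $Z^{[L]}+kR_L$ on $\Gamma_L$; and cancel $kR_L$ with a single $\lambda$-free Debreu block. Your gauge $\widehat\varphi_{i,\lambda}+k h_i$ is just $k$ times the paper's $\psi^i_\lambda=\ph^{\,i}_\lambda/k-\frac1L\sum_j\log q_j$, so the radial identity becomes $q\cdot\nabla\psi=-k$ but the demand $\frac{m}{L}\widehat p^{-1}\one-\frac{m}{k}\nabla\ph^{\,i}_\lambda(p)$ is unchanged; and the degree-one homogeneity and concavity required in (b), which you leave implicit, follow as in Corollary~\ref{cor:homothetic} from log-homogeneity of the gauge together with quasiconcavity and nonnegativity of the extended utility.
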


\begin{proof}
Choose a smooth hom-$0$ cutoff in the normalized coordinates that equals one on a
neighborhood of $\Gamma_L$ and has support compactly contained in the positive normalized
simplex. Apply it to the cores $\varphi^{[L]}_i$, and denote the resulting hom-$0$ cutoff
extensions by $\ph^{\,i}_\lambda$. Their scaled first and second derivatives
are uniformly bounded over the compact parameter set. For $k>0$ define the $L$-dimensional
Mantel gauges
\[
\psi^i_\lambda(q)=\frac{\ph^{\,i}_\lambda(q)}{k}
-\frac1L\sum_{j=1}^{L}\log q_j.
\]
Exactly as in Lemmas~\ref{lem:convexity} and~\ref{lem:positivity}, there are finite uniform
constants $C,C_1$ such that
\[
D^2\psi^i_\lambda(q)\succeq
\left(\frac1L-\frac{C}{k}\right)\widehat q^{-2},\qquad
-q_j\partial_j\psi^i_\lambda(q)\ge\frac1L-\frac{C_1}{k}.
\]
One choice $k>L\max\{C,C_1\}$ works for all consumers and parameters. The radial identity
$q\cdot\nabla\psi^i_\lambda=-1$ is unchanged. The proof of
Proposition~\ref{prop:diffeo} therefore gives a global diffeomorphism
$-\nabla\psi^i_\lambda\colon\R^L_{++}\to\R^L_{++}$, and the direct-utility construction of
Section~\ref{sec:direct} gives a unique demand
\begin{equation}\label{eq:demandL}
x^i_\lambda(p,m)=\frac{m}{L}\widehat p^{-1}\one
-\frac{m}{k}\nabla\ph^{\,i}_\lambda(p).
\end{equation}
The boundary argument is dimension-free: the Cobb--Douglas sandwich in
Proposition~\ref{prop:boundary} becomes a comparison with
$(x_1\cdots x_L)^{1/L}$, so the utilities extend continuously by zero to
$\partial\R^L_+$. The log-homogeneity argument of
Corollary~\ref{cor:homothetic} gives degree-one homogeneity of the extended utilities, and
quasiconcavity plus nonnegativity then gives concavity. Demand remains interior at positive
prices and wealth.

Give the first $L$ consumers the fixed endowments $\omega^i=k\mathbf e_i$. Summing
\eqref{eq:demandL} at wealth $kp_i$ yields
\[
\sum_{i=1}^{L}z^i_\lambda(p)=kR_L(p)
-\sum_{i=1}^{L}p_i\nabla\ph^{\,i}_\lambda(p),\qquad
R_L(p)=\frac{\one\cdot p}{L}\widehat p^{-1}\one-\one.
\]
On $\Gamma_L$, the cutoff is flat and \eqref{eq:minusZL} gives
$\sum_{i=1}^{L}z^i_\lambda=Z^{[L]}+kR_L$. The field $R_L$ is hom-$0$, Walrasian, and
parameter independent. Apply Theorem~\ref{thm:debreu} once to $-kR_L$ on the normalized
image of $\Gamma_L$; this supplies $L$ fixed residual consumers. Their aggregate cancels
$kR_L$, proving exact realization. Since only the last core depends on $\lambda$, only
$U^L_\lambda$ varies. Joint interior smoothness and compact-open continuity on the closed
orthant follow from the same inverse-map and boundary estimates used in
Theorem~\ref{thm:parametric}.
\end{proof}

\begin{remark}[A familiar fixed core of the construction]\label{rem:cdfixed}
For $2\le i\le L-1$ one has $g_i^{[L]}=0$. Hence these consumers are exactly symmetric
Cobb--Douglas consumers with utility proportional to
$\frac1L\sum_{j=1}^L\log x_j$, fixed preferences, and fixed vertex endowments
$k\mathbf e_i$. Only consumers $1$ and $L$ in the explicit block use nonzero gauge
perturbations, and only consumer $L$ carries the preference parameter.
\end{remark}

\begin{remark}[Homotheticity without representative-consumer aggregation]
The explicitly constructed preferences are homothetic but deliberately heterogeneous.
Identical homothetic preferences with unique demands would aggregate through wealth
linearity to a representative demand. Here the distinct gauge potentials assemble the
prescribed aggregate field; homotheticity supplies wealth scaling but does not collapse the
economy to a representative consumer. This is consistent with the aggregate flexibility of
heterogeneous homothetic preferences established by Mantel \cite{mantel76}.
\end{remark}

\section{From preference variation to endowment redistribution}\label{sec:endowment}

The previous sections realize monodromy with a single varying preference. We now prove the
economically stronger result that preference variation is not needed at all. The varying
consumer will be replaced exactly by three consumers whose preferences never change. All
parameter dependence will be transferred to the shares in which the original endowment is
distributed among them, while their total endowment remains constant.

The mechanism is simple once its two linearities are visible. The varying consumer depends
on the economic parameter only through two real coefficients
$\sigma=(\Re S(\lambda),\Im S(\lambda))$. If his demand is linear in wealth and affine in
$\sigma$, then any coefficient $\sigma$ lying inside a triangle can be written as a
barycentric average of three fixed coefficient values. Splitting the consumer's endowment
in the same barycentric shares reproduces exactly the demand of the original varying
consumer. The technical statements below establish precisely these two linearities.

The parameter $\lambda$ does not enter the varying consumer in an arbitrary way. It enters
through the two real coefficients of the complex number $S(\lambda)$. We therefore treat
$\sigma=(\sigma_1,\sigma_2)$ temporarily as an independent two-dimensional coefficient; no
new economic parameter is being introduced.

For $\sigma=(\sigma_1,\sigma_2)\in\R^2$, write
$Z^{[L]}(p;\sigma)$ for the field obtained from \eqref{eq:FL}--\eqref{eq:Z} by replacing
$S(\lambda)$ with the independent complex coefficient $\sigma_1+i\sigma_2$. The
construction above does not require regular zeros in order to construct the consumers; it
can therefore be carried out uniformly over any compact coefficient set.

The next proposition isolates exactly the two linearities needed for the conversion. Scaling
wealth scales demand, while averaging the coefficient $\sigma$ averages demand.

Two distinct structural properties drive the conversion below. Log-homogeneity of the
gauges makes Marshallian demand linear in wealth, allowing an endowment to be split into
income shares. Affinity of the explicitly decomposed potential in $\sigma$ makes demand
affine in the coefficient, allowing those shares to reconstruct the target parameter.
Neither property alone is sufficient.

\begin{proposition}[Affine-parameter and wealth-linear demand]\label{prop:affinedemand}
Let $P\subset\R^2$ be compact. The proof of Theorem~\ref{thm:parametricL} can be performed
with one cutoff and one constant $k$ uniformly for $\sigma\in P$. The resulting final
consumer has a family of fixed-parameter utilities $U^L_\sigma$ whose Marshallian demand is
of the form
\begin{equation}\label{eq:affinedemand}
x^L_\sigma(p,m)=m\bigl(d_0(p)+\sigma_1d_1(p)+\sigma_2d_2(p)\bigr),
\qquad p\gg0,\ m>0,
\end{equation}
for smooth vector fields $d_0,d_1,d_2$ independent of $\sigma$. Hence the demand is linear in wealth and affine in $\sigma$ on all of
$\R^L_{++}\times(0,\infty)$.
\end{proposition}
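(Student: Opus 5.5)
The plan is to rerun the proof of Theorem~\ref{thm:parametricL} with $\sigma$ in place of $\lambda$, keeping track of how $\sigma$ enters. First I record that the core potential is affine in $\sigma$. By \eqref{eq:g3} and \eqref{eq:gL},
\[
g^{[L]}_L(x;\sigma)=h_0(x)+\sigma_1h_1(x)+\sigma_2h_2(x),
\]
with polynomials
\[
h_1=-\tfrac{u^2}{2}+cu+\tfrac{v^2}{2},\qquad h_2=-uv+cv,
\]
and $h_0$ the $\sigma$-free remainder. The cutoff $\chi$ is a fixed function of the normalized coordinates and multiplies the core. Hence the cutoff extension $\ph^{\,L}_\sigma=\chi\cdot(-g^{[L]}_L)$ has the form $\phi_0+\sigma_1\phi_1+\sigma_2\phi_2$, with smooth hom-$0$ functions $\phi_j$ compactly supported in the normalized simplex. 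The point is that the cutoff is applied linearly and never inside a nonlinear operation, so affinity is preserved.

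Next I make the constants uniform over $P$. Since $P$ is compact and the $\phi_j$ have compact normalized support, the scaled derivative bounds
\[
\sup \abs{q_j\partial_j\ph}, \qquad \sup\norm{\widehat q\,D^2\ph\,\widehat q}
\]
are bounded by $C_0+|\sigma|\,C'$, which is uniformly bounded on $P$. By \eqref{eq:homgrad}, these scaled quantities are hom-$0$, so their suprema are taken over a compact set. Thus one $k>L\max\{C,C_1\}$ works for all $\sigma\in P$, exactly as in the convexity and positivity lemmas. The remaining consumers' potentials do not depend on $\sigma$, and the residual $kR_L$ is $\sigma$-free, so a single Debreu block still suffices. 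This gives a family of utilities $U^L_\sigma$ satisfying all properties of Theorem~\ref{thm:parametricL}(b) for each $\sigma\in P$.

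Finally I read off the demand. Formula \eqref{eq:demandL}, which comes from the global diffeomorphism of Proposition~\ref{prop:diffeo} and the direct-utility construction, gives on all of $\R^L_{++}\times(0,\infty)$
\[
x^L_\sigma(p,m)=m\Bigl(\tfrac1L\widehat p^{-1}\one-\tfrac1k\nabla\phi_0(p)\Bigr)-\tfrac{m}{k}\sigma_1\nabla\phi_1(p)-\tfrac{m}{k}\sigma_2\nabla\phi_2(p).
\]
I then set
\[
d_0=\tfrac1L\widehat p^{-1}\one-\tfrac1k\nabla\phi_0,\qquad d_j=-\tfrac1k\nabla\phi_j\quad(j=1,2).
\]
These fields are smooth and independent of $\sigma$, which yields \eqref{eq:affinedemand}.

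The main obstacle is making sure that \eqref{eq:demandL} is genuinely the Marshallian demand at every $p\gg0$ and $m>0$, and not only on the target cone. This requires the uniqueness and interiority results (Theorem~\ref{thm:rationalize}, Lemma~\ref{lem:localization}) to hold uniformly in $\sigma\in P$ once $k$ is fixed. I would check this by noting that those arguments use only the two inequalities above together with the radial identity, all of which hold uniformly.
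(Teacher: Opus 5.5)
Your proposal is correct and follows essentially the same route as the paper. Both arguments use the affinity of the core $-g^{[L]}_L$ in $\sigma$, the fact that the $\sigma$-independent cutoff preserves that affinity, and a single $k$ obtained from compactness of $P$; both then read off \eqref{eq:affinedemand} from the global demand formula \eqref{eq:demandL}, and your explicit $h_1,h_2$ and $d_0,d_1,d_2$ simply make concrete what the paper states in one line.
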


\begin{proof}
The core $\varphi^{[L]}_{L,\sigma}=-g^{[L]}_L$ is affine in $(\sigma_1,\sigma_2)$ by
\eqref{eq:g3} and \eqref{eq:gL}. Multiplication by a cutoff independent of $\sigma$
preserves that affine dependence. Since $P$ is compact, the derivative bounds in the gauge
construction are uniform on $P$, so a single $k$ works. The degree-one homogeneous utility
representation implies wealth-linear Marshallian demand; equivalently, this is explicit in
formula \eqref{eq:demandL}. Combining that wealth factor with the affine cutoff potential
gives \eqref{eq:affinedemand}.
\end{proof}

The next lemma contains the entire preference-to-endowment conversion. A consumer whose
demand is affine in a $d$-dimensional coefficient can be replicated by $d+1$ consumers at
fixed coefficient values; only the shares of the original endowment vary. In our
application $d=2$, so three fixed-preference consumers suffice.

\begin{lemma}[Barycentric preference-to-endowment conversion]\label{lem:barycentric}
Let $P\subset\R^d$ be a compact subset of the interior of the convex hull of affinely
independent points $t^0,\ldots,t^d$. Suppose a family of Marshallian demands satisfies
\[
x_t(p,m)=m\left(d_0(p)+\sum_{j=1}^dt_jd_j(p)\right).
\]
Let $\alpha_0(t),\ldots,\alpha_d(t)$ be the barycentric coordinates of $t$ and let
$\omega\in\R^L_+$ satisfy $p\cdot\omega>0$ for all $p\gg0$. Then all
$\alpha_r$ are uniformly positive on $P$. A consumer with demand $x_t$
and fixed endowment $\omega$ can be replaced, for each $t\in P$, by $d+1$ consumers with
fixed demands $x_{t^0},\ldots,x_{t^d}$ and endowments
\[
\omega^r_t=\alpha_r(t)\omega,\qquad r=0,\ldots,d.
\]
Their aggregate endowment is $\omega$, and their aggregate excess demand is exactly the
original consumer's excess demand at every strictly positive price. The replacement uses
$d+1$ fixed-preference consumers because an affine $d$-dimensional parameter is represented
by barycentric coordinates in a $d$-simplex; here $d=2$, so three consumers suffice.
\end{lemma}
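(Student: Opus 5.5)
The plan has three parts: uniform positivity of the barycentric weights, the aggregate-demand computation, and the bookkeeping for endowments and excess demand. The identity we are after is
\[
\sum_r x_{t^r}\bigl(p,\alpha_r(t)\,p\cdot\omega\bigr)=x_t(p,p\cdot\omega).
\]

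\emph{Positivity.} Because $t^0,\ldots,t^d$ are affinely independent, the map sending $t$ to $(\alpha_0(t),\ldots,\alpha_d(t))$ is the unique affine map with $\sum_r\alpha_r=1$ and $\sum_r\alpha_r t^r=t$. It is obtained by inverting the $(d+1)\times(d+1)$ matrix with columns $(1,t^r)$, so it is continuous. On the interior of the simplex every $\alpha_r$ is strictly positive. Since $P$ is a compact subset of that interior, each $\alpha_r$ attains a positive minimum on $P$. This is exactly the claimed uniform positivity.

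\emph{Aggregate demand.} Fix $p\gg0$ and $t\in P$, and put $m=p\cdot\omega>0$. Consumer $r$ has wealth
\[
p\cdot\omega^r_t=\alpha_r(t)\,m>0,
\]
so its demand is well defined and given by the assumed formula. Using linearity in wealth, the aggregate demand is
\[
\sum_r\alpha_r m\Bigl(d_0+\sum_j t^r_jd_j\Bigr).
\]
Pulling the sums through gives $m\bigl(d_0\sum_r\alpha_r+\sum_j(\sum_r\alpha_rt^r_j)d_j\bigr)$. The two barycentric identities reduce this to $m(d_0+\sum_jt_jd_j)$, which is $x_t(p,m)$.

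\emph{Endowments and excess demand.} The endowments add up correctly because $\sum_r\alpha_r(t)\omega=\omega$. Subtracting this total from the aggregate demand then gives equality of excess demands at every $p\gg0$.

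I would also note two points for later use. Each $\omega^r_t$ lies in $\R^L_+$ because $\alpha_r>0$ and $\omega\in\R^L_+$. The endowments depend continuously (indeed affinely) on $t$. The only point needing care is that the fixed-parameter consumers' demands really are given by the stated formula at the positive wealths $\alpha_r m$. This needs the demand formula to hold at $t=t^r$, i.e. on a compact set containing the simplex vertices. That holds if Proposition~\ref{prop:affinedemand} is applied with its compact set enlarged to contain $P$ and $t^0,\ldots,t^d$.
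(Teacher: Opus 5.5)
Your proof is correct and follows the paper's argument essentially step by step: uniform positivity comes from continuity of the barycentric coordinates on the compact set $P$, wealth-linearity together with $\sum_r\alpha_r=1$ and $\sum_r\alpha_r t^r=t$ recovers $x_t(p,m)$, and subtracting the common total endowment $\omega$ gives equality of excess demands. Your remark that the demand formula must also hold at the vertices $t^r$ matches how the paper uses the lemma, since it applies Proposition~\ref{prop:affinedemand} on the whole coefficient triangle, vertices included.
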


\begin{proof}
Because $P$ is compactly contained in the interior of the simplex, continuity of the
barycentric coordinates gives $\min_{t\in P,r}\alpha_r(t)>0$. Since
$\sum_r\alpha_r(t)=1$ and $\sum_r\alpha_r(t)t^r=t$,
\[
\sum_{r=0}^d\omega^r_t=\omega.
\]
Put $m=p\cdot\omega$. The wealth of consumer $r$ is $\alpha_r(t)m$, and therefore
\begin{align*}
\sum_{r=0}^d x_{t^r}\bigl(p,p\cdot\omega^r_t\bigr)
&=\sum_{r=0}^d\alpha_r(t)x_{t^r}(p,m)\\
&=m\left(d_0(p)+\sum_{j=1}^d
       \left[\sum_{r=0}^d\alpha_r(t)t^r_j\right]d_j(p)\right)\\
&=x_t(p,m).
\end{align*}
Subtracting the identical aggregate endowment proves equality of excess demands.
\end{proof}

We can now eliminate preference variation completely. The single varying consumer of
Theorem~\ref{thm:parametricL} is replaced by three consumers with fixed preferences. Their
endowment shares vary, but their sum is always the original endowment, and aggregate excess
demand is unchanged exactly.

\begin{theorem}[Exact realization with fixed preferences and varying endowments]
\label{thm:endowmentreal}
For every $L\ge3$ there is a continuous family of $N=2L+2$ standard pure-exchange
economies $\{\widetilde\eco^{[L]}_\lambda\}_{\lambda\in A}$ with the following
properties:
\begin{enumerate}[label=\textup{(\alph*)},leftmargin=2.2em]
\item every preference relation is independent of $\lambda$;
\item only three individual endowments vary, their sum is constant, and the aggregate
resource vector of the economy is independent of $\lambda$;
\item for each $\lambda\in A$, the aggregate excess demand of
$\widetilde\eco^{[L]}_\lambda$ is identical at every $p\gg0$ to that of a
one-varying-consumer family supplied by Theorem~\ref{thm:parametricL} (constructed with the
same cutoff and $k$ on a larger compact coefficient set);
\item consequently, on $\Gamma_L$ the aggregate excess demand equals
$Z^{[L]}(\cdot;\lambda)$ exactly.
\end{enumerate}
The varying endowments return exactly to their initial values along every closed parameter
loop. The bound $2L+2$ is constructive and is not claimed to be minimal.
\end{theorem}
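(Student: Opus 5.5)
The plan is to combine Proposition~\ref{prop:affinedemand} with Lemma~\ref{lem:barycentric}. Set $P_0=S(A)\subset\R^2$, the image of the annulus under $\sigma(\lambda)=(\re S(\lambda),\im S(\lambda))$. This set is compact, and by Lemma~\ref{lem:sep} it lies in the closed annulus $s_{\min}\le|\sigma|\le s_{\max}$. Next choose three affinely independent points $t^0,t^1,t^2\in\R^2$ whose open triangle contains $P_0$; an equilateral triangle centred at the origin with inradius larger than $s_{\max}$ will do. Then take $P$ to be the closed triangle itself, or a compact set squeezed between $P_0$ and the triangle, chosen so that it contains $P_0$ together with the vertices $t^r$.

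Apply Proposition~\ref{prop:affinedemand} on this larger compact coefficient set $P$. This fixes one cutoff and one constant $k$ for all $\sigma\in P$. The outcome is the fixed consumers $U^1,\dots,U^{L-1}$, the residual block $\succeq^{L+1},\dots,\succeq^{2L}$ with fixed endowments, and the family $U^L_\sigma$ whose demand takes the affine, wealth-linear form \eqref{eq:affinedemand}. Two points need care. First, the residual field $kR_L$ does not depend on $\sigma$, so Theorem~\ref{thm:debreu} is still invoked only once. Second, the vertices $t^r$ belong to $P$, so each $U^L_{t^r}$ is a genuine standard consumer; this holds because the construction does not require regular zeros. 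Restricting to $\sigma=\sigma(\lambda)$ then recovers the one-varying-consumer family of Theorem~\ref{thm:parametricL}, built with this particular cutoff and $k$. That is exactly the comparison family referred to in (c).

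Now replace consumer $L$ by three consumers with fixed preferences $U^L_{t^0},U^L_{t^1},U^L_{t^2}$. Their endowments are $\omega^r_\lambda=\alpha_r(\sigma(\lambda))\,k\mathbf e_L$, where $\alpha_r$ are the barycentric coordinates. Lemma~\ref{lem:barycentric}, applied with $d=2$, $\omega=k\mathbf e_L$ and the compact set $P_0$ (which sits in the interior of the triangle), shows the following:
\begin{itemize}
\item the $\alpha_r$ are uniformly positive, so each $\omega^r_\lambda\in\R^L_+$;
\item the three endowments sum to $k\mathbf e_L$;
\item their aggregate excess demand coincides with that of consumer $L$ at every $p\gg0$.
\end{itemize}

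The rest is bookkeeping. The consumer count is $(L-1)+3+L=2L+2$, and every preference is independent of $\lambda$, which gives (a). Only three endowments move and their sum is constant, so aggregate resources do not change, which gives (b). Excess demand is identical to that of the comparison family, which gives (c). Part (d) then follows from Theorem~\ref{thm:parametricL}(d).

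For continuity, the endowments are $\lambda\mapsto\alpha_r(\sigma(\lambda))k\mathbf e_L$. Barycentric coordinates are affine functions of $\sigma$, and $\sigma$ is a polynomial in $\lambda$, so these maps are polynomial in $\lambda$, hence continuous, and they return exactly to their starting values along any closed loop.

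The main obstacle is making sure Proposition~\ref{prop:affinedemand} really holds uniformly on a set $P$ large enough to contain the vertices of the triangle. Two things must be checked there. One is that the gauge constants $C,C_1$ from Lemmas~\ref{lem:convexity}--\ref{lem:positivity} stay bounded on $P$; this is a compactness argument, since the cores are affine in $\sigma$. The other is that the fixed-vertex consumers belong to the standard class of Definition~\ref{def:Meconomy}. Both reduce to the uniform-in-parameter estimates already established, with $\Lambda$ replaced by $P$.
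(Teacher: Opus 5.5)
Your proposal is correct and follows the paper's proof essentially step for step. You take an equilateral coefficient triangle of inradius exceeding $s_{\max}$, apply Proposition~\ref{prop:affinedemand} uniformly on the whole triangle so that the vertex consumers $U^L_{t^r}$ exist with one cutoff and one $k$, and then apply Lemma~\ref{lem:barycentric} with endowment shares $\alpha_r(S(\lambda))\,k\mathbf e_L$. Your explicit separation of the compact set $P$, used for the uniform construction, from $S(A)$, used in the barycentric lemma, makes clear a point the paper leaves implicit.
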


\begin{proof}
Let $s_{\max}=\sup_{\lambda\in A}|S(\lambda)|$ and choose $\kappa>2s_{\max}$. Consider the
three coefficient vectors
\[
\sigma^1=(\kappa,0),\qquad
\sigma^2=\left(-\frac\kappa2,\frac{\sqrt3\kappa}{2}\right),\qquad
\sigma^3=\left(-\frac\kappa2,-\frac{\sqrt3\kappa}{2}\right).
\]
Their equilateral triangle $P$ has inradius $\kappa/2$ and therefore contains $S(A)$ in
its interior. For $\sigma=(\sigma_1,\sigma_2)$ its barycentric coordinates are
\begin{align}
\alpha_1(\sigma)&=\frac13+\frac{2\sigma_1}{3\kappa},\nonumber\\
\alpha_2(\sigma)&=\frac13-\frac{\sigma_1}{3\kappa}
                  +\frac{\sigma_2}{\sqrt3\kappa},\label{eq:baryalpha}\\
\alpha_3(\sigma)&=\frac13-\frac{\sigma_1}{3\kappa}
                  -\frac{\sigma_2}{\sqrt3\kappa}.\nonumber
\end{align}
For $|\sigma|\le s_{\max}$,
\[
\alpha_r(\sigma)\ge
\frac13-\frac{2s_{\max}}{3\kappa}>0,
\]
so all three consumers have positive wealth at positive prices.

Apply Proposition~\ref{prop:affinedemand} on the whole triangle $P$, and consider the
one-varying-consumer economy at the coefficient $\sigma=S(\lambda)$. Its last consumer has
endowment $\omega^L=k\mathbf e_L$. Replace that consumer by three consumers with the fixed
preferences represented by
$U^L_{\sigma^1},U^L_{\sigma^2},U^L_{\sigma^3}$ and with endowments
\[
\omega^{L,r}_\lambda=\alpha_r(S(\lambda))\,k\mathbf e_L,
\qquad r=1,2,3.
\]
Lemma~\ref{lem:barycentric} shows that their aggregate demand and aggregate excess demand
coincide exactly with those of the removed consumer at every $p\gg0$. All other consumers
are unchanged. Hence the two economic families have identical aggregate excess demand
globally for each $\lambda$, while the three new endowments sum to $k\mathbf e_L$.
Replacing one consumer by three raises the count from $2L$ to $2L+2$. Continuity and the
return property follow from continuity and periodicity of $S$ and of the affine functions
\eqref{eq:baryalpha}.
\end{proof}

\begin{remark}[A visible redistribution loop]\label{rem:shares}
The effective coefficient parameter is two-dimensional, which explains the use of three
replacement consumers. With the standing generator $r^2=1.03$, one has
$|S(\lambda)|=r(r^2-1)\approx0.030447$. Choosing $\kappa=0.11$ in
\eqref{eq:baryalpha}, each barycentric share varies over
\[
\alpha_r(\theta)\in
\left[\frac13-\frac{2|S|}{3\kappa},
      \frac13+\frac{2|S|}{3\kappa}\right]
\approx[0.149,0.518].
\]
Thus the illustrative endowment loop is not infinitesimal: each replacement consumer's
share of the original vertex endowment ranges from about $14.9\%$ to $51.8\%$. These
numbers illustrate the chosen generator and triangle; they are not intrinsic bounds of the
theorem.
\end{remark}

\section{Real monodromy under endowment and preference loops}\label{sec:main}

Everything needed for the economic result is now in place. The formal construction supplied
the equilibrium-price monodromy; the parametric realization converted it into a continuous
family with one varying preference; and Section~\ref{sec:endowment} converted that family
into one with fixed preferences and varying endowments. We can therefore state the result
directly as a closed economic experiment: the economy returns exactly to its initial
primitives, but a regular equilibrium price followed continuously around the loop can
return as a different equilibrium price.

Moreover, the phenomenon does not require changing preferences. It can be generated by
redistributing a fixed aggregate endowment among three consumers whose preferences remain
unchanged. The theorem gives two exact microfoundations of the same aggregate experiment.

\begin{theorem}[Endowment and preference realizations of real monodromy]\label{thm:main}
For every integer $L\ge3$ there exist a compact normalized positive-price region $K_L$ and
two continuous one-parameter periodic families of standard finite-agent pure-exchange
economies with $L$ goods,
\[
\{\eco^{\mathrm{end},[L]}_\theta\}_{\theta\in S^1},
\qquad
\{\eco^{\mathrm{pref},[L]}_\theta\}_{\theta\in S^1},
\]
with the following properties.
\begin{enumerate}[label=\textup{(\alph*)},leftmargin=2.2em]
\item In the endowment family, all preference relations are fixed, only three individual
endowments vary, and their sum---hence aggregate resources---is constant; the family has at
most $2L+2$ consumers. \item In the preference family, all endowments are fixed and only one consumer's preference
relation varies; the family has at most $2L$ consumers.
\item For every parameter value, the two families have identical aggregate excess demand at
all strictly positive prices.
\item Each restricted equilibrium-price fibre in $K_L$ consists of exactly three regular
prices with reduced indices $+1,+1,-1$. The restricted covering is the disjoint union of a
fixed sheet and a connected double cover.
\item One circuit exchanges the two positive-index equilibrium prices and fixes the
negative-index price.
\item After one circuit every primitive of the relevant family returns exactly to its
initial value.
\end{enumerate}
Thus one periodic parameter suffices. No assertion is made about the number or regularity of
normalized equilibrium prices outside $K_L$.
\end{theorem}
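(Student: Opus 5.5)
The plan is to assemble Theorem~\ref{thm:main} from the results already proved, with the parameter circle $S^1$ embedded in the annulus as the standard generator. Fix $r$ with $1+\delta<r^2<R^2$ (for instance $r^2=1.03$ as in Remark~\ref{rem:prices}) and set $\lambda(\theta)=r e^{i\theta}$. This is a continuous $2\pi$-periodic map $S^1\to A$ that represents a generator of $\pi_1(A)$. Then define
\[
\eco^{\mathrm{pref},[L]}_\theta\deq\eco^{[L]}_{\lambda(\theta)},\qquad
\eco^{\mathrm{end},[L]}_\theta\deq\widetilde\eco^{[L]}_{\lambda(\theta)}.
\]
Here the first family comes from Theorem~\ref{thm:parametricL}. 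The second comes from Theorem~\ref{thm:endowmentreal}, and I take the one-varying-consumer family referred to in its item (c) to be exactly the preference family above, built with the same cutoff and $k$ on the triangle $P\supset S(A)$ via Proposition~\ref{prop:affinedemand}. Since both families are pulled back along a continuous loop, they are continuous, and since $\lambda(0)=\lambda(2\pi)$, every primitive returns exactly to its initial value. The only varying primitives are $U^L_{\lambda(\theta)}$ in the preference family and the shares $\alpha_r(S(\lambda(\theta)))k\mathbf e_L$ in the endowment family. This gives (f). Items (a) and (b), including the consumer counts $2L+2$ and $2L$, are read off from Theorems~\ref{thm:endowmentreal} and~\ref{thm:parametricL}. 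Item (c) is Theorem~\ref{thm:endowmentreal}(c), which holds at every $p\gg0$ by Lemma~\ref{lem:barycentric}.

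For (d) and (e), I will transfer the formal structure to the economies. By Theorem~\ref{thm:parametricL}(d) and Theorem~\ref{thm:endowmentreal}(d), the aggregate excess demand of both families equals $Z^{[L]}(\cdot;\lambda(\theta))$ on $\Gamma_L$, and $\Gamma_L$ contains the cone over $K_L$ in its interior. Two fields that agree on an open set containing the cone over $K_L$ have the same reduced system $F^{[L]}$ near $K_L$, by the definition \eqref{eq:Z}, and therefore the same zeros, reduced Jacobians and indices there. Lemmas~\ref{lem:index} and~\ref{lem:cover} then show that each restricted fibre consists of exactly three regular zeros, with indices $+1,+1,-1$, and that none of them lies on $\partial K_L$. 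Proposition~\ref{prop:structure} then supplies the decomposition into a fixed sheet and a connected double cover over $A$.

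Pulling the covering back along $\theta\mapsto\lambda(\theta)$ gives the restricted covering over $S^1$ for each family. Its lifts along $[0,2\pi]$ are the lifts of the generator $\gamma$. Lemma~\ref{lem:swap} and Theorem~\ref{thm:formal} then give the monodromy: $q_+$ and $q_-$, both of index $+1$, are exchanged, while $q_0$, of index $-1$, is fixed. Pulled back over $S^1$, the mobile pair remains a connected double cover, because its monodromy is still the transposition.

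The main point needing care is not analytic but a consistency issue, and it has two parts. The first part is that the same one-varying-preference family must be used in (b) and (c). Theorem~\ref{thm:endowmentreal} constructs its comparison family with a cutoff and $k$ chosen uniformly on $P$ rather than on $S(A)$ alone. I will therefore define the preference family by that uniform construction from the outset. That family still satisfies all the conclusions of Theorem~\ref{thm:parametricL}, since Proposition~\ref{prop:affinedemand} shows that one $k$ suffices on any compact coefficient set. The second part is to check that realization of the field on $\Gamma_L$, rather than on all of $\R^L_{++}$, is enough for the equilibrium statements. This holds because the covering and monodromy assertions concern only $K_L$, which is why the theorem makes no claim outside $K_L$.
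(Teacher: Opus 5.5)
Your proposal is correct and follows essentially the same route as the paper. You build the preference family with the cutoff and $k$ chosen uniformly on the coefficient triangle, convert it into the endowment family via Theorem~\ref{thm:endowmentreal}, restrict both families to the standard generator, and transfer the covering, indices and transposition from Theorem~\ref{thm:formal} and Lemma~\ref{lem:swap} using exact agreement with $Z^{[L]}$ on the cone over $K_L$. Your added remarks only make explicit what the paper's short proof leaves implicit: the pulled-back mobile double cover stays connected because the generator acts by a transposition, and for general $L$ agreement on $K_L$ itself suffices for the zeros and Jacobians, since by Lemma~\ref{lem:cover} all zeros lie in its interior, so you do not need $\Gamma_L$ to contain that cone in its interior.
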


The endowment realization has an additional presentational advantage: its loop lives in an
ordinary finite-dimensional Euclidean space, so no topology on ordinal preference relations
and no choice of cardinal utility representatives is needed to state it. The phrase
``fixed sheet and connected double cover'' has a simple economic meaning: one of the three
equilibrium prices can be identified globally throughout the experiment, whereas the other
two can only be distinguished locally; after one circuit their identities are reversed.

Crucially, the permutation is not produced by failing to close the economic experiment. At
the end of the circuit the economy is exactly the initial economy---same preferences and
same endowments---yet the continuation of one regular equilibrium price ends at another
equilibrium price of that economy.

The three-branch geometry constructed formally in Section~\ref{sec:formal} is therefore
realized exactly by both economic families of Theorem~\ref{thm:main}, in particular by the
fixed-preference endowment-redistribution family.

\begin{proof}
Let $A$ be the annulus of Theorem~\ref{thm:formal} and let
$\gamma\colon S^1\to A$ be its standard generator. Construct the preference family with the
uniform coefficient triangle used in Theorem~\ref{thm:endowmentreal}; on
$\sigma=S(\lambda)$ it is a realization of Theorem~\ref{thm:parametricL}.
Theorem~\ref{thm:endowmentreal} converts it into the endowment family and shows that the two
aggregate excess-demand fields are globally identical. Restricting both families to
$\gamma$ gives the stated loops. On the cone
containing $K_L$, both therefore coincide with the formal field $Z^{[L]}$. Theorem~\ref{thm:formal} and Lemma~\ref{lem:swap} transfer the three-sheeted covering, indices, and
transposition to both economic families. Periodicity gives the exact return of all
primitives.
\end{proof}

The annulus is only topological scaffolding; it does not mean that two independently
varying economic parameters are required. The economic experiment of the main theorem is
one-dimensional, obtained by restricting the annular family to a single closed generator.

\begin{proposition}[Annular extension]\label{prop:annular}
Both loops in Theorem~\ref{thm:main} are restrictions of continuous families over the
compact annulus $A$. For each family the restricted equilibrium-price correspondence over
$K_L\times A$ is the disjoint union of a trivial sheet and a connected double cover. The endowment realization has fixed preferences and a constant aggregate resource vector;
the preference realization has fixed endowments and one varying preference. Their aggregate
excess-demand fields are identical for every $\lambda\in A$ and every $p\gg0$.
\end{proposition}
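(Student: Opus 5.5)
The plan is to observe that nothing in the constructions of Sections~\ref{sec:stabilization}--\ref{sec:endowment} used the restriction to a single loop. Theorem~\ref{thm:parametricL} is already stated for $\lambda\in A$. Theorem~\ref{thm:endowmentreal} is likewise stated over $A$, with barycentric shares $\alpha_r(S(\lambda))$. Both loops in Theorem~\ref{thm:main} were obtained by composing these annular families with the generator $\gamma\colon S^1\to A$. So the first step is purely definitional: take $\{\eco^{[L]}_\lambda\}_{\lambda\in A}$ from Theorem~\ref{thm:parametricL}, built with the uniform cutoff and constant $k$ over the coefficient triangle $P\supset S(A)$ of Theorem~\ref{thm:endowmentreal}. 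Then take $\{\widetilde\eco^{[L]}_\lambda\}_{\lambda\in A}$ from Theorem~\ref{thm:endowmentreal} itself. The loops of Theorem~\ref{thm:main} are by construction $\lambda=\gamma(\theta)$.

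For continuity over $A$, I will cite the joint regularity of Theorem~\ref{thm:parametric}(d) and its $L$-dimensional analogue in Theorem~\ref{thm:parametricL}. These give joint continuity of $(x,\lambda)\mapsto U^L_\lambda(x)$ on $\R^L_+\times\Lambda\supset \R^L_+\times A$, hence continuity in the topology of Definition~\ref{def:reptopology}. For the endowment family, continuity is immediate because $\lambda\mapsto\alpha_r(S(\lambda))k\mathbf e_L$ is polynomial. The structural claims are then read off directly from Theorem~\ref{thm:endowmentreal}(a)--(b) and Theorem~\ref{thm:parametricL}(a): fixed preferences and constant aggregate resources for the endowment family, and fixed endowments with one varying preference for the preference family. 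Identity of the aggregate fields for every $\lambda\in A$ and $p\gg0$ is Theorem~\ref{thm:endowmentreal}(c). That identity rests on Lemma~\ref{lem:barycentric} being applied pointwise in $\sigma=S(\lambda)\in P$, and that pointwise application never used the loop.

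For the covering statement, both aggregate fields equal $Z^{[L]}(\cdot;\lambda)$ on $\Gamma_L$, which contains the cone over $K_L$; this follows from Theorem~\ref{thm:endowmentreal}(d) and Theorem~\ref{thm:parametricL}(d). Hence for each family the reduced normalized system on $K_L\times A$ coincides with $F^{[L]}$, and the restricted equilibrium correspondence is literally $\eco_{K_L}$. Lemma~\ref{lem:cover} shows it is a three-sheeted covering. Proposition~\ref{prop:structure} then gives the decomposition into a trivial sheet $\{(c,0)\}\times A$ and a connected double cover, connected because the generator acts by the transposition of Lemma~\ref{lem:swap}.

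The only point needing care, and the one I expect to be the main obstacle, is making sure the uniform choices are made over all of $A$ and not just over the image of $\gamma$. Concretely, the cutoff, the constant $k$, and the residual Debreu block must be fixed once for a compact neighborhood $\Lambda$ of $A$ and for the full triangle $P$. Also, $\kappa>2s_{\max}$ must use $s_{\max}=\sup_A|S|$, so that the barycentric shares are positive on the whole annulus. Both choices are exactly those already made in Assumption~\ref{ass:data}(D3) and in the proof of Theorem~\ref{thm:endowmentreal}, so I would verify this and conclude.
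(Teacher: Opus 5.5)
Your proposal is correct and follows essentially the same argument as the paper: it too applies Theorems~\ref{thm:parametricL} and~\ref{thm:endowmentreal} over the whole annulus, obtains equality of the aggregate fields from Lemma~\ref{lem:barycentric}, and transfers the covering structure from Theorem~\ref{thm:formal}, whose ingredients are the Lemma~\ref{lem:cover} and Proposition~\ref{prop:structure} you cite. Your additional care in fixing the cutoff, $k$, and $\kappa>2s_{\max}$ uniformly over $A$ and the triangle $P$ simply makes explicit what the paper's one-line proof leaves implicit.
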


\begin{proof}
Apply Theorems~\ref{thm:parametricL} and~\ref{thm:endowmentreal} over the annulus. Equality
of the aggregate fields follows from Lemma~\ref{lem:barycentric}; the covering statement
then follows from Theorem~\ref{thm:formal}.
\end{proof}

One further point must be checked. A nonconstant formal parameterization would not be
economically meaningful if different parameter values represented the same primitives. In
our construction the parameter is faithfully recorded by the varying preference in one
realization and by the three endowment shares in the other.

\begin{proposition}[Faithfulness of the two parameter loops]\label{prop:faithful}
The annular preference map and the annular endowment map are injective. In particular,
their restrictions to the standard generator are nonconstant embedded circles: the first in
the represented preference-family space of Definition~\ref{def:reptopology}, the second in
the finite-dimensional Euclidean endowment space.
\end{proposition}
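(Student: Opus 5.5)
We have to show that the maps $\lambda\mapsto U^L_\lambda$ (preference family) and $\lambda\mapsto(\omega^{L,1}_\lambda,\omega^{L,2}_\lambda,\omega^{L,3}_\lambda)$ (endowment family) are injective on the annulus $A$. The plan is to factor both maps through the coefficient map $\lambda\mapsto S(\lambda)$. The first step is to show that $S$ is injective on $A$. Writing $\lambda=re^{i\theta}$ with $r^2\in[1+\delta,R^2]$, we have $S(\lambda)=r(r^2-1)e^{i\theta}$. Hence $\arg S=\arg\lambda$, and $|S|=r(r^2-1)$ is strictly increasing for $r>1$, as already used in Lemma~\ref{lem:sep}. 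So $S$ restricted to $A$ is a homeomorphism onto a closed annulus in the coefficient plane, and on the standard generator it traces the circle of radius $r(r^2-1)>0$ exactly once.

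For the endowment map, the shares are $\omega^{L,r}_\lambda=\alpha_r(S(\lambda))\,k\mathbf e_L$. By \eqref{eq:baryalpha}, $\sigma\mapsto(\alpha_1,\alpha_2,\alpha_3)(\sigma)$ is an affine map with injective linear part. Indeed, $\sigma_1$ is recovered as $\tfrac{3\kappa}{2}(\alpha_1-\tfrac13)$ and $\sigma_2$ as $\tfrac{\sqrt3\kappa}{2}(\alpha_2-\alpha_3)$. Since $k>0$, the endowment triple determines $\sigma=S(\lambda)$, and therefore determines $\lambda$. This gives injectivity. A continuous injection of the compact annulus into a Hausdorff space is an embedding, so the restriction to the generator is a nonconstant embedded circle in Euclidean endowment space.

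For the preference map, it suffices to show that $U^L_\lambda$ determines $S(\lambda)$, and the plan is to read $\sigma$ off demand. Suppose $U^L_\lambda=U^L_{\lambda'}$. Then the Marshallian demands coincide, and by \eqref{eq:affinedemand} (or \eqref{eq:demandL}) this gives $\nabla\ph^{\,L}_\lambda(p)=\nabla\ph^{\,L}_{\lambda'}(p)$ for all $p\gg0$. On the target cone the cutoff equals one, so there $\ph^{\,L}_\lambda=-g^{[L]}_L(x(p);\lambda)$ up to sign conventions. The difference of the two cores is
\[
-\Bigl[\bigl(c u-\tfrac{u^2}{2}+\tfrac{v^2}{2}\bigr)\Delta\sigma_1+(cv-uv)\Delta\sigma_2\Bigr],
\]
read off from \eqref{eq:g3}. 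Its gradient in $(u,v)$ at the origin $(u,v)=(0,0)$, which is the reference price $\one$ and lies inside $\Gamma_L$, equals $-c(\Delta\sigma_1,\Delta\sigma_2)$. The chain rule from $p$ to $(u,v)$ is invertible in the active coordinates, since $\partial_{p_1}\widehat g=p_L^{-1}g_u$ and $\partial_{p_2}\widehat g=p_L^{-1}g_v$. Because $c>0$, equality of the gradients forces $\Delta\sigma=0$, so $S(\lambda)=S(\lambda')$ and hence $\lambda=\lambda'$. Continuity in the topology of Definition~\ref{def:reptopology} then gives the embedding conclusion in the same way as before.

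The main obstacle is the preference part. A utility representative determines the preference relation, and hence demand, but we must make sure the argument does not secretly rely on cardinal data. Working through Marshallian demand, which is ordinal, handles this: injectivity then holds even at the level of preference relations, not merely representatives. The computation of the $\Delta\sigma$-terms of \eqref{eq:g3} at the reference price is the only calculation to check.
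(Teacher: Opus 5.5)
Your proposal is correct and follows essentially the same route as the paper's proof. Both factor the two maps through the injective coefficient map $S$ (argument preserved, modulus $r(r^2-1)$ strictly increasing), invert the affine barycentric map for the endowment loop, and recover $S(\lambda)$ from demand through the gradient $-c\,(\sigma_1,\sigma_2)$ of the last core at the flat reference price $\one$, before concluding that a continuous injection from a compact space into a Hausdorff space is an embedding.
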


\begin{proof}
For the preference family, the preference relation of the varying consumer determines its
unique demand on every strictly positive budget. Formula \eqref{eq:demandL}, with fixed
$k$ and logarithmic baseline, therefore determines the gradient of the cutoff potential. At
the normalized price $q=\one$, where the cutoff is flat, the last core satisfies
\[
\partial_u\varphi^L_\lambda(0,0,0)=-c\sigma_1(\lambda),\qquad
\partial_v\varphi^L_\lambda(0,0,0)=-c\sigma_2(\lambda),
\]
Thus demand determines $S(\lambda)$. The map
$S(\lambda)=\lambda(|\lambda|^2-1)$ is injective on $A$, because it preserves the argument
and its radial modulus $r(r^2-1)$ is strictly increasing for $r>1$. Hence distinct
parameters induce distinct preference relations.

For the endowment family, the three varying endowments determine their barycentric
coefficients. The map $\sigma\mapsto(\alpha_1(\sigma),\alpha_2(\sigma),\alpha_3(\sigma))$
is affine and injective, and $S$ is injective on $A$. The endowment map is therefore injective. The relevant target spaces are Hausdorff, so a
continuous injection from the compact circle is an embedding.
\end{proof}

The circular experiment is only the simplest way to display the phenomenon. Any parameter
space that maps continuously into the annulus inherits the same monodromy; what matters is
the winding of the induced loop.

\begin{corollary}[Parameter-space pullback]\label{cor:pullback}
Let $B$ be a connected, locally path-connected parameter space and let $f\colon B\to A$ be
continuous. Pulling back either annular family gives a continuous pure-exchange family over
$B$. For a loop $\beta$ in $B$, the two mobile prices are exchanged precisely when
$f_*[\beta]\in\pi_1(A)\cong\Z$ has odd winding number; they are fixed when it is even.
Thus neither the annulus nor two economic parameters are intrinsic to the phenomenon.
\end{corollary}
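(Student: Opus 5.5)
The plan is to reduce everything to the covering-space functoriality of pullbacks. First, the pulled-back families: for either annular family of Proposition~\ref{prop:annular}, define the family over $B$ by $b\mapsto\eco_{f(b)}$. Composition of continuous maps shows that the endowment family is continuous into Euclidean endowment space, and the preference family is continuous in the topology of Definition~\ref{def:reptopology}. Every primitive at $b$ is a primitive of the annular family at $f(b)$, so each member is a standard pure-exchange economy, and the preference/endowment invariance properties carry over verbatim. Moreover, the aggregate excess demand at $b$ is $Z^{[L]}(\cdot\,;f(b))$ on the cone over $K_L$. So the reduced field $F^{[L]}(x;f(b))$ satisfies the three conditions of Definition~\ref{def:restrictedcover} over $B$: boundary nonvanishing and regularity are pointwise in $f(b)$ by Lemma~\ref{lem:cover} and Lemma~\ref{lem:index}, and surjectivity is immediate. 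Joint continuity of $D_xF$ also follows by composition.

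Next I identify the covering. The restricted equilibrium set over $B$ is exactly the fibre product
\[
E_B=\{(x,b):(x,f(b))\in\eco_{K_L}\},
\]
that is, the pullback covering $f^*\pi_{K_L}$. For a loop $\beta$ at $b_0$, the lift of $\beta$ starting at $(e,b_0)$ is $t\mapsto(\widetilde{f\circ\beta}_e(t),\beta(t))$, where $\widetilde{f\circ\beta}_e$ is the lift in $\eco_{K_L}$; this follows by uniqueness of path lifting. Hence
\[
\rho_B([\beta])=\rho_{K_L}(f_*[\beta]).
\]
This is a standard naturality property, and I would verify it in a line rather than cite it.

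Finally, I compute $\rho_{K_L}$ on $\pi_1(A)\cong\Z$. By Theorem~\ref{thm:formal}, the generator acts by the transposition $(q_+\ q_-)$ and fixes $q_0$. Since $\rho_{K_L}$ is a homomorphism, the class $n$ acts by $(q_+\ q_-)^n$, which is the transposition exactly when $n$ is odd and the identity when $n$ is even. An alternative direct check uses the square-root lift of Lemma~\ref{lem:swap}: the lift acquires the factor $e^{i\pi n}$ because $S\circ f\circ\beta$ winds $n$ times, since $S$ preserves the argument on $A$.

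The only mildly delicate point is the base point. $f_*[\beta]$ lies in $\pi_1(A,f(b_0))$, but $A$ is path-connected with abelian fundamental group, so the winding number is well defined independently of base point. The final sentence of the corollary then follows by taking $B=S^1$ with $f=\gamma$.
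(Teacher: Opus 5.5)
Your proposal is correct and follows the paper's own argument: you identify the restricted equilibrium covering over $B$ as the pullback $f^*\pi_{K_L}$ of the annular covering, so its monodromy is $\rho_{K_L}\circ f_*$, and the generator of $\pi_1(A)\cong\Z$ acts by the transposition of the mobile sheets, which gives the odd/even dichotomy. Your additional checks (continuity of the pulled-back primitives, the explicit lift via uniqueness of path lifting, base-point independence of the winding number) only fill in details the paper leaves implicit. One small point: take the covering property over $B$ from the pullback construction itself, as you do, rather than from Definition~\ref{def:restrictedcover}, because the corollary does not assume $B$ is locally compact Hausdorff.
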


\begin{proof}
The restricted equilibrium covering of either pulled-back family is the pullback of the
annular covering. Its monodromy is the composition of $f_*$ with the transposition assigned
to a generator of $\pi_1(A)$.
\end{proof}

\begin{remark}[Interpretation of the two realizations]\label{rem:onetaste}
The one-varying-preference realization is a concentration result, not a necessity statement:
several preferences could vary, but they need not. The endowment realization shows that the
same monodromy can be generated with every preference fixed. The bounds $2L$ and $2L+2$ are
constructive upper bounds only.
\end{remark}

\begin{remark}[Equilibria outside the target region]\label{rem:outside}
The theorem is global with respect to the equivalence of the two economic realizations, but
local with respect to the formal field used to control the equilibrium structure. The
preference and endowment families have identical aggregate excess demand at every positive
price, whereas their equality with the formal field is imposed only on the target cone. We
therefore know exactly what happens to the three branches in $K_L$, but do not claim that
they are the only equilibria of the economy or that equilibria outside the controlled
region are everywhere regular. Whether one can construct such a loop while
controlling the complete equilibrium set globally remains open.
A related stronger question, in equilibrium-manifold language, is whether nontrivial
monodromy can occur for the full finite covering obtained by restricting the natural
projection to regular economies.
\end{remark}

\section{Structural stability of the monodromy}\label{sec:robust}

The monodromy established above is not tied to the particular formulas of the
construction. It persists under sufficiently small $C^1$ perturbations of the reduced
aggregate excess-demand system. If an economic family is sufficiently close to the
constructed family---both in aggregate excess demand and in its price derivatives---the
three regular branches persist and the same two branches are exchanged. Thus the result
holds on a $C^1$ neighborhood of the constructed reduced system.

Two kinds of control are required. Closeness in function values keeps equilibria confined
near the original equilibrium locus and prevents new zeros from entering the target
region. Closeness of price derivatives preserves nonsingularity of the equilibrium
Jacobians. This is why the natural robustness topology is $C^1$, rather than merely $C^0$.

\subsection{Robustness at the aggregate level}

Fix $L\ge3$. Let $F=F^{[L]}$ and let
$M=F^{-1}(0)=\eco_{K_L}\subset\operatorname{int}K_L\times A$. Since $M$ is compact and $D_xF$ is invertible on $M$, compactness gives a uniform regularity margin:
\begin{equation}\label{eq:mu}
\mu=\min_{(x,\lambda)\in M}\sigma_{\min}(D_xF(x,\lambda))>0.
\end{equation}

The regularity margin is needed only near the equilibrium locus: the reduced Jacobian may
be singular at nonequilibrium prices elsewhere in $K_L$. Choose a relatively open
neighborhood $N$ of $M$ with
$\overline N\subset\operatorname{int}K_L\times A$ and
$\sigma_{\min}(D_xF)\ge\mu/2$ on $\overline N$. A second margin isolates the equilibrium locus from the rest of the target region. Since $F$ has no zero outside $N$,
\begin{equation}\label{eq:alphaeta}
\alpha=\min_{(K_L\times A)\setminus N}|F|>0,\qquad
\varepsilon_L=\min\{\alpha/2,\mu/4\}>0.
\end{equation}

These two margins exclude exactly the events that could destroy the covering during a
small perturbation: no zero can appear away from the original branches or cross the
boundary, and no continued zero can become singular. The homotopy parameter $s$ below is
only a mathematical deformation from the original to the perturbed system; it is not an
economic parameter.

\begin{lemma}[Localization and uniform regularity along a homotopy]\label{lem:isolation}
Let $\widetilde F$ be $C^1$ near $K_L\times A$ and suppose
$\|\widetilde F-F\|_{C^1(K_L\times A)}<\varepsilon_L$. For
$F_s=(1-s)F+s\widetilde F$, every zero of $F_s$ lies in $N$, none lies on
$\partial K_L\times A$, and
$\sigma_{\min}(D_xF_s)\ge\mu/4$ at every zero.
\end{lemma}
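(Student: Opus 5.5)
The plan is to exploit the convexity of the straight-line homotopy. For each $s\in[0,1]$ we have $F_s-F=s(\widetilde F-F)$. Hence $\|F_s-F\|_{C^1(K_L\times A)}\le s\|\widetilde F-F\|_{C^1}<\varepsilon_L$. Both the $C^0$ part, $\sup|F_s-F|<\varepsilon_L$, and the derivative part, $\sup\|D_xF_s-D_xF\|<\varepsilon_L$, are therefore controlled uniformly in $s$. It thus suffices to prove the three assertions for a single $C^1$ map $G$ with $\|G-F\|_{C^1}<\varepsilon_L$, and then apply this to $G=F_s$.

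\emph{Localization.} Let $(x,\lambda)\in(K_L\times A)\setminus N$. By \eqref{eq:alphaeta},
\[
|G(x,\lambda)|\ge|F(x,\lambda)|-|G(x,\lambda)-F(x,\lambda)|>\alpha-\varepsilon_L\ge\alpha/2>0 .
\]
So every zero of $G$ in $K_L\times A$ lies in $N$. Since $\overline N\subset\operatorname{int}K_L\times A$, the set $N$ is disjoint from $\partial K_L\times A$, and no zero of $G$ lies on that boundary. This is the step where the choice of $N$ inside the interior matters. The set $(K_L\times A)\setminus N$ is compact and contains $\partial K_L\times A$, and the minimum $\alpha$ is attained there and is positive because all zeros of $F$ lie in $M\subset N$.

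\emph{Regularity.} At a zero $(x,\lambda)$ of $G$ we now know $(x,\lambda)\in N$, so $\sigma_{\min}(D_xF(x,\lambda))\ge\mu/2$. We use the Weyl-type perturbation inequality for singular values, $\sigma_{\min}(P+Q)\ge\sigma_{\min}(P)-\|Q\|_{\mathrm{op}}$. Applied with $P=D_xF$ and $Q=D_xG-D_xF$, it gives
\[
\sigma_{\min}(D_xG)\ge\mu/2-\varepsilon_L\ge\mu/2-\mu/4=\mu/4 .
\]
The inequality follows directly from $|(P+Q)w|\ge|Pw|-|Qw|$ for unit vectors $w$.

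The only point requiring care is norm bookkeeping. We must make sure that the $C^1$ norm on $K_L\times A$ dominates both $\sup|G-F|$ and the operator norm of $D_xG-D_xF$. If the paper's $C^1$ norm uses Frobenius or entrywise norms, we adjust by a fixed constant, or simply read the hypothesis as bounding these two quantities, as in the standard definition. Beyond this, no real obstacle is expected; the lemma is a routine consequence of the two margins $\alpha$ and $\mu$.
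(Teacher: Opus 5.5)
Your proof is correct and follows the paper's argument: the uniform bound $\|F_s-F\|_{C^1}\le s\|\widetilde F-F\|_{C^1}<\varepsilon_L$, the margin $\alpha$ outside $N$ for localization (with $\overline N\subset\operatorname{int}K_L\times A$ excluding zeros on $\partial K_L\times A$), and the singular-value perturbation inequality on $\overline N$ giving $\sigma_{\min}(D_xF_s)\ge\mu/2-\varepsilon_L\ge\mu/4$. You are slightly more explicit than the paper, which leaves the uniformity in $s$ and the boundary exclusion implicit.
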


\begin{proof}
Outside $N$, $|F_s|\ge\alpha-\varepsilon_L\ge\alpha/2$. On $\overline N$, the perturbation
inequality for singular values gives
$\sigma_{\min}(D_xF_s)\ge\mu/2-\varepsilon_L\ge\mu/4$.
\end{proof}

A small perturbation may move the numerical equilibrium prices. What it cannot do, under
the bounds above, is change how the three branches are connected over the parameter loop.
The permutation cannot change without a collision, a boundary crossing, or a loss of
regularity, and Lemma~\ref{lem:isolation} excludes all three events.

\begin{proposition}[Robustness of the monodromy in arbitrary dimension]\label{prop:robust}
If $\|\widetilde F-F^{[L]}\|_{C^1(K_L\times A)}<\varepsilon_L$, then the zero locus of
$\widetilde F$ in $K_L\times A$ is a three-sheeted covering isomorphic to
$\eco_{K_L}\to A$. It has one fixed index-$-1$ sheet and a connected index-$+1$ double
cover; its monodromy along a generator is the transposition of the two sheets of the double
cover. Hence every family of standard $L$-good economies whose reduced equilibrium-price
system is sufficiently $C^1$-close to $F^{[L]}$ on $K_L\times A$ exhibits the same
restricted monodromy. \end{proposition}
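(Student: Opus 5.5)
The plan is to run the homotopy $F_s=(1-s)F^{[L]}+s\widetilde F$, $s\in[0,1]$, and treat $s$ as an extra parameter. Lemma~\ref{lem:isolation} shows that the joint family is a restricted regular equilibrium-price covering over $[0,1]\times A$. The covering is then trivial in the $s$-direction, and the structure at $s=0$ (Theorem~\ref{thm:formal}) transports to $s=1$.

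Concretely, set $G(x,s,\lambda)=F_s(x,\lambda)$ on $K_L\times B'$ with $B'=[0,1]\times A$. The space $B'$ is compact, connected, locally path-connected and Hausdorff, and $G$ is $C^1$ in $x$ with $D_xG$ jointly continuous. Since $\|F_s-F\|_{C^1}\le s\|\widetilde F-F\|_{C^1}<\varepsilon_L$, Lemma~\ref{lem:isolation} applies for every $s$. Hence $G$ has no zero on $\partial K_L\times B'$, and every zero of $G$ is regular, with $\sigma_{\min}\ge\mu/4$. Surjectivity of the projection I would get as a consequence rather than check directly. The implicit function theorem makes $\pi\colon G^{-1}(0)\to B'$ a local homeomorphism, and it is proper because $G^{-1}(0)$ is closed in the compact set $K_L\times B'$. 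So $\pi$ is a finite covering over its image. That image is open, because $\pi$ is a local homeomorphism, and closed, because $\pi$ is proper. It is nonempty at $s=0$, so by connectedness of $B'$ it is all of $B'$, and $\pi$ is a covering of constant degree $3$, the degree at $s=0$.

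Next comes the transport of the structure. The inclusions $i_s\colon A\to B'$, $\lambda\mapsto(s,\lambda)$, are homotopic, and $[0,1]\times A$ deformation retracts onto $\{0\}\times A$. A covering over a product with an interval is isomorphic to the pullback of its restriction to either end. I would make this explicit by lifting the straight paths $s\mapsto(s,\lambda)$ through each point of the fibre over $(0,\lambda)$. This gives a homeomorphism $\eco_{K_L}=\pi^{-1}(\{0\}\times A)\to\widetilde F^{-1}(0)\cap(K_L\times A)$ that commutes with the projections to $A$. Continuity of this map in $\lambda$ follows from the homotopy lifting property. Because it is an isomorphism of coverings over $A$, it intertwines the monodromy representations. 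The monodromy along the generator is therefore again the transposition of the images of $q_\pm$, with the image of $q_0$ fixed. The decomposition into a trivial sheet and a connected double cover transfers as well, since the number of connected components and the monodromy orbits are invariants of covering isomorphism.

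For the indices, $\sgn\det(-D_xG)$ is continuous and nonvanishing on $G^{-1}(0)$, so it is constant along each lifted path. Each sheet at $s=1$ therefore inherits the index of its $s=0$ sheet: $-1$ for the fixed sheet and $+1$ for the double cover, by Lemma~\ref{lem:index}. The economic corollary follows by applying this to any family whose reduced system lies within $\varepsilon_L$ of $F^{[L]}$. The only slightly delicate step is the covering claim over $[0,1]\times A$ with boundary in $s$. I expect this to be routine, since the argument after Definition~\ref{def:restrictedcover} uses only local compactness and the implicit function theorem, and both apply at $s=0,1$ once $F_s$ is understood to be defined for $s$ in a slightly larger interval.
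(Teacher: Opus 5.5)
Your proposal is correct and follows essentially the same route as the paper: the straight-line homotopy $F_s$, Lemma~\ref{lem:isolation} to obtain a proper local homeomorphism of degree three over $A\times[0,1]$, triviality of that covering in the $s$-direction to identify the $s=0$ and $s=1$ restrictions as isomorphic coverings, and constancy of $\sgn\det(-D_xF_s)$ along the homotopy for the indices. Your open--closed argument for surjectivity and your homotopy-lifting construction of the explicit covering isomorphism supply details that the paper's proof only asserts.
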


\begin{proof}
Consider
\[
\mathscr M=\{(x,\lambda,s)\in K_L\times A\times[0,1]:F_s(x,\lambda)=0\}.
\]
Lemma~\ref{lem:isolation} places $\mathscr M$ in
$\operatorname{int}K_L\times A\times[0,1]$ and makes $D_xF_s$ invertible throughout.
Thus projection $\mathscr M\to A\times[0,1]$ is a proper local homeomorphism, hence a
finite covering. Its degree is three because the slice at $s=0$ is the covering of
Theorem~\ref{thm:formal}. Since $[0,1]$ is contractible, the restrictions at $s=0$ and $s=1$ are isomorphic coverings,
so their monodromy representations have the same cycle type. The determinant sign cannot
change along the homotopy, and therefore the indices remain $+1,+1,-1$.
\end{proof}

\subsection{From economic perturbations to aggregate perturbations}

Proposition~\ref{prop:robust} is stated at the aggregate level. To make the robustness claim
economically meaningful, we must connect small changes in individual demands and
endowments to small changes in aggregate excess demand. Aggregation is a finite sum, so on
a compact price--wealth--parameter region small $C^1$ changes in a finite smooth consumer
block produce a small $C^1$ change in the reduced aggregate system.

We perturb only the explicit smooth consumer block. The fixed Debreu residual consumers are
kept unchanged and enter through their known aggregate contribution, denoted by the common
background field $H$ below. Thus no differentiability assumption on their individual
demands is required.

\begin{lemma}[Perturbations of a smooth consumer block]\label{lem:demandbridge}
Let $\mathcal P_L=\{(1+x_1,\ldots,1+x_{L-1},1):x\in K_L\}$. Suppose the reduced systems of
two parameterized economies can be written on $K_L\times A$ as
\[
F=H+\sum_{i\in J}F_i,
\qquad
\widetilde F=H+\sum_{i\in J}\widetilde F_i,
\]
where $H$ is a common $C^1$ {reduced} background field and $J$ is a finite set of consumers
whose primitives may differ. For $i\in J$, suppose the Marshallian demands
$x^i(p,m,\lambda)$ and $\widetilde x^i(p,m,\lambda)$ are $C^1$ {on a neighbourhood of the compact set}
$\mathcal P_L\times[m_-,m_+]\times A$, and the parameterized endowments
$\omega^i,\widetilde\omega^i\colon A\to\R^L_+$ are $C^1$. {Assume moreover that, for every
$i\in J$, $p\in\mathcal P_L$, and $\lambda\in A$, both induced wealth levels
$p\cdot\omega^i(\lambda)$ and $p\cdot\widetilde\omega^i(\lambda)$ belong to
$[m_-,m_+]\Subset(0,\infty)$.} Then, for every $\varepsilon>0$, there exists $\delta>0$
such that
\[
\sum_{i\in J}\left(
\|\widetilde x^i-x^i\|_{C^1(\mathcal P_L\times[m_-,m_+]\times A)}
+\|\widetilde\omega^i-\omega^i\|_{C^1(A)}\right)<\delta
\]
implies
\[
\|\widetilde F-F\|_{C^1(K_L\times A)}<\varepsilon.
\]
Thus the reduced system depends continuously in the $C^1$ topology on the demands and
endowments of the perturbed finite consumer block. No differentiability assumption is
required for consumers whose aggregate contribution is contained in the unchanged block
$H$.
\end{lemma}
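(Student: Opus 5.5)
The plan is to write the difference $\widetilde F-F=\sum_{i\in J}(\widetilde F_i-F_i)$ explicitly in terms of individual excess demands and then estimate each term by a composition argument on a compact set. First I make the reduced contribution of consumer $i$ explicit. On the normalization $p_L=1$, a point $x\in K_L$ corresponds to $p(x)=(1+x_1,\ldots,1+x_{L-1},1)\in\mathcal P_L$. The reduced system consists of the first $L-1$ components of excess demand. So
\[
F_i(x,\lambda)=\Pi\bigl[x^i\bigl(p(x),p(x)\cdot\omega^i(\lambda),\lambda\bigr)-\omega^i(\lambda)\bigr],
\]
where $\Pi$ denotes projection onto the first $L-1$ coordinates, and $\widetilde F_i$ is defined analogously. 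The map $x\mapsto p(x)$ is affine with bounded derivative. The wealth map $w_i(x,\lambda)=p(x)\cdot\omega^i(\lambda)$ is $C^1$, and by hypothesis it takes values in $[m_-,m_+]$. Hence every composition is evaluated inside the compact set $\mathcal P_L\times[m_-,m_+]\times A$ on which the $C^1$ norms are taken.

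Next I split each difference as $\widetilde F_i-F_i=\mathrm{I}+\mathrm{II}-\Pi(\widetilde\omega^i-\omega^i)$. Term $\mathrm{I}=\Pi[\widetilde x^i(p,\widetilde w_i,\lambda)-x^i(p,\widetilde w_i,\lambda)]$ compares the two demands at the same argument. Its sup norm is bounded by $\|\widetilde x^i-x^i\|_{C^0}$. By the chain rule, its first derivatives in $(x,\lambda)$ are bounded by $\|\widetilde x^i-x^i\|_{C^1}$ times $1+\|Dp\|+\|D\widetilde w_i\|$. Here $\|D\widetilde w_i\|$ is controlled by $\|\omega^i\|_{C^1}+\delta$ together with the compactness of $\mathcal P_L$. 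Term $\mathrm{II}=\Pi[x^i(p,\widetilde w_i,\lambda)-x^i(p,w_i,\lambda)]$ compares one fixed demand at two wealth levels. For its $C^0$ part I use the mean value theorem in $m$; this is legitimate because $[m_-,m_+]$ is an interval, so it is convex in the wealth variable. The bound is $\|\partial_m x^i\|_{C^0}\,|\widetilde w_i-w_i|\le C\,\|\widetilde\omega^i-\omega^i\|_{C^0}$.

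The derivative of $\mathrm{II}$ is the main obstacle. It equals
\[
D x^i(p,\widetilde w_i,\lambda)\cdot(\ldots,D\widetilde w_i)-D x^i(p,w_i,\lambda)\cdot(\ldots,Dw_i).
\]
To control it I need $Dx^i$ to be \emph{uniformly continuous} on the compact set, not merely bounded. The hypothesis that $x^i$ is $C^1$ on a neighbourhood of the compact set gives exactly this, by Heine--Cantor. Given $\varepsilon$, I choose $\delta$ so that $|\widetilde w_i-w_i|<\delta'$ forces $|Dx^i(\cdot,\widetilde w_i,\cdot)-Dx^i(\cdot,w_i,\cdot)|$ to be small. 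The remaining piece, $Dx^i\cdot(D\widetilde w_i-Dw_i)$, is bounded by $\|Dx^i\|_{C^0}\cdot C\|\widetilde\omega^i-\omega^i\|_{C^1}$. Since $x^i$ is the unperturbed, fixed demand, the required modulus of continuity depends only on the original family. This makes $\delta$ uniform over all perturbations.

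Finally I sum over the finite set $J$, noting that the common field $H$ cancels identically in $\widetilde F-F$. This gives $\|\widetilde F-F\|_{C^1(K_L\times A)}\le\sum_{i}\bigl(C_1\|\widetilde x^i-x^i\|_{C^1}+C_2\|\widetilde\omega^i-\omega^i\|_{C^1}\bigr)+\theta(\delta)$, where $\theta(\delta)\to0$ comes from the modulus of continuity. Choosing $\delta$ small then yields the claim. Because $H$ cancels, no regularity of the Debreu residual consumers is ever used.
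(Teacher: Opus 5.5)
Your proposal is correct and follows essentially the same route as the paper. Both arguments write $z^i(p,\lambda)=x^i\bigl(p,p\cdot\omega^i(\lambda),\lambda\bigr)-\omega^i(\lambda)$, apply the chain rule on the compact price--wealth--parameter set, and use the uniform continuity of the \emph{reference} demand's derivatives to make $\delta$ uniform; they then sum over the finite set $J$ and note that $H$ cancels. Your split into terms I and II, with the unperturbed $x^i$ compared at the two wealth levels, just spells out the step the paper compresses into one sentence.
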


\begin{proof}
For $i\in J$, write
\[
z^i(p,\lambda)=x^i\bigl(p,p\cdot\omega^i(\lambda),\lambda\bigr)-\omega^i(\lambda),
\]
and define $\widetilde z^i$ analogously. The maps
\[
(p,\lambda)\longmapsto
\bigl(p,p\cdot\omega^i(\lambda),\lambda\bigr)
\]
depend continuously in the $C^1$ topology on $\omega^i$. Since the derivatives of each
reference demand $x^i$ are uniformly continuous on the compact
price--wealth--parameter set, the chain rule implies that
\[
\|\widetilde x^i-x^i\|_{C^1}\to0,
\qquad
\|\widetilde\omega^i-\omega^i\|_{C^1(A)}\to0
\]
entails
\[
\|\widetilde z^i-z^i\|_{C^1(\mathcal P_L\times A)}\to0.
\]
Finite summation over $J$ and restriction to the first $L-1$ components on the normalized
price slice preserve $C^1$ convergence. Finally, the common block cancels identically:
\[
\widetilde F-F=\sum_{i\in J}(\widetilde F_i-F_i).
\]
This proves the stated $C^1$ continuity.
\end{proof}

{
In the economies constructed above, the unchanged residual consumers contribute on the
target cone the explicitly known full aggregate excess-demand field $-kR_L$. Restricting
its first $L-1$ components to the normalized price slice gives the common reduced background
field $H$ in Lemma~\ref{lem:demandbridge}, which is $C^\infty$ on $K_L\times A$. Thus the
reference reduced system is $C^\infty$ there even though differentiability of each residual
consumer's Debreu demand is neither assumed nor needed.
}

For the reference family, compactness and strict positivity of wealth allow the interval
$[m_-,m_+]$ in Lemma~\ref{lem:demandbridge} to be chosen with slack; sufficiently small
endowment perturbations therefore preserve the wealth-range hypothesis.

Combining aggregate structural stability with continuity of aggregation gives the
economically relevant robustness statement.

\begin{corollary}[Economic perturbations preserve the transposition]
\label{cor:econrobust}
Every sufficiently small $C^1$ perturbation of a finite smooth consumer block and its
parameterized endowments, while a common aggregate background block is held fixed, induces
the same restricted monodromy. This includes simultaneous changes in several consumers'
preferences whenever those changes generate sufficiently small $C^1$ changes in their
Marshallian demands on the relevant compact price--wealth set.
\end{corollary}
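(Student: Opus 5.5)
The plan is to chain Lemma~\ref{lem:demandbridge} with Proposition~\ref{prop:robust}. Concretely, I will take the reference family, either the preference family of Theorem~\ref{thm:parametricL} or the endowment family of Theorem~\ref{thm:endowmentreal}, and split its consumers into two parts. The first is the explicit smooth block $J$, namely the Mantel consumers with gauge demands \eqref{eq:demandL} together with their endowments. The second is the Debreu residual block, which is held fixed. On the target cone the residual block contributes the known field $-kR_L$, and its reduced restriction to $K_L\times A$ is the background $H$. With this split the reference reduced system is exactly $F^{[L]}=H+\sum_{i\in J}F_i$. The identity holds because Theorems~\ref{thm:parametricL} and~\ref{thm:endowmentreal} give exact realization on $\Gamma_L$, and the normalized slice $\mathcal P_L$ lies in that cone.

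Next I will check the hypotheses of Lemma~\ref{lem:demandbridge} for the reference block.
\begin{enumerate}[label=\textup{(\roman*)},leftmargin=2.2em]
\item The demands $x^i(p,m,\lambda)$ are $C^\infty$ jointly on $p\gg0$, $m>0$ and $\lambda$ in a neighbourhood of $A$, by part (d) of Theorem~\ref{thm:parametric} and its $L$-dimensional analogue. In the endowment family the preferences are fixed, so smoothness in $\lambda$ there is trivial.
\item The endowments are $C^1$ in $\lambda$. They are constant, or equal to $\alpha_r(S(\lambda))k\mathbf e_L$, and $S$ is polynomial.
\item Wealth must stay in a compact interval. Since $\mathcal P_L\times A$ is compact and every wealth $p\cdot\omega^i(\lambda)$ is strictly positive there (for the split consumers this uses the uniform lower bound on $\alpha_r$), I can pick $[m_-,m_+]\Subset(0,\infty)$ strictly containing all reference wealths.
\end{enumerate}
A sufficiently small $C^1$ endowment perturbation keeps the perturbed wealths inside $[m_-,m_+]$, since $|p\cdot(\widetilde\omega^i-\omega^i)|$ is bounded uniformly on $\mathcal P_L\times A$ by a constant times $\|\widetilde\omega^i-\omega^i\|_{C^0}$.

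Now set $\varepsilon=\varepsilon_L$ from \eqref{eq:alphaeta}. Lemma~\ref{lem:demandbridge} yields a $\delta>0$ such that any perturbation of size below $\delta$ in demands and endowments produces a reduced system $\widetilde F$ with $\|\widetilde F-F^{[L]}\|_{C^1(K_L\times A)}<\varepsilon_L$. Proposition~\ref{prop:robust} then says the perturbed restricted covering is isomorphic to the original one. It has the fixed index $-1$ sheet and the connected double cover, and its monodromy along a generator is the same transposition. The simultaneous multi-consumer clause needs no separate argument, because $J$ may be any finite set and the hypothesis is only on the sum of their $C^1$ deviations.

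The main obstacle is a regularity mismatch. The corollary is stated for perturbations "of preferences," but Lemma~\ref{lem:demandbridge} requires the perturbed Marshallian demands to be $C^1$ on a neighbourhood of the compact set and uniformly close there. I will therefore phrase the corollary's hypothesis directly in terms of demands, as its statement already does ("whenever those changes generate sufficiently small $C^1$ changes in their Marshallian demands"), so no preference-to-demand continuity theorem is needed. The only remaining care is that the closeness must hold on a neighbourhood, so that the chain rule in Lemma~\ref{lem:demandbridge} applies. I will handle this by slightly enlarging $\mathcal P_L$ and $[m_-,m_+]$.
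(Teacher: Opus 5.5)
Your proposal is correct and follows the paper's proof: apply Lemma~\ref{lem:demandbridge} with $\varepsilon=\varepsilon_L$, then invoke Proposition~\ref{prop:robust}. The paper records the same two hypothesis checks you make, namely the Debreu residual block as the fixed background $H$ and the slack in $[m_-,m_+]$, just before the corollary; your enlargement of $\mathcal P_L$ and $[m_-,m_+]$ is harmless but unnecessary, since the lemma requires $C^1$ regularity on a neighbourhood but measures closeness only on the compact set itself.
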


\begin{proof}
Apply Lemma~\ref{lem:demandbridge} with $\varepsilon=\varepsilon_L$. For sufficiently small
$C^1$ perturbations of the demands and endowments in the finite block $J$, one has
\[
\|\widetilde F-F\|_{C^1(K_L\times A)}<\varepsilon_L.
\]
Proposition~\ref{prop:robust} then gives the same restricted monodromy.
\end{proof}

\section{Why two goods are not enough}\label{sec:twogood}

The examples of Toda and Walsh \cite{todawalsh17} show that explicit two-good
economies may already have three or more equilibria. The obstruction to monodromy is
therefore not a lack of multiplicity. It is the one-dimensional geometry of normalized
prices.

\subsection{The ordering obstruction}

Suppose that at every parameter value the equilibrium prices can be written from smallest
to largest. As long as the equilibria remain distinct and regular, their ordering from
lowest to highest is preserved under continuation. A change in rank would require two
equilibrium branches to meet, hence a collision. Equivalently, the rank
\[
r(p,\lambda)=1+\#\{q\in\operatorname{Eq}(\lambda):q<p\}
\]
is preserved under regular continuation. The following elementary covering theorem
formalizes this observation.

\begin{theorem}[Ordered finite coverings are trivial]\label{thm:orderedcover}
Let $A$ be connected and locally path connected, let $I\subset\R$ be an interval, and let
$E\subset I\times A$.  Suppose that the restriction of the projection
\[
\pi\colon E\longrightarrow A,\qquad \pi(x,\lambda)=\lambda,
\]
is a finite covering of degree $m$.  Then the covering is trivial.  More precisely, there
exist continuous functions
\[
s_{1},\ldots,s_{m}\colon A\longrightarrow I,
\qquad
s_{1}(\lambda)<\cdots<s_{m}(\lambda)
\]
such that
\[
E=\bigsqcup_{j=1}^{m}\{(s_{j}(\lambda),\lambda):\lambda\in A\}.
\]
Consequently the monodromy representation
\[
\rho\colon\pi_{1}(A,\lambda_{0})\longrightarrow \mathfrak S_{m}
\]
is trivial.
\end{theorem}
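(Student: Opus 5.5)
The plan is to define the sections $s_j$ directly by rank and show they are continuous; triviality and trivial monodromy then follow at once. For each $\lambda\in A$ the fibre $\pi^{-1}(\lambda)$ has exactly $m$ points, because a finite covering over a connected base has constant degree. These points have the form $(x,\lambda)$ with distinct $x\in I$, since the first coordinate determines the point of $E$ over $\lambda$. I would list them in increasing order of $x$ and set $s_j(\lambda)$ to be the $j$-th smallest. This gives functions $s_1<\cdots<s_m$ with $E=\bigsqcup_j\operatorname{graph}(s_j)$, and the graphs are disjoint because the values are strictly ordered.

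The main step is continuity of each $s_j$, and I would prove it locally using the covering structure. Take $\lambda_0\in A$ and an evenly covered, path-connected open neighbourhood $V$; local path-connectedness of $A$ makes such a $V$ available. Then $\pi^{-1}(V)=\bigsqcup_{k=1}^m W_k$ with each $\pi|_{W_k}\colon W_k\to V$ a homeomorphism. The inverse is $\lambda\mapsto(t_k(\lambda),\lambda)$, where $t_k\colon V\to I$ is continuous as the first coordinate of a continuous map. For each $\lambda\in V$ the values $t_1(\lambda),\dots,t_m(\lambda)$ are exactly the $m$ distinct fibre points, so $t_k(\lambda)\neq t_l(\lambda)$ for $k\ne l$.

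Since $V$ is connected, each continuous function $t_k-t_l$ never vanishes on $V$, so its sign is constant there by the intermediate value theorem along paths. Hence the ordering permutation of $(t_1,\dots,t_m)$ is the same at every point of $V$. After relabelling we get $t_1<\dots<t_m$ on all of $V$, so $s_j=t_j$ on $V$ and each $s_j$ is continuous near $\lambda_0$. As continuity is local, the $s_j$ are continuous on $A$. The only real obstacle is this constant-ordering argument, which is where the one-dimensionality of $I$ is used. In $\R^2$ there is no order to preserve, and this is exactly what the formal construction exploits.

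It remains to conclude. The map $\Psi\colon A\times\{1,\dots,m\}\to E$, $(\lambda,j)\mapsto(s_j(\lambda),\lambda)$, is a continuous bijection that commutes with the projections. Its inverse is continuous because on each $W_k$ it equals $(\pi|_{W_k},\text{const})$. So $\Psi$ is an isomorphism with the trivial covering. For the monodromy, take a loop $\gamma$ at $\lambda_0$ and a fibre point $(s_j(\lambda_0),\lambda_0)$. The path $t\mapsto(s_j(\gamma(t)),\gamma(t))$ is a continuous lift starting there, and by uniqueness of path lifting it is the lift. Its endpoint is $(s_j(\lambda_0),\lambda_0)$, so $\rho([\gamma])$ fixes every fibre point and $\rho$ is trivial.
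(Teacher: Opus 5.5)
Your proposal is correct and follows essentially the paper's argument: you define the sections as order statistics of the fibre, prove their continuity locally from the sheets over an evenly covered neighbourhood, and read off trivial monodromy from the global sections. The differences are cosmetic. You get a constant ordering on the neighbourhood from its connectedness (the sign of $t_k-t_l$ cannot change), whereas the paper shrinks the neighbourhood so that each local sheet stays inside one of a family of pairwise disjoint ordered intervals. You also spell out the trivialization homeomorphism and the appeal to uniqueness of path lifting, which the paper leaves implicit.
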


\begin{proof}
For each $\lambda\in A$, write the fibre in increasing order as
\[
\pi^{-1}(\lambda)=\{(x_{1}(\lambda),\lambda),\ldots,(x_{m}(\lambda),\lambda)\},
\qquad
x_{1}(\lambda)<\cdots<x_{m}(\lambda).
\]
It remains to prove that the order statistics $x_{j}$ are continuous.  Fix
$\lambda_{0}\in A$.  Since $\pi$ is a covering, there is an evenly covered neighbourhood
$U$ of $\lambda_{0}$ and continuous local sections
\[
\sigma_{1},\ldots,\sigma_{m}\colon U\longrightarrow E
\]
whose images are the sheets over $U$.  Write
\[
\sigma_{k}(\lambda)=(y_{k}(\lambda),\lambda).
\]
After relabelling, assume
$y_{1}(\lambda_{0})<\cdots<y_{m}(\lambda_{0})$.  Choose pairwise disjoint open intervals
$J_{1},\ldots,J_{m}\subset I$ with
$y_{j}(\lambda_{0})\in J_{j}$ and every point of $J_{j}$ smaller than every point of
$J_{j+1}$.  By continuity, after shrinking $U$ we have $y_{j}(U)\subset J_{j}$ for every
$j$.  Hence the order of the local sheets cannot change on $U$, and the $j$th order
statistic coincides there with $y_{j}$.  Thus every $x_{j}$ is locally continuous and
therefore continuous on $A$.  Their graphs are disjoint sections whose union is $E$.
A loop preserves each global section, so its action on the fibre is the identity.
\end{proof}

With two commodities, normalization leaves only one independent relative price. The
positive normalized price space is therefore an interval, and every finite regular fibre
inherits its ordinary left-to-right order.

\begin{corollary}[Two-good impossibility]\label{cor:twogood}
Consider a continuous family of two-good pure-exchange economies indexed by a connected,
locally path-connected, locally compact Hausdorff parameter space $A$. Normalize prices
either by $p_2=1$, so that the relative price lies in $(0,\infty)$, or by $p_1+p_2=1$, so
that it lies in $(0,1)$. If the family is a restricted regular equilibrium-price covering
over a compact positive-price region in the sense of Definition~\ref{def:restrictedcover},
then its real monodromy is trivial.

If individual demands are single valued, the same conclusion holds for the full
price-allocation equilibrium correspondence, because the allocation is determined by the
price.
\end{corollary}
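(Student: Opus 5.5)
The plan is to reduce the corollary directly to Theorem~\ref{thm:orderedcover}. I will first fix the normalization. With two goods, either choice, $p_2=1$ or $p_1+p_2=1$, identifies positive price rays homeomorphically with an open interval: $(0,\infty)$ in the first case and $(0,1)$ in the second. The two normalizations are related by the increasing homeomorphism $t\mapsto t/(1+t)$. Hence it is enough to treat one of them, and ordering and monodromy statements transfer between the two.

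Under this identification the compact positive-price region $K$ becomes a compact subset of the interval, and I take $I$ to be that interval. Next I use the hypothesis that the family is a restricted regular equilibrium-price covering. By the discussion after Definition~\ref{def:restrictedcover}, the projection $\pi_K\colon E_K(F)\to A$ is then a finite covering, whose degree $m$ is constant because $A$ is connected. Moreover, $E_K(F)$ is a subset of $I\times A$. All hypotheses of Theorem~\ref{thm:orderedcover} are therefore met: $A$ is connected and locally path connected, and $E\subset I\times A$ covers $A$ finitely. The theorem shows that the covering splits into $m$ ordered continuous sections. Consequently the monodromy representation is trivial, and every loop fixes every restricted equilibrium price. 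I do not even need $K$ to be an interval: only the embedding of the fibres into a totally ordered one-dimensional space is used. The proof of Theorem~\ref{thm:orderedcover} relies on local sections together with the order of $\R$, so it works for any $E\subset I\times A$.

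For the price-allocation statement, I consider the map from price equilibria to price-allocation equilibria, $(p,\lambda)\mapsto(p,(x^i_\lambda(p,p\cdot\omega^i_\lambda))_i,\lambda)$. When demands are single valued this map is a bijection onto the price-allocation equilibrium set over $K$, and it commutes with the projection to $A$. Its inverse, the projection forgetting the allocation, is continuous. I therefore argue at the level of lifts. Any continuous lift of a loop into the price-allocation correspondence projects to a continuous lift into the price covering, and the allocation is recovered pointwise from the price. Whenever the price-allocation set is itself a covering, uniqueness of lifts shows that its monodromy equals the price monodromy transported through the bijection, and that monodromy is trivial. The same conclusion holds if one simply defines continuation of an allocation as continuation of its supporting price.

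I expect this last step to be the only delicate point. Continuity of the allocation in $(p,\lambda)$ is not literally part of the hypotheses. The clean way around this is to note that the monodromy permutation is determined by the endpoints of the price lifts. The allocations attached to the initial and final fibre points are the same function of the price, evaluated in the same economy at $\lambda_0$. Hence a trivial price permutation forces a trivial allocation permutation, and no continuity of demand is needed beyond what defines the lifts.
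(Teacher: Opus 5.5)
Your proposal is correct and follows the paper's proof. Two-good normalized prices form an interval, so Theorem~\ref{thm:orderedcover} applies directly to $E_K(F)\subset I\times A$. The allocation statement then follows because projecting a price-allocation equilibrium to its price is continuous and injective on each fibre; your discussion of lifts and of demand continuity simply spells out what the paper's one-line remark leaves implicit.
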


\begin{proof}
The normalized positive-price space is an interval, so Theorem~\ref{thm:orderedcover}
applies to the equilibrium-price locus.  Under single-valued demand, projection from a
full equilibrium to its price is one-to-one on each fibre.
\end{proof}

Combining the ordering obstruction with the constructions above gives the sharp commodity
threshold.

\begin{corollary}[Sharp threshold]\label{cor:minimalgoods}
Within the class of restricted regular equilibrium-price coverings considered here,
nontrivial real monodromy exists exactly for economies with $L\ge3$ goods.
\end{corollary}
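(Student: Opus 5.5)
The sharp threshold combines two results already established: an impossibility result for few goods and an existence result for many. The proof therefore has two directions, and we treat them separately.

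\emph{Upper direction ($L\ge3$ suffices).} For every $L\ge3$ we invoke Theorem~\ref{thm:main}. It supplies a continuous periodic family of standard pure-exchange economies over $S^1$, namely the endowment family or the preference family, together with the compact region $K_L$. On the cone over $K_L$, the aggregate excess demand coincides with the formal field $Z^{[L]}$. We then check that this family is a restricted regular equilibrium-price covering in the sense of Definition~\ref{def:restrictedcover}:
\begin{itemize}
\item[(i)] boundary nonvanishing follows from Lemma~\ref{lem:cover};
\item[(ii)] regularity of every zero in $K_L$ follows from Lemma~\ref{lem:index};
\item[(iii)] surjectivity holds because every fibre contains three points.
\end{itemize}
The parameter space $S^1$ is connected, locally path-connected and compact Hausdorff, so the definition applies. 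Part (e) of Theorem~\ref{thm:main} shows that the monodromy along the generator is a transposition, which is nontrivial. If the annular formulation is preferred instead, Proposition~\ref{prop:annular} together with Theorem~\ref{thm:formal} gives the same conclusion.

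\emph{Lower direction ($L\le2$ cannot work).} For $L=2$ we apply Corollary~\ref{cor:twogood}. After normalization the relative price lies in an interval, so Theorem~\ref{thm:orderedcover} makes any such restricted regular covering trivial, and its monodromy is trivial. For $L=1$ the normalized price space is a single point. Walras' law then forces $Z\equiv0$ at the unique price ray, so every fibre has at most one point. A covering with fibres of at most one point is one-sheeted, so the symmetric group acting on the fibre is trivial. If one objects that the reduced system has dimension $L-1=0$, this case is degenerate, and we remark that it is covered either by convention or by the same ordering argument applied to a point, which is a degenerate interval.

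\emph{Main obstacle.} The only point needing care is that the hypotheses line up. The class of coverings in Corollary~\ref{cor:twogood} (compact region, continuous families, and the topological assumptions on the base) must be the same class in which Theorem~\ref{thm:main} produces its example. Here the base $S^1$ with $K_L$ fits Definition~\ref{def:restrictedcover} directly, so the two directions concern the same class and together give the stated sharp threshold.
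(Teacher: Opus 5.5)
Your proposal is correct and follows the paper's own argument. Theorem~\ref{thm:main} supplies the example for every $L\ge3$, Corollary~\ref{cor:twogood} (via Theorem~\ref{thm:orderedcover}) rules out $L=2$, and for $L=1$ the normalized price space is a single point, so fibres are singletons. Your explicit check of conditions (i)--(iii) of Definition~\ref{def:restrictedcover} only unpacks what Theorem~\ref{thm:main}(d) already asserts, and the appeal to Walras' law in the one-good case is harmless but unnecessary.
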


\begin{proof}
With one good there is no relative-price variable. With two goods,
Corollary~\ref{cor:twogood} gives trivial monodromy. For every $L\ge3$,
Theorem~\ref{thm:main} supplies a one-parameter periodic example.
\end{proof}

\subsection{Multiplicity without monodromy: Toda--Walsh}

Toda and Walsh's multiple-equilibrium examples \cite{todawalsh17} illustrate the distinction
between multiplicity and monodromy: whenever their real equilibria form a finite restricted
regular covering, the ordering theorem applies.

\begin{remark}[Real versus complex monodromy]\label{rem:realcomplex}
In Toda and Walsh's quadratic example the two non-unit equilibria are
\[
p_{\pm}(\alpha)=\frac{1-2k\pm\sqrt{1-4k}}{2k},
\qquad k=2\alpha(1-\alpha),\qquad p_-p_+=1,
\]
where $\alpha$ is the preference weight in their symmetric quadratic utility specification.
The real three-equilibrium region is $1-8\alpha(1-\alpha)>0$, and on each connected
component $p_-(\alpha)<1<p_+(\alpha)$. This order gives a global real labelling. If
$\alpha$ is complexified, a loop around a zero of $1-8\alpha(1-\alpha)$ exchanges the two
algebraic branches through the square root. That is complex algebraic monodromy, not real
economic monodromy: no corresponding loop stays in the real parameter region while keeping
the roots real, positive, and regular. The contrast explains why the adjective ``real'' is substantive here. In the three-good construction, two normalized price coordinates provide a real plane in which branches can move around a degeneracy without colliding. In the two-good case there is only a real line: one cannot pass around a collision point; one can only cross it.
\end{remark}

Exact factorizations for the remaining two-good benchmarks are collected in
Appendix~\ref{app:twogood}.

\begin{remark}[Why a passive third good is insufficient]\label{rem:passivegood}
Adding a third good through an independent equation, for example
\[
z_{\mathrm{TW}}(p;\lambda)=0,
\qquad q=q_{0},
\]
does not create monodromy.  The equilibrium points remain on the ordered line
$\{(p,q_{0})\}$, so the preceding obstruction survives.  A genuine three-good extension
of a Toda--Walsh class must couple the two normalized prices and create a two-dimensional
solution geometry around a discriminant.  \end{remark}

The sharp threshold is therefore geometric rather than multiplicity-based: two goods leave
one normalized price dimension, where regular equilibria are globally rankable; three goods
provide two normalized price dimensions, the minimum needed for real equilibrium branches
to wind around one another.

\section{Monodromy with prescribed equilibrium prices}\label{sec:anchors}

The particular equilibrium prices used in Section~\ref{sec:formal} were chosen only for
convenience. The next result shows that their location is inessential. Given any three
distinct positive price rays, we can construct a new pure-exchange family whose restricted
equilibrium-price fibre at the base parameter consists exactly of those three rays, with two
exchanged by continuation around the parameter loop and the third fixed.

This result allows the initial fibre to be chosen, for example, from equilibrium prices
appearing in published general-equilibrium examples. The resulting monodromic family is
nevertheless a new economy: the construction does not modify or continue the source economy
from which those prices may have been taken.

The possibility of prescribing the equilibrium-price set of a single exchange economy is
classical. Mas-Colell shows that any nonempty compact set of normalized prices can arise as
the equilibrium-price set of an exchange economy and, for regular configurations with more
than two commodities, that the fixed-point-index condition essentially exhausts the
remaining restrictions \cite{mascolell77}. Our result concerns a different question. We
prescribe three prices in the initial fibre of a continuous family and, in addition,
prescribe their continuation around a loop: two branches are exchanged and the third is
fixed. The result concerns the restricted price region under construction, not the complete
equilibrium set of the economy.

The proof uses the following parametric version of Mantel's constructive realization. It is
a dimension-$L$ extension of the analytic argument proved in Appendix~\ref{app:analytic};
unlike the special realization used for the main theorem, it does not concentrate all
parameter dependence in a single consumer.

\begin{proposition}[Generic parametric realization]\label{prop:dimfree}
Let $L\ge2$, let $\Lambda\Subset W\subset\R^{d}$ with $W$ open, and let $\Gamma$ be a closed
cone of strictly positive prices whose image in the normalized simplex is compactly
contained in its interior. Suppose
\[
Z\colon \R^{L}_{++}\times W\longrightarrow\R^{L}
\]
is jointly $C^{\infty}$, homogeneous of degree zero in prices, and satisfies Walras' law
$p\cdot Z(p;\lambda)=0$. Then there is a family of finite-consumer pure-exchange economies,
with parameter-independent endowments, whose aggregate excess demand equals
$Z(\,\cdot\,;\lambda)$ on $\Gamma$ for every $\lambda\in\Lambda$. The family can be chosen
with at most $2L$ consumers. The first $L$ consumers admit utility representations jointly
continuous on $\R^L_+\times\Lambda$ and jointly $C^\infty$, strongly monotone, and strictly
quasiconcave on $\R^L_{++}\times W$; the remaining $L$ consumers have fixed continuous,
monotone, strictly convex preferences. Only the first block may depend on $\lambda$.
\end{proposition}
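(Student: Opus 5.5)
The plan is to repeat the proof of Theorem~\ref{thm:parametricL} with the canonical potentials of Lemma~\ref{lem:walraspotential} in place of the field-specific decomposition. By that lemma, $\varphi^i_\lambda\deq Z_i(\cdot;\lambda)$ satisfies $\sum_i p_i\nabla\varphi^i_\lambda=-Z(\cdot;\lambda)$ identically on $\R^L_{++}\times W$. Each $\varphi^i_\lambda$ is jointly $C^\infty$ and hom-$0$, so no closedness or integrability condition needs to be checked. The price of this choice is that every $\varphi^i_\lambda$ may depend on $\lambda$. That is acceptable here, since the statement only requires that the residual block be fixed.

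The steps follow Theorem~\ref{thm:parametricL}.
\begin{enumerate}[label=\textup{(\arabic*)},leftmargin=2.2em]
\item Fix a smooth hom-$0$ cutoff $\chi$ in normalized coordinates. It should equal $1$ on a neighbourhood of $\Gamma$ and have support compactly contained in the interior of the simplex. Set $\ph^{\,i}_\lambda=\chi\varphi^i_\lambda$. On $\Gamma$ the potential identity still holds, and off $\Gamma$ these functions vanish near the simplex boundary.
\item Choose a compact $\Lambda_1$ with $\Lambda\subset\operatorname{int}\Lambda_1\subset\Lambda_1\subset W$. On $(\supp\chi)\times\Lambda_1$ the scaled quantities $q_j\partial_j\ph^{\,i}_\lambda$ and $q_jq_l\partial_{jl}\ph^{\,i}_\lambda$ are then uniformly bounded, by compactness and the homogeneity identities \eqref{eq:homgrad}.
\item Form the gauges
\[
\psi^i_\lambda(q)=\frac{\ph^{\,i}_\lambda(q)}{k}-\frac1L\sum_j\log q_j .
\]
Choose one $k>L\max\{C,C_1\}$. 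The convexity and positivity estimates of Lemmas~\ref{lem:convexity} and~\ref{lem:positivity} then hold uniformly in $\lambda\in\Lambda_1$.
\item Proposition~\ref{prop:diffeo} makes $-\nabla\psi^i_\lambda$ a global diffeomorphism. The direct-utility construction gives smooth, strongly monotone, strictly quasiconcave utilities with demand \eqref{eq:demandL}.
\item Assign the endowments $\omega^i=k\mathbf e_i$. On $\Gamma$ the first block then aggregates to $Z+kR_L$.
\item Apply Theorem~\ref{thm:debreu} once to the $\lambda$-free field $-kR_L$ on the normalized image of $\Gamma$. This gives $L$ fixed residual consumers, and exact realization on $\Gamma$ follows.
\end{enumerate}
The case $L=2$ needs no separate treatment, because none of these steps uses $L\ge3$.

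Next come the regularity claims. Joint $C^\infty$ regularity on $\R^L_{++}\times W$ should follow from applying the implicit function theorem to the inverse map $x\mapsto -\nabla\psi^i_\lambda$, with $\lambda$ as a parameter. The required nondegeneracy is uniform because $D^2\psi^i_\lambda\succ0$. Continuity on the closed orthant comes from the boundary extension of Proposition~\ref{prop:extension} and Proposition~\ref{prop:boundary}. There the Cobb--Douglas sandwich uses $(x_1\cdots x_L)^{1/L}$, with constants uniform over $\Lambda_1$.

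I expect the main obstacle to be this uniform-in-$\lambda$ boundary and inverse-map control. One must show that the utilities $U^i_\lambda$, defined through a $\lambda$-dependent global inverse, are jointly continuous up to $\partial\R^L_+$. I would first try the sandwich bounds $c_1(x_1\cdots x_L)^{1/L}\le U^i_\lambda(x)\le c_2(x_1\cdots x_L)^{1/L}$ with constants uniform over $\Lambda_1$. These bounds would give equicontinuity at the boundary, and joint interior smoothness handles everything else. The remaining points are routine: exact equality on the full cone $\Gamma$, rather than only on the slice $p_L=1$, follows from homogeneity of degree zero, and the consumer count is $L+L=2L$.
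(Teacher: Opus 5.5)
Your proposal is correct and follows the paper's proof essentially step for step: cut-off canonical potentials $\chi Z_i$ from differentiated Walras' law, logarithmic gauges with a single $k>L\max\{C,C_1\}$, own-good endowments $k\mathbf e_i$ so that the first block aggregates to $Z+kR_L$ on $\Gamma$, and one application of Theorem~\ref{thm:debreu} to the parameter-free residual $-kR_L$. One small point, which the paper's proof also leaves implicit: a single $k$ controls only $\lambda\in\Lambda_1$, so as written your interior regularity holds on $\R^L_{++}\times\operatorname{int}\Lambda_1$ rather than on all of $W$; to get all of $W$, also multiply $\varphi^i_\lambda$ by a smooth $\beta(\lambda)$ that equals $1$ near $\Lambda$ and is supported in $\Lambda_1$.
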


\begin{proof}
Choose a smooth homogeneous cutoff $\chi$ that equals one on a neighborhood of $\Gamma$ and
has compact support after normalization, and set
$\varphi^i_\lambda(p)=\chi(p)Z_i(p;\lambda)$. On $\Gamma$, differentiation of Walras' law
gives
\begin{equation}\label{eq:dimfree-potential}
\sum_{i=1}^{L}p_i\nabla\varphi^i_\lambda(p)=-Z(p;\lambda).
\end{equation}
For $k>0$ define
\[
\psi^i_\lambda(q)=\frac{\varphi^i_\lambda(q)}{k}
-\frac1L\sum_{j=1}^{L}\log q_j.
\]
The cutoff gives uniform bounds on the scaled first and second derivatives of
$\varphi^i_\lambda$. Hence, for finite constants $C,C_1$ independent of $i$ and $\lambda$,
\[
D^2\psi^i_\lambda(q)\succeq
\left(\frac1L-\frac{C}{k}\right)\widehat q^{-2},
\qquad
-q_j\partial_j\psi^i_\lambda(q)\ge \frac1L-\frac{C_1}{k}.
\]
Choose $k>L\max\{C,C_1\}$. The global inverse-demand, utility-construction, and boundary
arguments of Appendix~\ref{app:analytic} use only these estimates, homogeneity, and the
radial identity $q\cdot\nabla\psi^i_\lambda=-1$; they therefore apply verbatim in dimension
$L$, with $1/3$ replaced by $1/L$. Giving consumer $i$ the fixed endowment
$k\mathbf e_i$ yields demand
\[
x^i_\lambda(p,kp_i)=
\frac{kp_i}{L}\widehat p^{-1}\one-p_i\nabla\varphi^i_\lambda(p).
\]
Thus the first block has aggregate excess demand
\begin{equation}\label{eq:dimfree-block}
Z(p;\lambda)+kR_L(p),\qquad
R_L(p)=\frac{\one\cdot p}{L}\widehat p^{-1}\one-\one,
\end{equation}
on $\Gamma$, by \eqref{eq:dimfree-potential}. The field $R_L$ is Walrasian, homogeneous of
degree zero, and independent of $\lambda$. One application of Debreu's realization theorem
to $-kR_L$ therefore supplies a fixed residual block of at most $L$ consumers. This gives
the stated family and the bound $2L$.
\end{proof}

\begin{theorem}[Prescribed three-point equilibrium-price fibre]\label{thm:anchored}
Let $L\ge3$, and let $q_{+},q_{-},q_{0}\in\R^{L}_{++}$ be three distinct price rays.
There exist a parameter annulus $A_{*}\subset\C$, a compact normalized price region
$K_{*}$, and a continuous family of finite-consumer pure-exchange economies with fixed
endowments such that:
\begin{enumerate}[label=\textup{(\alph*)},leftmargin=2.2em]
\item over the base point $\lambda_{0}=1$, the restricted equilibrium-price fibre is
exactly $\{q_{+},q_{-},q_{0}\}$;
\item the three equilibrium prices are regular and have indices $+1,+1,-1$, respectively;
\item lifting the generator of $\pi_{1}(A_{*})$ exchanges $q_{+}$ and $q_{-}$ and fixes
$q_{0}$.
\end{enumerate}
The family may be chosen with at most $2L$ consumers, as a constructive upper bound.
\end{theorem}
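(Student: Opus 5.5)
The plan is to transport the formal construction of Section~\ref{sec:formal} to the prescribed rays by a diffeomorphism of normalized price space, and then to realize the transported field with Proposition~\ref{prop:dimfree}.

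\textbf{Step 1: place the base fibre.} Normalize $p_L=1$ and write $y_\pm,y_0\in\R^{L-1}$ for the normalized coordinates of $q_\pm,q_0$; these are three distinct points. Choose the formal constants of Remark~\ref{rem:numbers} and rescale the parameter so that at the base point $\lambda_0=1$ we have $S(\lambda_0)=s_0>0$. For this, replace $S$ by $S_*(\lambda)=s_0\lambda|\lambda|^{2}$, or simply use $S_*(\lambda)=s_0\lambda$ on the annulus $A_*=\{1/2\le|\lambda|\le2\}$, with $2s_0<c^2$ so that Lemma~\ref{lem:sep} still applies. The formal base fibre is then $\{(\pm\sqrt{s_0},0,0),(c,0,0)\}$, and these three points are distinct.

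\textbf{Step 2: transport by a diffeomorphism.} Construct a smooth orientation-preserving diffeomorphism $h$ of $\R^{L-1}$ that is the identity outside a compact set and maps $(\sqrt{s_0},0,0)\mapsto y_+$, $(-\sqrt{s_0},0,0)\mapsto y_-$, $(c,0,0)\mapsto y_0$. Such an $h$ exists because the group of compactly supported isotopic-to-identity diffeomorphisms acts $3$-transitively on $\R^{n}$ for $n\ge2$. We also require the image region $h(K_L)$ to lie in the positive normalized region $\{x_j>-1\}$. To arrange this, first shrink and translate the formal model by an affine contraction so that the three points and $K_L$ sit in a small ball inside the positive region, and then choose the isotopy inside the positive region, which is open and connected.

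Define $G(y;\lambda)=(Dh(h^{-1}y))\,F^{[L]}(h^{-1}y;\lambda)$ on $K_*=h(K_L)$. Its zeros are $h$ of the formal zeros. At a zero, $D_yG=Dh\,DF\,Dh^{-1}$, which is similar to $DF$, so regularity and the indices $+1,+1,-1$ are preserved; this gives (b). Boundary nonvanishing on $\partial K_*$ is inherited from Lemma~\ref{lem:cover}. The covering $E_{K_*}(G)\to A_*$ is isomorphic, via $h\times\mathrm{id}$, to that of Theorem~\ref{thm:formal}, so the generator swaps $q_\pm$ and fixes $q_0$; this gives (c). Part (a) holds because the fibre is exactly $h$ of the formal fibre.

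\textbf{Step 3: realize economically.} Complete $G$ to a Walrasian field $Z_*$ exactly as in \eqref{eq:Z}. It is jointly $C^\infty$, hom-$0$, and Walrasian on an open neighbourhood $W$ of $A_*$. Take $\Gamma$ to be a closed cone whose normalized image contains $K_*$ in its interior and is compactly contained in the positive simplex. Proposition~\ref{prop:dimfree} then gives a continuous family of at most $2L$ consumers with fixed endowments realizing $Z_*$ exactly on $\Gamma$. Equality on the cone transfers the zero set, the reduced Jacobians, and the indices, so (a)--(c) hold for the economies.

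\textbf{Main obstacle.} The main obstacle is Step 2: producing a single $h$ with all three required properties at once, namely that it hits the prescribed points, is orientation preserving so the index signs are unchanged, and keeps $h(K_L)$ inside the positive cone. I would first try an explicit construction. Compose a small affine contraction and translation with three disjoint compactly supported "point-pushing" flows along disjoint arcs in the positive region. Such arcs exist because $L-1\ge2$. Each flow is generated by a bump vector field, and I would check that the compact support stays inside $\{x_j>-1\}$.
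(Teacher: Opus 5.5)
Your argument is correct, but it reaches the prescribed fibre by a different mechanism from the paper's.

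\textbf{The paper's route.} The paper uses no nonlinear diffeomorphism. It passes to log-relative-price coordinates $x(p)=(\log(p_j/p_L))_{j<L}$, which map positive price rays diffeomorphically onto all of $\R^{L-1}$, so positivity of the target region is automatic. It then applies a single real linear isomorphism $T$ sending $x_\pm-m$ to $\pm\mathbf e_1$ and $x_0-m$ to $(a,b,0,\ldots,0)$ with $a^2+b^2\ne1$. Finally it \emph{changes the planar map} to $(z^2-\lambda)(\bar z-\bar z_0)$, so the fixed root sits wherever $T$ puts it. The thin annulus $1-\varepsilon\le|\lambda|\le1+\varepsilon$ keeps $\pm\sqrt\lambda$ away from $z_0$. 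Index signs survive because $T$ acts by conjugation and the log map has positive Jacobian.

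\textbf{Your route.} You keep the formal field of Section~\ref{sec:formal} essentially verbatim, with $S$ replaced by $s_0\lambda$; this replacement is genuinely needed, since $S(1)=0$. You then move its three zeros to $y_\pm,y_0$ by a compactly supported diffeomorphism, transporting $F^{[L]}$ as a vector field.

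\textbf{What each buys.} The paper's construction gives a closed-form, explicitly computable field and needs no transitivity result for diffeomorphism groups. Its price is having to alter the planar map and choose $T$ so that $T(x_0-m)$ avoids the unit circle. Yours reuses the formal model unchanged and needs no such adjustment. It rests instead on the multi-transitivity of $\mathrm{Diff}_c$ in dimension $L-1\ge2$, or an explicit point-pushing construction, and yields a less explicit field.

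\textbf{The obstacle you flag is smaller than you fear.}
\begin{itemize}
\item Orientation preservation is not needed for (b). You transport by the pushforward $Dh(h^{-1}y)\,F^{[L]}(h^{-1}y;\lambda)$, so at a zero the Jacobian is $Dh\,DF\,Dh^{-1}$. This is similar to $DF$ for \emph{any} diffeomorphism, as you yourself note in Step~2.
\item The preliminary affine contraction is superfluous: $K_L$ already lies in $\{x_j>-1\}$ because $\rho,\eta<1$.
\item If you worked in log coordinates as the paper does, the requirement that the isotopy stay inside the positive region would disappear. Any compactly supported diffeomorphism of $\R^{L-1}$ would then do.
\end{itemize}
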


\begin{proof}
Normalize the three rays by $p_L=1$ and use log-relative-price coordinates
\[
x(p)=\left(\log\frac{p_1}{p_L},\ldots,\log\frac{p_{L-1}}{p_L}\right)\in\R^{L-1}.
\]
Write $x_\nu=x(q_\nu)$ for $\nu\in\{+,-,0\}$ and
$m=(x_++x_-)/2$. Since $x_+\ne x_-$, choose a real linear isomorphism
$T\colon\R^{L-1}\to\R^{L-1}$ such that
\[
T(x_+-m)=(1,0,\ldots,0),\qquad
T(x_--m)=(-1,0,\ldots,0),
\]
and
\[
T(x_0-m)=(a,b,0,\ldots,0),\qquad a^2+b^2\ne1.
\]
Such a $T$ exists because, after sending the line through $x_+-x_-$ to the first coordinate
axis, a transverse coordinate can be rescaled to avoid the single forbidden radius; if
$x_0-m$ lies on that line, distinctness already gives $a\ne\pm1$. Put
$z=y_1+iy_2$, $\tau=(y_3,\ldots,y_{L-1})$, and $z_0=a+ib$.

For $\lambda\in\C$ define
\[
G(y;\lambda)=\left(
\re[(z^2-\lambda)(\bar z-\bar z_0)],
\im[(z^2-\lambda)(\bar z-\bar z_0)],
-\tau
\right).
\]
Choose $\varepsilon>0$ so small that
$|z_0|^2\notin[1-\varepsilon,1+\varepsilon]$ and set
$A_* = \{\lambda\in\C:1-\varepsilon\le |\lambda|\le1+\varepsilon\}$.
For every $\lambda\in A_*$ the zeros of $G$ are exactly
$(\sqrt\lambda,0)$, $(-\sqrt\lambda,0)$, and $(z_0,0)$; they are simple, the first two are
exchanged around a generator of the annulus, and the third is fixed. The planar factor is
locally holomorphic at the mobile zeros and anti-holomorphic at the fixed zero. Since the
transverse block in $-D_yG$ is $I_{L-3}$, the corresponding reduced indices are
$+1,+1,-1$ in every dimension.

Pull the system back by
\[
\widetilde F(x;\lambda)=T^{-1}G(T(x-m);\lambda)
\]
and complete it to a Walrasian field by setting, for $j<L$,
\[
Z_j^*(p;\lambda)=\widetilde F_j(x(p);\lambda),\qquad
Z_L^*(p;\lambda)=-\sum_{j=1}^{L-1}\frac{p_j}{p_L}\widetilde F_j(x(p);\lambda).
\]
Then $Z^*$ is smooth, homogeneous of degree zero, and satisfies Walras' law identically;
on $p_L=1$ its reduced zero system is $\widetilde F=0$. Conjugation by $T$ preserves the
determinant of the reduced linearization, while the log-price map has positive Jacobian
determinant, so the index signs found above are unchanged in the original normalized price
coordinates. The zero locus over $A_*$ is compact and contains exactly three points in each
fibre. Choose a compact neighborhood $K_*$ containing it in its interior and with no zero
on its boundary. Proposition~\ref{prop:dimfree} realizes $Z^*$ exactly on the cone over
$K_*$ with at most $2L$ consumers and fixed endowments. The restricted covering and its
monodromy therefore transfer to the resulting pure-exchange family.
\end{proof}

\begin{remark}[Scope of the prescribed-price construction]\label{rem:anchor-scope}
Theorem~\ref{thm:anchored} constructs a new family having the prescribed three price rays
in its initial restricted fibre. A price ray has no index independently of an excess-demand
system. In the literature applications of Appendix~\ref{app:anchors}, we select source
equilibria that already have the corresponding local index pattern, so the new family
matches both the selected rays and their local index labels. No other primitives or
equilibrium properties of the source economy are imposed on the constructed family.
\end{remark}

Appendix~\ref{app:anchors} applies the theorem to selected four-good equilibrium-price
fibres from the pure-exchange example of Gauthier--Kehoe--Quintin \cite{gkq22} and from
Kehoe's production example \cite{kehoe85num}. In each case only the selected price rays and
their local index labels are carried into the construction. The resulting monodromic family
is a new pure-exchange family; no monodromy claim is made about the source economy itself.

\section{Implications for comparative statics, selection, and computation}\label{sec:economic}

The economic implication of monodromy concerns the passage from local to global comparative
statics. Write the reduced equilibrium-price equations abstractly as
\[
F(x,\lambda)=0.
\]
At a regular normalized equilibrium price $(x_0,\lambda_0)$, invertibility of
$D_xF(x_0,\lambda_0)$ gives a unique local equilibrium-price section. This is exactly the
standard local conclusion of the implicit function theorem, and our result does not
challenge it. What fails is the additional inference that these local sections necessarily
define a globally consistent identity for the equilibrium branch.

An equilibrium-price continuation along a parameter path is path lifting in the restricted
regular covering. Starting from a mobile price, every sufficiently short path segment has a
unique lift, yet concatenating those local lifts around the monodromic loop returns to the
other mobile price. Thus
\[
\text{unique local price continuation}
\quad\not\Rightarrow\quad
\text{path-independent global identification}.
\]
Endpoint fundamentals alone therefore need not identify the equilibrium reached by
continuation. Starting from the same benchmark equilibrium, one can remain at the benchmark
economy or move around a closed path and return to that very same economy; the two
continuation procedures can end at different equilibrium prices.

\subsection{Allocative content and equilibrium selection}

The transposition of the two mobile equilibrium prices is not merely a relabelling of the
same economic outcome. At the base parameter of Remark~\ref{rem:prices}, consumer $2$ is a
fixed symmetric Cobb--Douglas consumer with endowment $k\mathbf e_2$ in both exact
realizations. Since $p_2=1$, its wealth is $k$, and its demand for the first good is
\[
x^2_1(q_\pm)=\frac{k}{L\,q_{\pm,1}}
=\frac{k}{L\bigl(1\pm\sqrt{S(\lambda_0)}\bigr)}.
\]
Hence $x^2_1(q_+)\ne x^2_1(q_-)$. The two mobile equilibrium prices therefore support
different allocations even for a consumer whose preferences and endowment remain fixed
throughout the loop. The monodromy has allocative content, although no welfare comparison
between the two equilibria is implied.

The selection implication is more specific. Monodromy does not imply that the equilibrium
correspondence admits no global continuous selection. In our construction the fixed branch
$q_0$ is itself such a selection. What fails is the global continuous identification of
either member of the mobile pair. Starting from one mobile equilibrium and continuing it
once around the loop leads to the other; two circuits are required to return to the initial
branch.

An additional equilibrium-selection rule can therefore resolve an ambiguity that
continuation alone does not resolve. Loi, Matta, and Uccheddu \cite{lmu23}, for example,
propose a geometric procedure for selecting an equilibrium price after changes in
endowments. Such a procedure supplies an independent selection criterion. Our result is
narrower: it shows that local continuation by itself need not provide a path-independent
global identification of an equilibrium branch.

The two-good case illustrates why this problem does not arise in one normalized price
dimension. Distinct regular equilibria can be ordered globally by their relative price, and
this rank is preserved under continuation. With two or more normalized price dimensions,
this order-based labelling mechanism is no longer available.

\subsection{Continuation algorithms and path dependence}

Continuation methods compute an equilibrium by starting from a known benchmark equilibrium
and following a branch along a chosen path in parameter space \cite{eaves99}. When
equilibrium branches have nontrivial monodromy, the equilibrium obtained at the endpoint may
depend on that path, even when different paths lead to the same final economic primitives.

This matters whenever the reported counterfactual equilibrium is defined as the
continuation of a particular benchmark equilibrium. In that case, the initial equilibrium
and the endpoint fundamentals need not be sufficient to identify the computed outcome: the
continuation path may also matter. Unless an independent selection rule makes the endpoint
path independent, reproducibility therefore requires specifying the path used by the
algorithm:
\[
\boxed{\text{the continuation path can be part of the specification of the computed equilibrium.}}
\]

Methods designed to compute the entire equilibrium set face a different issue. Set-oriented
algebraic procedures, including semialgebraic and Gr\"obner-basis methods
\cite{kubler10jet,kubler10or}, do not define their output as the endpoint of a single
continued branch. They are therefore not subject to this particular form of path
dependence. Monodromy remains relevant, however, to the global organization and labelling
of the equilibria across parameter values.

This is a computational implication, not a behavioural one. Continuation describes how a
numerical procedure follows a solution branch; it is not a t\^atonnement process and does
not imply that an actual economy moves between equilibria in the same way. In particular,
our result does not establish dynamic irreversibility or hysteresis.

\section{Conclusion}\label{sec:conclusion}

We have shown that local regularity does not guarantee global identifiability of a
Walrasian equilibrium price. For every number of commodities $L\ge3$, there exists a closed
one-parameter family of pure-exchange economies for which all primitives return exactly to
their initial values while continuation around the loop exchanges two regular equilibrium
prices and leaves a third unchanged. With one or two goods this cannot happen. Three goods
are therefore the sharp threshold for nontrivial real monodromy in the class studied here.

The same aggregate experiment admits two exact economic realizations. In the first, all
endowments are fixed and only one consumer's preference varies. In the stronger
realization, every preference is fixed and only the distribution of one endowment among
three consumers changes; their total endowment, and hence aggregate resources, remains
constant. The substantive realization problem was therefore not whether each aggregate
field could be microfounded separately, but whether the entire loop could be generated
coherently by a continuous loop of economic primitives.

The transposition is structurally stable. Sufficiently small $C^1$ perturbations may move
the equilibrium prices, but they preserve the three relevant regular branches and their
monodromy. The construction is also not tied to the particular numerical fibre used in the
formal model: it can be constructed with prescribed positive price rays and the required
local index pattern. Our control is restricted to the compact target region, however; we do not
claim that the three displayed equilibria exhaust the complete global equilibrium set.

The economic implication is a limit on global comparative statics under multiplicity.
Regularity makes an equilibrium locally traceable, but local continuations need not define
a path-independent global identity. The exchanged prices support genuinely different
demands, so the phenomenon is not mere relabelling; yet it is a structural statement about
the equilibrium correspondence, not a t\^atonnement or a theorem of dynamic hysteresis.
The identity of a regular equilibrium can therefore be locally unambiguous and yet globally
path dependent.

\subsection*{Acknowledgements}
Stefano Matta gratefully acknowledges financial support from Fondazione di Sardegna, grant no. F83C26000350007. Andrea Loi was supported by ProBiki of Fondazione di Sardegna.

\subsection*{Data and code availability}
No empirical data are used. The verification scripts are provided as separate replication
material; the proofs in the paper are analytic. A permanent repository identifier can be
added at submission.

\appendix

\section{Analytic construction in dimension three}\label{app:analytic}

This appendix supplies the complete analytic proof of Theorem~\ref{thm:parametric}. The
notation, regions, standing constants, and explicit potentials are those of
Section~\ref{sec:realization}. The proof separates the parameter-dependent part from the
analytic machinery that is uniform in the parameter. Its steps are: extend the explicit
potentials without changing them on the target cone; add Mantel's logarithmic benchmark;
choose one value of $k$ that gives uniform convexity and positive normalized demand; prove
that the negative-gradient demand map is globally invertible; construct direct utilities
and their Marshallian demands; and aggregate the three explicit consumers. The resulting
block equals the target field plus a universal parameter-independent residual, which is
removed by one fixed application of Debreu's realization theorem. The only ingredients
specific to our formal field are therefore the explicit potentials and the fact that their
parameter dependence is concentrated in one of them.

The sign convention for the gauge cores is chosen to match the negative-gradient demand
construction used below.

\begin{definition}[The gauge cores]\label{def:phicore}
Set $\varphi^{i}_{\lambda}\deq -\,g_{i}$, i.e.
\[
\varphi^{1}(u,v)=2cv^{2},\qquad \varphi^{2}=0,\qquad
\varphi^{3}_{\lambda}(u,v)=-\,g_{3}(u,v;\lambda),
\]
so that, by Proposition~\ref{prop:potential},
\begin{equation}\label{eq:minusZ}
\sum_{i=1}^{3}p_{i}\,\nabla_{p}\,\widehat\varphi^{\,i}_{\lambda}(p)\;=\;-\,Z(p;\lambda)
\qquad(p\in\R^{3}_{++},\ \lambda\in\R^{2}),
\end{equation}
where $\widehat\varphi^{\,i}_{\lambda}(p)=\varphi^{i}_{\lambda}(u(p),v(p))$. Only
$\varphi^{3}$ depends on $\lambda$, and it does so polynomially through
$(\sigma_{1},\sigma_{2})$.
\end{definition}

\subsection{The homogeneous cutoff extension}\label{sec:cutoff}

The gauge cores $\varphi^{i}_{\lambda}$ are polynomials in $(u,v)$: unbounded on the orthant
and with no uniform Hessian control. We extend them globally by the cutoff of
Assumption~\ref{ass:data}, exploiting that $u,v$ are themselves hom-$0$, so that cutting off
in $(u,v)$ \emph{preserves homogeneity exactly}. We need this global control in order to construct utilities and demands on the whole positive
orthant, but the cutoff must be flat on a neighborhood of the target cone so that the
exact potential and gradient identities used for realization are not altered there.

\begin{definition}[Cutoff extension]\label{def:phihat}
For $p\in\R^{3}_{++}$, $\lambda\in\R^{2}$, $i\in\{1,2,3\}$, set
\[
\ph^{\,i}_{\lambda}(p)\;\deq\;\chi\bigl(u(p),v(p)\bigr)\,
\varphi^{i}_{\lambda}\bigl(u(p),v(p)\bigr).
\]
\end{definition}

\begin{lemma}[Properties of the extension]\label{lem:extension}
The functions $\ph^{\,i}_{\lambda}$ satisfy:
\begin{enumerate}[label=\textup{(\roman*)},leftmargin=2.2em]
\item $(p,\lambda)\mapsto\ph^{\,i}_{\lambda}(p)$ is $C^{\infty}$ on
$\R^{3}_{++}\times\R^{2}$, and $\ph^{\,i}_{\lambda}$ is hom-$0$ in $p$; $\ph^{\,1},\ph^{\,2}$
do not depend on $\lambda$;
\item \textbf{(Flatness on the target cone.)} $\ph^{\,i}_{\lambda}=\widehat\varphi^{\,i}_{\lambda}$
on an open set containing $\Gamma=\Gamma_{\rho_{1}}$; in particular
\emph{all $p$-derivatives} of $\ph^{\,i}_{\lambda}$ and $\widehat\varphi^{\,i}_{\lambda}$
coincide at every point of $\Gamma$, and \eqref{eq:minusZ} holds with
$\widehat\varphi$ replaced by $\ph$ for all $p\in\Gamma$;
\item there are finite constants
\[
\begin{aligned}
B&\deq\sup\Bigl\{\bigl|\ph^{\,i}_{\lambda}(q)\bigr|
:\ i\in\{1,2,3\},\ \lambda\in\Lambda_{1},\ q\in\R^{3}_{++}\Bigr\},\\
C&\deq\sup\Bigl\{\norm{\widehat q\,D^{2}_{q}\ph^{\,i}_{\lambda}(q)\,\widehat q}_{\mathrm{op}}
:\ i\in\{1,2,3\},\ \lambda\in\Lambda_{1},\ q\in\R^{3}_{++}\Bigr\},\\
C_{1}&\deq\sup\Bigl\{\bigl|q_{j}\,\partial_{j}\ph^{\,i}_{\lambda}(q)\bigr|
:\ i,j\in\{1,2,3\},\ \lambda\in\Lambda_{1},\ q\in\R^{3}_{++}\Bigr\}.
\end{aligned}
\]
\end{enumerate}
\end{lemma}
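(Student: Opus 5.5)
We reduce everything to the fact that $\ph^{\,i}_{\lambda}(p)=h^i(u(p),v(p),\lambda)$ with $h^i\deq\chi\,\varphi^i_\lambda\in C^\infty(\R^2\times\R^2)$, where $(u,v)$ are smooth and hom-$0$ on $\R^3_{++}$ and $h^i(\cdot,\cdot,\lambda)$ is supported in $\overline D_{\rho_2}$.

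\emph{Item (i).} Smoothness follows from the chain rule, since $\chi$ is smooth and $\varphi^i_\lambda$ is polynomial in $(u,v,\sigma_1,\sigma_2)$ with $\sigma$ polynomial in $\lambda$. Homogeneity of degree zero is inherited from $u,v$. The cores $\varphi^1,\varphi^2$ are $\lambda$-free by Definition~\ref{def:phicore}, and so therefore are $\ph^{\,1},\ph^{\,2}$.

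\emph{Item (ii).} Let $O\supset\overline D_{\rho_1}$ be the open set on which $\chi\equiv1$. Its preimage $\{p:(u(p),v(p))\in O\}$ is open in $\R^3_{++}$ and contains $\Gamma$. On this open set $\ph^{\,i}_\lambda$ and $\widehat\varphi^{\,i}_\lambda$ coincide identically, so all $p$-derivatives agree on it, and in particular on $\Gamma$. Substituting into \eqref{eq:minusZ}, which holds on all of $\R^3_{++}$, gives the identity for $\ph$ on $\Gamma$.

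\emph{Item (iii).} This is the only real content. The sup is over the whole noncompact orthant, and the weights $\widehat q$ and $q_j$ grow. The plan is to exploit homogeneity via \eqref{eq:homgrad}. The quantities $q_j\partial_j\ph$ and $\widehat q D^2\ph\,\widehat q$ are scale invariant: the derivative scales as $t^{-1}$ and the Hessian as $t^{-2}$, exactly compensated by the weights. It therefore suffices to bound them on the slice $q_3=1$. On this slice $q=(1+u,1+v,1)$, and the relevant expressions are explicit polynomials in $(1+u,1+v)$ times first and second derivatives of $h^i$ in $(u,v)$. For example, $\partial_{p_1}\ph=h_u/p_3$ and $\partial_{p_3}\ph=-(p_1h_u+p_2h_v)/p_3^2$, with analogous formulas for second derivatives.

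These vanish unless $(u,v)\in\overline D_{\rho_2}$, which is compact. On the set $\overline D_{\rho_2}\times\Lambda_1$ every derivative of $h^i$ up to order two is continuous, hence bounded, and the factors $1+u,1+v$ lie in $[1-\rho_2,1+\rho_2]$. Hence $B$, $C_1$, and $C$ are finite, with $B$ bounded trivially by $\sup|h^i|$ over the same compact set.

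The main, though modest, obstacle is bookkeeping the second-derivative chain rule for the $p_3$-components. I would handle it by avoiding explicit formulas: the map $(u,v)\mapsto q=(1+u,1+v,1)$ is a smooth embedding, so $\widehat q D^2\ph\,\widehat q$ restricted to the slice is a continuous function of $(u,v,\lambda)$ with compact support in $(u,v)$. Continuity on a compact set then gives the bound directly.
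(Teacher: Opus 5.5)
Your proposal is correct and follows essentially the same route as the paper. Item (i) comes from the chain rule and degree-zero homogeneity of $(u,v)$. Item (ii) comes from the coincidence of $\ph^{\,i}_\lambda$ and $\widehat\varphi^{\,i}_\lambda$ on the open cone over the set where $\chi\equiv1$. Item (iii) uses the scale invariance of the three weighted quantities via \eqref{eq:homgrad} to reduce the suprema to the slice $q_3=1$; there they vanish off $\supp\chi\subset\overline D_{\rho_2}$ and are bounded by continuity on a compact set times $\Lambda_1$. Parametrizing that slice by $(u,v)$ rather than by the compact price set $[1-\rho_2,1+\rho_2]^2\times\{1\}$ is only a cosmetic difference.
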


\begin{proof}
(i) Smoothness: $u,v$ are smooth on $\R^{3}_{++}$ (where $p_{3}>0$ always holds), $\chi$ and
$\varphi^{i}_{\lambda}$ are smooth, and $\varphi^{i}_{\lambda}$ is polynomial in $\lambda$.
Homogeneity: $u(tp)=u(p)$, $v(tp)=v(p)$.

(ii) By (D4), $\chi\equiv1$ on an open set $U\supset\overline D_{\rho_{1}}$; hence
$\ph^{\,i}_{\lambda}$ and $\widehat\varphi^{\,i}_{\lambda}$ coincide on the open cone
$\{p:(u,v)\in U\}\supset\Gamma$, so all derivatives coincide on $\Gamma$. This is the point of hypothesis (D4): coincidence of the \emph{fields} on $K$ alone would not
transfer derivative identities; coincidence on an open neighborhood does.

(iii) Each displayed quantity is hom-$0$ in $q$: $\abs{\ph}$ trivially;
$q_{j}\partial_{j}\ph$ because $\nabla\ph$ is hom-$(-1)$ by \eqref{eq:homgrad}; and
$\widehat q\,D^{2}\ph\,\widehat q$ because $D^{2}\ph$ is hom-$(-2)$ while each factor
$\widehat q$ is linear in $q$:
$\widehat{tq}\,D^{2}\ph(tq)\,\widehat{tq}=t\widehat q\cdot t^{-2}D^{2}\ph(q)\cdot t\widehat q
=\widehat q\,D^{2}\ph(q)\,\widehat q$. A hom-$0$ function is determined by its values on the
slice $\{q_{3}=1\}$, where all three quantities vanish outside
$\{(u,v)\in\supp\chi\subset\overline D_{\rho_{2}}\}$; on that support,
$q_{1}=1+u$ and $q_{2}=1+v$ range in $[1-\rho_{2},\,1+\rho_{2}]$ and $q_{3}=1$, a compact subset
$\mathcal{Q}\subset\R^{3}_{++}$. The three quantities are continuous on the compact
$\mathcal{Q}\times\Lambda_{1}$ (joint smoothness of $(q,\lambda)\mapsto\ph^{\,i}_{\lambda}$),
hence bounded; taking the maximum over $i\in\{1,2,3\}$ gives finite $B$, $C$, $C_{1}$.
\end{proof}

The three bounds have distinct roles below: $C$ controls the Hessian and hence strict
convexity, $C_{1}$ guarantees coordinatewise monotonicity and positive demand, and $B$
is used in the global inverse argument to control the minimizing problem near the
boundary of the price orthant.

\subsection{The gauges and the uniform choice of \texorpdfstring{$k$}{k}}\label{sec:gauges}

We now add Mantel's logarithmic benchmark. The perturbation $\ph^{\,i}_{\lambda}/k$
carries the prescribed aggregate field, while the logarithmic term supplies a globally
well-behaved strictly convex benchmark. Choosing $k$ sufficiently large makes this
benchmark dominate the first and second derivatives of the perturbation without changing
the exact weighted-gradient information needed on the target cone.

\begin{definition}[Gauges; cf.\ \cite{mantel74}]\label{def:gauge}
For $k>0$, $\lambda\in\R^{2}$, $i\in\{1,2,3\}$, define
$\psi^{i}_{\lambda}\colon\R^{3}_{++}\to\R$,
\[
\psi^{i}_{\lambda}(q)\;\deq\;\frac{\ph^{\,i}_{\lambda}(q)}{k}
\;-\;\frac13\sum_{j=1}^{3}\log q_{j}.
\]
\end{definition}

\begin{remark}[Parameter domains]
The gauges and their underlying potentials are defined for every $\lambda\in\R^{2}$, since
all data are polynomial in $\lambda$. The uniform estimates are asserted on
$\Lambda_{1}$; the derived objects ($Q^{i}$, $u^{i}_{\lambda}$, $U^{i}_{\lambda}$ and the
demands) are constructed there, with smooth parameter dependence on the open set
$W=\operatorname{int}\Lambda_{1}$. The economic family ultimately used in the main theorem
is restricted to $\Lambda\subset W$.
\end{remark}

\begin{lemma}[Exact identities: log-homogeneity and radial Euler]\label{lem:identities}
For every $k>0$, $\lambda\in\R^{2}$, $i$, and all $q\in\R^{3}_{++}$, $t>0$:
\begin{equation}\label{eq:loghom}
\psi^{i}_{\lambda}(tq)=\psi^{i}_{\lambda}(q)-\log t,
\end{equation}
\begin{equation}\label{eq:radial}
q\cdot\nabla\psi^{i}_{\lambda}(q)=-1,
\end{equation}
\begin{equation}\label{eq:diffeuler}
D^{2}\psi^{i}_{\lambda}(q)\,q=-\nabla\psi^{i}_{\lambda}(q),
\end{equation}
\begin{equation}\label{eq:gradhom}
\nabla\psi^{i}_{\lambda}(tq)=t^{-1}\nabla\psi^{i}_{\lambda}(q).
\end{equation}
None of these requires any condition on $k$.
\end{lemma}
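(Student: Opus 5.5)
The four identities are statements about the two pieces of $\psi^{i}_{\lambda}$: the cutoff potential $\ph^{\,i}_{\lambda}/k$ and the logarithmic benchmark $-\frac13\sum_j\log q_j$. I would treat them separately and then add. By Lemma~\ref{lem:extension}(i), $\ph^{\,i}_{\lambda}$ is hom-$0$ in $q$, so $\ph^{\,i}_{\lambda}(tq)=\ph^{\,i}_{\lambda}(q)$. For the benchmark,
\[
-\frac13\sum_{j=1}^3\log(tq_j)=-\frac13\sum_{j=1}^3\log q_j-\log t .
\]
Adding the two pieces gives \eqref{eq:loghom}, and the multiplier $1/k$ plays no role. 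This is why no condition on $k$ is needed.

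Next I would differentiate \eqref{eq:loghom}. Differentiating with respect to $t$ and setting $t=1$ gives
\[
q\cdot\nabla\psi^{i}_{\lambda}(q)=-1,
\]
which is \eqref{eq:radial}. As a cross-check, this also follows directly: \eqref{eq:euler0} gives $q\cdot\nabla\ph^{\,i}_{\lambda}=0$, and the benchmark contributes $-\frac13\sum_j q_j\cdot q_j^{-1}=-1$.

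Differentiating \eqref{eq:loghom} instead in $q$, for fixed $t$, gives
\[
t\,\nabla\psi^{i}_{\lambda}(tq)=\nabla\psi^{i}_{\lambda}(q),
\]
which is \eqref{eq:gradhom}. Equivalently, one can combine \eqref{eq:homgrad} for the $\ph$ term with the explicit gradient $-\frac13(q_j^{-1})_j$ of the benchmark, which is hom-$(-1)$.

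Finally, \eqref{eq:diffeuler} is Euler's identity for the hom-$(-1)$ map $\nabla\psi^{i}_{\lambda}$. I would differentiate \eqref{eq:gradhom} in $t$ at $t=1$, which gives
\[
D^2\psi^{i}_{\lambda}(q)\,q=-\nabla\psi^{i}_{\lambda}(q).
\]
Alternatively, differentiate \eqref{eq:radial} in $q_j$: this gives $\partial_j\psi^{i}_{\lambda}+\sum_l q_l\,\partial_{jl}\psi^{i}_{\lambda}=0$, and symmetry of the Hessian turns it into the same identity.

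Smoothness of $\psi^{i}_{\lambda}$ on $\R^3_{++}$, from Lemma~\ref{lem:extension}(i), justifies all of these differentiations. There is no real obstacle. The only point needing care is that homogeneity of $\ph^{\,i}_{\lambda}$ holds exactly on the whole orthant, not just on the target cone. That is true because the cutoff $\chi$ is applied in the hom-$0$ coordinates $(u,v)$.
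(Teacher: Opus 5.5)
Your proof is correct and matches the paper's argument: log-homogeneity comes from hom-$0$ of $\ph^{\,i}_{\lambda}$ plus the logarithm identity, and the other three identities follow by differentiation. The only difference is cosmetic: the paper obtains the radial identity directly from Euler's identity for $\ph^{\,i}_{\lambda}$, and the Hessian identity by differentiating the radial identity in $q$, which are exactly the routes you give as your cross-check and alternative.
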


\begin{proof}
\eqref{eq:loghom}: $\ph$ is hom-$0$ and $\sum_{j}\log(tq_{j})=\sum_{j}\log q_{j}+3\log t$.
\eqref{eq:radial}: by Euler's identity \eqref{eq:euler0}, $q\cdot\nabla\ph=0$, and
$q\cdot\nabla\bigl(-\tfrac13\sum\log q_{j}\bigr)=-\tfrac13\sum_{j}q_{j}\cdot q_{j}^{-1}=-1$.
This is an \emph{exact identity}; the largeness of $k$ plays no role here (it will be needed
only for convexity, coordinatewise monotonicity, and demand positivity).
\eqref{eq:diffeuler}: differentiate the identity \eqref{eq:radial} in $q$:
$\nabla\psi+D^{2}\psi\,q=0$.
\eqref{eq:gradhom}: differentiate \eqref{eq:loghom} in $q$.
\end{proof}

\begin{lemma}[Uniform strict convexity]\label{lem:convexity}
If $k>3C$ then, for all $\lambda\in\Lambda_{1}$, $i\in\{1,2,3\}$, $q\in\R^{3}_{++}$, and
$\xi\in\R^{3}$,
\[
\xi^{\top}D^{2}\psi^{i}_{\lambda}(q)\,\xi\;\ge\;
\Bigl(\frac13-\frac{C}{k}\Bigr)\,\bigl|\widehat q^{-1}\xi\bigr|^{2},
\qquad\text{i.e.}\qquad
D^{2}\psi^{i}_{\lambda}(q)\succeq\Bigl(\frac13-\frac{C}{k}\Bigr)\widehat q^{-2}\succ0 ,
\]
with $C$ from Lemma~\ref{lem:extension}(iii). In particular $\psi^{i}_{\lambda}$ is strictly
convex on $\R^{3}_{++}$, uniformly over $\Lambda_{1}$, with the same $k$ for all $(i,\lambda)$.
\end{lemma}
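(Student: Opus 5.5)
The Hessian splits linearly: $D^2\psi^i_\lambda(q)=k^{-1}D^2\ph^{\,i}_\lambda(q)+D^2\bigl(-\tfrac13\sum_j\log q_j\bigr)$. The logarithmic benchmark contributes exactly the diagonal matrix $\tfrac13\widehat q^{-2}$, since $\partial_j^2(-\tfrac13\log q_j)=\tfrac1{3q_j^2}$ and all mixed terms vanish. So for any $\xi\in\R^3$,
\[
\xi^\top D^2\psi^i_\lambda(q)\,\xi=\frac13\bigl|\widehat q^{-1}\xi\bigr|^2+\frac1k\,\xi^\top D^2\ph^{\,i}_\lambda(q)\,\xi .
\]
The whole task is to bound the perturbation term from below by $-\tfrac{C}{k}|\widehat q^{-1}\xi|^2$.

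For this I will change variables $\eta=\widehat q^{-1}\xi$, which is legitimate because $q\gg0$ makes $\widehat q$ invertible. Then
\[
\xi^\top D^2\ph\,\xi=\eta^\top\bigl(\widehat q\,D^2\ph\,\widehat q\bigr)\eta\ge-\norm{\widehat q\,D^2\ph^{\,i}_\lambda(q)\,\widehat q}_{\mathrm{op}}|\eta|^2\ge -C|\eta|^2 .
\]
The last step uses the definition of $C$ in Lemma~\ref{lem:extension}(iii), which is a supremum over all $q\in\R^3_{++}$, all $i$, and all $\lambda\in\Lambda_1$. That lemma already establishes that $C$ is finite, because the scaled Hessian is hom-$0$ and supported on a compact slice. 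Combining the two displays gives the stated inequality with constant $\tfrac13-\tfrac Ck$. This constant is strictly positive exactly when $k>3C$ and does not depend on $(i,\lambda,q)$.

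Strict convexity then follows because $\widehat q^{-2}$ is positive definite on the convex open set $\R^3_{++}$. The Hessian is therefore positive definite everywhere, which gives strict convexity of the $C^2$ function $\psi^i_\lambda$. Uniformity over $\Lambda_1$ and the single choice of $k$ for all $(i,\lambda)$ are inherited directly from the uniformity of $C$.

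The only point needing care is the change of variables, which is where the weighted operator norm $\widehat q\,D^2\ph\,\widehat q$, rather than the raw Hessian, must be used. The raw Hessian is not bounded near the boundary of the orthant, whereas the weighted quantity is hom-$0$ and hence globally bounded. Everything else is routine.
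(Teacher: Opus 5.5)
Your proposal is correct and matches the paper's proof step for step. Like the paper, you split off the logarithmic benchmark's diagonal Hessian $\tfrac13\widehat q^{-2}$, substitute $\eta=\widehat q^{-1}\xi$, and bound the perturbation term by the scale-invariant operator-norm constant $C$ of Lemma~\ref{lem:extension}(iii), which is uniform in $(i,\lambda,q)$.
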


\begin{proof}
$D^{2}\psi=\tfrac1kD^{2}\ph+\tfrac13\diag(q_{j}^{-2})=\tfrac1kD^{2}\ph+\tfrac13\widehat q^{-2}$.
Substituting $\eta\deq\widehat q^{-1}\xi$,
\[
\bigl|\xi^{\top}D^{2}\ph\,\xi\bigr|
=\bigl|\eta^{\top}\bigl(\widehat q\,D^{2}\ph\,\widehat q\bigr)\eta\bigr|
\le C\,\abs{\eta}^{2},
\qquad
\xi^{\top}\widehat q^{-2}\xi=\abs{\eta}^{2},
\]
so $\xi^{\top}D^{2}\psi\,\xi\ge(\tfrac13-\tfrac{C}{k})\abs{\eta}^{2}$. Since
$C$ is a single constant valid for all $q\in\R^{3}_{++}$ (scale invariance,
Lemma~\ref{lem:extension}(iii)), the bound holds on the whole orthant.
\end{proof}

\begin{lemma}[Coordinatewise monotonicity and positivity of the normalized demand]
\label{lem:positivity}
If $k>3C_{1}$ then, for all $\lambda\in\Lambda_{1}$, $i$, $q\in\R^{3}_{++}$, and $j\in\{1,2,3\}$,
\[
-\,q_{j}\,\partial_{j}\psi^{i}_{\lambda}(q)\;\ge\;\frac13-\frac{C_{1}}{k}\;>\;0.
\]
Consequently $\partial_{j}\psi^{i}_{\lambda}<0$ for each $j$ (strict coordinatewise
decrease) and
\[
G^{i}_{\lambda}(q)\;\deq\;-\nabla\psi^{i}_{\lambda}(q)\;\in\;\R^{3}_{++},
\qquad
\bigl(G^{i}_{\lambda}(q)\bigr)_{j}\ \ge\ \Bigl(\frac13-\frac{C_{1}}{k}\Bigr)\frac{1}{q_{j}}.
\]
\end{lemma}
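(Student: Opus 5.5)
The bound follows by computing $\partial_j\psi^i_\lambda$ directly from Definition~\ref{def:gauge} and using the constant $C_1$ of Lemma~\ref{lem:extension}(iii). Differentiating the gauge coordinatewise gives
\[
\partial_j\psi^i_\lambda(q)=\frac1k\,\partial_j\ph^{\,i}_\lambda(q)-\frac{1}{3q_j},
\qquad\text{so}\qquad
-q_j\,\partial_j\psi^i_\lambda(q)=\frac13-\frac1k\,q_j\,\partial_j\ph^{\,i}_\lambda(q).
\]

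By Lemma~\ref{lem:extension}(iii), $\bigl|q_j\partial_j\ph^{\,i}_\lambda(q)\bigr|\le C_1$ for every $q\in\R^3_{++}$, every $i,j$, and every $\lambda\in\Lambda_1$. This bound is global on the orthant, not only on a compact slice, because $q_j\partial_j\ph^{\,i}_\lambda$ is hom-$0$ and vanishes off $\supp\chi$. Hence
\[
-q_j\partial_j\psi^i_\lambda(q)\ge\frac13-\frac{C_1}{k},
\]
and the hypothesis $k>3C_1$ makes the right-hand side strictly positive.

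The consequences then follow in one line each. Since $q_j>0$, dividing by $-q_j$ gives $\partial_j\psi^i_\lambda(q)<0$. For the demand map, $(G^i_\lambda(q))_j=-\partial_j\psi^i_\lambda(q)$ is bounded below by $(\tfrac13-\tfrac{C_1}{k})/q_j>0$, so $G^i_\lambda(q)\in\R^3_{++}$.

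The only point needing care is the uniformity of $C_1$ over the whole open orthant and over all of $\Lambda_1$. This is exactly what the homogeneity and compact-support argument of Lemma~\ref{lem:extension}(iii) supplies, so the argument has no real obstacle. The same single $k$ therefore works for all consumers $i$ and all parameters $\lambda\in\Lambda_1$.
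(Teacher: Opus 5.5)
Your proposal is correct and follows essentially the same route as the paper. The paper also writes $q_j\partial_j\psi^i_\lambda=\tfrac1k\,q_j\partial_j\ph^{\,i}_\lambda-\tfrac13$ and applies the two-sided uniform bound $\bigl|q_j\partial_j\ph^{\,i}_\lambda\bigr|\le C_1$ from Lemma~\ref{lem:extension}(iii), then reads off the sign of $\partial_j\psi^i_\lambda$ and the componentwise lower bound on $G^i_\lambda$.
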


\begin{proof}
$q_{j}\partial_{j}\psi=\tfrac1k\,q_{j}\partial_{j}\ph-\tfrac13$ and
$\abs{q_{j}\partial_{j}\ph}\le C_{1}$ by Lemma~\ref{lem:extension}(iii) (absolute value, as
required for a two-sided uniform bound). Hence
$q_{j}\partial_{j}\psi\le\tfrac{C_{1}}{k}-\tfrac13<0$ and
$-q_{j}\partial_{j}\psi\ge\tfrac13-\tfrac{C_{1}}{k}>0$.
\end{proof}

One value of $k$, chosen independently of both consumer and parameter, can therefore enforce all the properties needed below.

\begin{definition}[Standing choice of $k$]\label{def:k}
Fix, once and for all,
\[
k\;\deq\;1+3\max(C,\,C_{1}),
\]
so that the conclusions of Lemmas~\ref{lem:convexity} and~\ref{lem:positivity} hold
simultaneously, uniformly over $\Lambda_{1}$ and $i\in\{1,2,3\}$. We henceforth write
$m_{k}\deq\tfrac13-\tfrac{\max(C,C_{1})}{k}>0$ for the common margin.
\end{definition}

\subsection{The inverse-demand map is a global diffeomorphism}\label{sec:diffeo}

The map $G^{i}_{\lambda}=-\nabla\psi^{i}_{\lambda}$ is the normalized demand candidate.
To recover a direct utility from it, every strictly positive bundle must correspond to one
and only one supporting normalized price vector. The next proposition establishes this
global invertibility. Its budget identity says that $G^{i}_{\lambda}(q)$ automatically lies
on the unit-income budget hyperplane at prices $q$.

\begin{proposition}[Global diffeomorphism]\label{prop:diffeo}
With $k$ as in Definition~\ref{def:k}, for every $\lambda\in\Lambda_{1}$ and $i\in\{1,2,3\}$ the
map
\[
G^{i}_{\lambda}\;=\;-\nabla\psi^{i}_{\lambda}\;\colon\;\R^{3}_{++}\longrightarrow\R^{3}_{++}
\]
is a $C^{\infty}$ diffeomorphism of $\R^{3}_{++}$ onto $\R^{3}_{++}$. Moreover every
$q\in\R^{3}_{++}$ satisfies the \emph{automatic budget identity}
\begin{equation}\label{eq:autobudget}
q\cdot G^{i}_{\lambda}(q)=1.
\end{equation}
Denoting the inverse by $Q^{i}(\,\cdot\,,\lambda)\deq\bigl(G^{i}_{\lambda}\bigr)^{-1}$, the
map $(x,\lambda)\mapsto Q^{i}(x,\lambda)$ is $C^{\infty}$ on
$\R^{3}_{++}\times W$ (jointly), with
\begin{equation}\label{eq:DQ}
D_{x}Q^{i}(x,\lambda)=-\bigl(D^{2}\psi^{i}_{\lambda}(Q^{i})\bigr)^{-1},
\qquad
D_{\lambda}Q^{i}(x,\lambda)
=-\bigl(D^{2}\psi^{i}_{\lambda}(Q^{i})\bigr)^{-1}D_{\lambda}\nabla_{q}\psi^{i}_{\lambda}(Q^{i});
\end{equation}
\end{proposition}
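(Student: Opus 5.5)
The proof has four parts: the budget identity, positivity of the image, injectivity, and surjectivity; smoothness then follows from the inverse function theorem.

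\textbf{Budget identity and positivity.} The identity \eqref{eq:autobudget} is immediate from the radial Euler identity \eqref{eq:radial}, since
\[
q\cdot G^i_\lambda(q)=-q\cdot\nabla\psi^i_\lambda(q)=1.
\]
Lemma~\ref{lem:positivity} shows that $G^i_\lambda$ maps $\R^3_{++}$ into $\R^3_{++}$.

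\textbf{Local diffeomorphism and injectivity.} By Lemma~\ref{lem:convexity}, the Hessian $D^2\psi^i_\lambda$ is positive definite everywhere. Hence $DG^i_\lambda=-D^2\psi^i_\lambda$ is invertible, and $G^i_\lambda$ is a local diffeomorphism. Injectivity follows from strict convexity on the convex domain $\R^3_{++}$ by the standard monotonicity argument. If $G(q)=G(q')$ with $q\ne q'$, integrating $D^2\psi$ along the segment gives
\[
(q'-q)\cdot\bigl(\nabla\psi(q')-\nabla\psi(q)\bigr)>0,
\]
which is a contradiction.

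\textbf{Surjectivity.} This is the main step. I would use the variational characterization: given $x\in\R^3_{++}$, minimize
\[
h_x(q)=\psi^i_\lambda(q)+x\cdot q
\]
over $\R^3_{++}$. The function $h_x$ is strictly convex, and a critical point satisfies $\nabla\psi(q)=-x$, i.e.\ $G(q)=x$. So it suffices to show that $h_x$ attains a minimum, i.e.\ that it is coercive at infinity and near the boundary of the orthant. The bound $|\ph|\le B$ from Lemma~\ref{lem:extension}(iii) gives
\[
h_x(q)\ge -B/k-\tfrac13\sum_j\log q_j+x\cdot q.
\]
The right-hand side tends to $+\infty$ whenever some $q_j\to0$, because then $-\log q_j\to+\infty$ while the other terms, $x_j q_j-\tfrac13\log q_j$, are bounded below. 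It also tends to $+\infty$ whenever $|q|\to\infty$, because the linear term dominates the logarithm since $x\gg0$. So the sublevel sets of $h_x$ are compact subsets of $\R^3_{++}$, a minimizer exists, and it lies in the interior. Hence $G$ is a bijective local diffeomorphism, that is, a diffeomorphism.

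\textbf{Joint smoothness and derivative formulas.} Apply the implicit function theorem to
\[
\Psi(q,x,\lambda)=\nabla_q\psi^i_\lambda(q)+x=0
\]
on $\R^3_{++}\times\R^3_{++}\times W$. The map $\Psi$ is jointly $C^\infty$ by Lemma~\ref{lem:extension}(i), and $D_q\Psi=D^2\psi$ is invertible (for $\lambda\in W\subset\Lambda_1$, where the uniform estimates hold). Global uniqueness of the solution $q=Q^i(x,\lambda)$ means the local implicit solutions coincide with $Q^i$, which is therefore jointly $C^\infty$. Differentiating the identity $\nabla_q\psi^i_\lambda(Q^i(x,\lambda))=-x$ in $x$ and in $\lambda$ yields \eqref{eq:DQ}.
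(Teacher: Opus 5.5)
Your proof is correct. The budget identity, positivity, injectivity and the implicit-function argument for joint smoothness and \eqref{eq:DQ} are essentially the paper's. The genuine difference is in surjectivity.

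The paper minimizes $\psi^i_\lambda$ itself over the bounded set $\{q\gg0:\ q\cdot x\le1\}$. It gets boundary blow-up from the constraint $q_j\le 1/x_j$, and uses log-homogeneity \eqref{eq:loghom} to push the minimizer onto the face $q\cdot x=1$. It then uses the radial identity \eqref{eq:radial} to show the Lagrange multiplier is $\mu=1$, which yields $x=G^i_\lambda(q^*)$.

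You instead minimize the unconstrained function $\psi^i_\lambda(q)+x\cdot q$, in the Legendre--Fenchel style. Its critical-point equation is directly $G^i_\lambda(q)=x$. Coercivity comes from the one-variable terms $x_jq_j-\tfrac13\log q_j$, which blow up at both ends because $x\gg0$.

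Your route is shorter. It needs neither log-homogeneity nor a multiplier computation, so it works for any strictly convex gauge of this ``bounded perturbation plus log barrier'' form.

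The paper's constrained formulation is chosen because problem \eqref{eq:minproblem} is reused immediately. Lemma~\ref{lem:equivalence} and the optimality step of Theorem~\ref{thm:rationalize} cite its minimizer as $Q^i(x,\lambda)$ and its value as $u^i_\lambda(x)$, which is Mantel's formula. Your approach recovers this in one line. At your minimizer $q^*$ the radial identity gives $q^*\cdot x=1$. Hence for any $q\gg0$ with $q\cdot x\le1$,
\[
\psi^i_\lambda(q)\;\ge\;\psi^i_\lambda(q^*)+1-q\cdot x\;\ge\;\psi^i_\lambda(q^*),
\]
with strict inequality for $q\ne q^*$ by strict convexity.
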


\begin{proof}
Fix $\lambda,i$ and drop them from the notation.

\emph{Values in $\R^{3}_{++}$ and \eqref{eq:autobudget}.} By
Lemma~\ref{lem:positivity}, $G(q)=-\nabla\psi(q)\gg0$. Identity \eqref{eq:autobudget} is the
radial Euler identity \eqref{eq:radial}: $q\cdot G(q)=-q\cdot\nabla\psi(q)=1$.

\emph{Injectivity.} $\R^{3}_{++}$ is convex; for $q\ne q'$, integrating the Hessian along
the segment $q'+t(q-q')\subset\R^{3}_{++}$ and using Lemma~\ref{lem:convexity},
\[
\bigl(\nabla\psi(q)-\nabla\psi(q')\bigr)\cdot(q-q')
=\int_{0}^{1}(q-q')^{\top}D^{2}\psi\bigl(q'+t(q-q')\bigr)(q-q')\,dt\;>\;0,
\]
so $\nabla\psi$, hence $G$, is injective.

\emph{Surjectivity.} Let $x\in\R^{3}_{++}$ and consider
\begin{equation}\label{eq:minproblem}
\min\bigl\{\psi(q)\ :\ q\in\R^{3}_{++},\ q\cdot x\le1\bigr\}.
\end{equation}
The feasible set is nonempty ($q=\varepsilon\one$ for small $\varepsilon>0$) and bounded
($0<q_{j}\le1/x_{j}$). On it, $\abs{\ph(q)/k}\le B/k$ by Lemma~\ref{lem:extension}(iii),
while $-\tfrac13\sum_{j}\log q_{j}\ge-\tfrac13\log q_{j_{0}}+\tfrac13\sum_{j\ne j_{0}}\log x_{j}
\to+\infty$ as any $q_{j_{0}}\to0$ within the feasible set. Hence any minimizing sequence
stays in a compact subset of the open orthant and, by continuity, the minimum is attained at
some $q^{*}\gg0$; it is unique because $\psi$ is strictly convex
(Lemma~\ref{lem:convexity}) on the convex feasible set. The minimizer lies on the face
$q\cdot x=1$: if $q^{*}\cdot x<1$, then $t\deq1/(q^{*}\cdot x)>1$ gives a feasible point
$tq^{*}$ with, by \eqref{eq:loghom},
$\psi(tq^{*})=\psi(q^{*})-\log t<\psi(q^{*})$, a contradiction. At an interior-of-orthant
minimizer on the smooth constraint surface $\{q\cdot x=1\}$ (whose gradient $x$ never
vanishes), the Lagrange condition holds: there is $\mu\in\R$ with
\[
\nabla\psi(q^{*})+\mu x=0.
\]
Taking the inner product with $q^{*}$ and using \eqref{eq:radial} and $q^{*}\cdot x=1$:
$-1+\mu=0$, so $\mu=1$ and $x=-\nabla\psi(q^{*})=G(q^{*})$. Thus $G$ is onto $\R^{3}_{++}$.

\emph{Smoothness of the inverse, jointly in $(x,\lambda)$.} Define the $C^{\infty}$ map
\[
H\colon\R^{3}_{++}\times\R^{3}_{++}\times W\to\R^{3},
\qquad
H(q,x,\lambda)\deq\nabla_{q}\psi^{i}_{\lambda}(q)+x ,
\]
so that $G^{i}_{\lambda}(q)=x\iff H(q,x,\lambda)=0$. At any zero,
$D_{q}H=D^{2}\psi^{i}_{\lambda}(q)\succeq m_{k}\,\widehat q^{-2}\succ0$ is invertible
(Lemma~\ref{lem:convexity}), so the implicit function theorem yields a local $C^{\infty}$
solution $q=Q^{i}(x,\lambda)$; by the already-proved global bijectivity these local
solutions agree and patch into a single $C^{\infty}$ map on
$\R^{3}_{++}\times W$. Differentiating the identity $H(Q^{i},x,\lambda)=0$ in
$x$ gives $D^{2}\psi\cdot D_{x}Q^{i}+I=0$, and in $\lambda$ gives
$D^{2}\psi\cdot D_{\lambda}Q^{i}+D_{\lambda}\nabla_{q}\psi=0$; solving yields
\eqref{eq:DQ}. This is the explicit map to which the implicit function theorem is applied;
no bordered system is needed, because the budget constraint is not an extra
equation: by \eqref{eq:autobudget} it holds \emph{identically} along the relation
$x=G(q)$.
\end{proof}

\subsection{The direct utilities}\label{sec:direct}

\subsubsection{Interior utility and demand}

Having inverted the normalized demand map, we can associate to every strictly positive
bundle $x$ the unique normalized price $Q^{i}(x,\lambda)$ that supports it. Evaluating the
gauge at that supporting price produces the direct utility representation. The equivalent
minimization formulas below make explicit the connection with Mantel's construction.

\begin{definition}[Direct utility]\label{def:direct}
For $i\in\{1,2,3\}$, $\lambda\in\Lambda_{1}$, define $u^{i}_{\lambda}\colon\R^{3}_{++}\to\R$ by
\begin{equation}\label{eq:udef}
u^{i}_{\lambda}(x)\;\deq\;\psi^{i}_{\lambda}\bigl(Q^{i}(x,\lambda)\bigr).
\end{equation}
(We write $u^{1},u^{2}$ without the subscript $\lambda$, since $\psi^{1},\psi^{2}$ do not
depend on $\lambda$.)
\end{definition}

\begin{lemma}[Three equivalent formulas]\label{lem:equivalence}
For every $x\in\R^{3}_{++}$,
\[
\begin{aligned}
u^{i}_{\lambda}(x)
&=\min\bigl\{\psi^{i}_{\lambda}(q):q\in\R^{3}_{++},\ q\cdot x\le1\bigr\}
=\min\bigl\{\psi^{i}_{\lambda}(q):q\in\R^{3}_{++},\ q\cdot x=1\bigr\}\\
&=\inf_{q\in\R^{3}_{++}}\psi^{i}_{\lambda}\!\Bigl(\frac{q}{q\cdot x}\Bigr),
\end{aligned}
\]
and the (unique) minimizer of the first two problems is $Q^{i}(x,\lambda)$. The third
formula is Mantel's \cite[Remark~1]{mantel74}.
\end{lemma}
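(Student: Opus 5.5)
The first two equalities are essentially contained in the surjectivity part of the proof of Proposition~\ref{prop:diffeo}; the third comes from the log-homogeneity identity \eqref{eq:loghom}. I would fix $i,\lambda$, drop them from the notation, and proceed in three steps.

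\emph{Step 1 (the $\le$-constrained problem).} Fix $x\in\R^{3}_{++}$. The argument in the proof of Proposition~\ref{prop:diffeo} already gives three facts. The problem \eqref{eq:minproblem} has a minimizer $q^*\gg0$, because the bound $B$ together with the logarithmic blow-up confines minimizing sequences to a compact subset of the open orthant. This minimizer is unique by the strict convexity of Lemma~\ref{lem:convexity}. It lies on $q\cdot x=1$ and satisfies $G(q^*)=x$. By injectivity of $G$ we get $q^*=Q(x,\lambda)$, so the minimum equals $\psi(Q(x,\lambda))=u(x)$.

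\emph{Step 2 (the equality-constrained problem).} The set $\{q\cdot x=1\}$ is contained in the feasible set of Step 1 and contains $q^*$. Hence its minimum over this smaller set exists and equals the same value. Its minimizer is unique, because any minimizer on the smaller set would also minimize the larger problem.

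\emph{Step 3 (Mantel's formula).} The map $q\mapsto q/(q\cdot x)$ sends $\R^{3}_{++}$ onto the slice $\{q\cdot x=1\}$ and restricts to the identity there. The infimum in the third formula is therefore the infimum of $\psi$ over that slice, and it is attained by Step 2. Equivalently, \eqref{eq:loghom} gives $\psi(q/(q\cdot x))=\psi(q)+\log(q\cdot x)$, which is the form in \cite[Remark~1]{mantel74}.

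\emph{Main obstacle.} There is little real difficulty here. The only delicate point is the existence and interiority of the minimizer in Step 1. I would handle it by citing the coercivity argument from Proposition~\ref{prop:diffeo} rather than redoing it: on the feasible set, $q_j\le 1/x_j$, $|\ph|\le B$, and $-\log q_{j_0}\to\infty$ as $q_{j_0}\to0$.
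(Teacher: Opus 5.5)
Your proposal is correct and takes essentially the same route as the paper. Like the paper, you take existence, uniqueness, location on the face $q\cdot x=1$ and the condition $G(q^*)=x$ from the surjectivity argument of Proposition~\ref{prop:diffeo}, and you get Mantel's formula from the fact that $q\mapsto q/(q\cdot x)$ maps $\R^{3}_{++}$ onto that face; your Step~2 just spells out the restriction-to-the-face argument that the paper leaves implicit.
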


\begin{proof}
In the proof of Proposition~\ref{prop:diffeo} it was shown that problem
\eqref{eq:minproblem} has a unique minimizer $q^{*}$, that it lies on the face
$q\cdot x=1$ (by the log-homogeneity \eqref{eq:loghom}: any feasible $q$ with $q\cdot x<1$
is strictly improved by the feasible rescaling $q/(q\cdot x)$), and that it satisfies
$x=G(q^{*})$, i.e.\ $q^{*}=Q^{i}(x,\lambda)$. This proves the first two equalities and the
identification of the minimizer. For the third: as $q$ ranges over $\R^{3}_{++}$, the point
$q/(q\cdot x)$ ranges exactly over the face $\{q'\gg0:q'\cdot x=1\}$, so the infimum of
$\psi(q/(q\cdot x))$ equals the minimum over that face.
\end{proof}

\begin{theorem}[Regularity, monotonicity, concavity]\label{thm:direct}
With $k$ as in Definition~\ref{def:k}:
\begin{enumerate}[label=\textup{(\alph*)},leftmargin=2.2em]
\item $(x,\lambda)\mapsto u^{i}_{\lambda}(x)$ is $C^{\infty}$ on
$\R^{3}_{++}\times W$; $u^{1},u^{2}$ are independent of $\lambda$;
\item $\nabla_{x}u^{i}_{\lambda}(x)=Q^{i}(x,\lambda)\gg0$; in particular $u^{i}_{\lambda}$
is \emph{strongly monotone}: $x'\ge x$, $x'\ne x$ imply $u^{i}_{\lambda}(x')>u^{i}_{\lambda}(x)$;
\item $D^{2}_{x}u^{i}_{\lambda}(x)=-\bigl(D^{2}\psi^{i}_{\lambda}(Q^{i}(x,\lambda))\bigr)^{-1}\prec0$
for every $x$; since $\R^{3}_{++}$ is convex, $u^{i}_{\lambda}$ is \emph{strictly concave}
on its whole domain, hence strictly quasiconcave.
\end{enumerate}
\end{theorem}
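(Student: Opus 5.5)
The plan is to read everything off the defining formula $u^{i}_{\lambda}=\psi^{i}_{\lambda}\circ Q^{i}$ by means of the chain rule and the identities already available. The two key inputs are the inverse relation $\nabla_q\psi^{i}_{\lambda}(Q^{i}(x,\lambda))=-x$ and the derivative formula \eqref{eq:DQ} from Proposition~\ref{prop:diffeo}.

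For (a), $Q^{i}$ is jointly $C^{\infty}$ on $\R^{3}_{++}\times W$ by Proposition~\ref{prop:diffeo}. The map $(q,\lambda)\mapsto\psi^{i}_{\lambda}(q)$ is jointly $C^{\infty}$ by Lemma~\ref{lem:extension}(i) and the definition of the gauge. The composition $(x,\lambda)\mapsto\psi^{i}_{\lambda}(Q^{i}(x,\lambda))$ is therefore jointly $C^{\infty}$. For $i=1,2$ the gauge $\psi^{i}$ does not involve $\lambda$, so the equation $G^{i}(q)=x$ and its unique solution $Q^{i}$ are $\lambda$-free, and so is $u^{i}$.

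For (b), differentiate in $x$:
\[
\nabla_x u^{i}_{\lambda}(x)=\bigl(D_xQ^{i}\bigr)^{\top}\nabla_q\psi^{i}_{\lambda}(Q^{i}).
\]
Substitute $\nabla_q\psi^{i}_{\lambda}(Q^{i})=-x$ and $D_xQ^{i}=-(D^{2}\psi^{i}_{\lambda}(Q^{i}))^{-1}$, which is symmetric. This gives $\nabla_xu^{i}_{\lambda}(x)=(D^{2}\psi^{i}_{\lambda}(Q^{i}))^{-1}x$. The missing step is to identify this with $Q^{i}$. By the differentiated Euler identity \eqref{eq:diffeuler},
\[
D^{2}\psi(q)\,q=-\nabla\psi(q)=G(q).
\]
At $q=Q^{i}(x,\lambda)$ this reads $D^{2}\psi(Q^{i})\,Q^{i}=x$, so $(D^{2}\psi(Q^{i}))^{-1}x=Q^{i}$. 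Hence $\nabla_xu^{i}_{\lambda}=Q^{i}\gg0$. This identification is the crux of the argument, and I expect it to be the step needing the most care. One can also see it from the envelope theorem applied to the constrained minimum in Lemma~\ref{lem:equivalence}: the multiplier is $1$ and the constraint gradient in $x$ is $q$, which gives the same value.

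Strong monotonicity then follows by integrating the gradient along the segment from $x$ to $x'$. The segment stays in the open orthant, and $Q^{i}\gg0$ pairs with $x'-x\ge0$, $x'-x\neq0$, to give a strictly positive integrand.

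For (c), differentiate $\nabla_xu^{i}_{\lambda}=Q^{i}$ once more and use \eqref{eq:DQ}:
\[
D_x^{2}u^{i}_{\lambda}(x)=D_xQ^{i}(x,\lambda)=-\bigl(D^{2}\psi^{i}_{\lambda}(Q^{i})\bigr)^{-1}.
\]
By Lemma~\ref{lem:convexity} with $k$ as in Definition~\ref{def:k}, the matrix $D^{2}\psi^{i}_{\lambda}(Q^{i})$ is bounded below by $m_k\widehat{Q}^{-2}$ and so is positive definite. Its inverse is therefore positive definite, and $D^{2}_xu^{i}_{\lambda}\prec0$ everywhere. On the convex set $\R^{3}_{++}$ a negative definite Hessian gives strict concavity, by restricting to segments and using the second-order characterization in one variable. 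Strict concavity implies strict quasiconcavity.
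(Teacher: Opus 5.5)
Your proposal is correct and follows the paper's proof essentially step for step. Like the paper, you use the chain rule with \eqref{eq:DQ} and $\nabla_q\psi^{i}_{\lambda}(Q^{i})=-x$, identify $(D^{2}\psi(Q^{i}))^{-1}x=Q^{i}$ through the differentiated Euler identity \eqref{eq:diffeuler}, and integrate along a segment for strong monotonicity. Part (c) then comes, as in the paper, from differentiating $\nabla_xu=Q^{i}$ and invoking Lemma~\ref{lem:convexity}; even your envelope-theorem aside matches the paper's remark that it verifies the envelope conclusion by direct computation.
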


\begin{proof}
(a) Composition of the $C^{\infty}$ maps $(x,\lambda)\mapsto(Q^{i}(x,\lambda),\lambda)$
(Proposition~\ref{prop:diffeo}) and $(q,\lambda)\mapsto\psi^{i}_{\lambda}(q)$.

(b) By Lemma~\ref{lem:equivalence}, $u(x)=\psi(q)+\mu\,(q\cdot x-1)$ at
$(q,\mu)=(Q(x,\lambda),1)$ is a constrained value function with multiplier $\mu=1$
(established in Proposition~\ref{prop:diffeo}); we verify the envelope conclusion by direct
computation rather than by citation. Chain rule on \eqref{eq:udef}:
\[
\nabla_{x}u
=\bigl(D_{x}Q\bigr)^{\top}\nabla_{q}\psi(Q)
=\Bigl(-\bigl(D^{2}\psi(Q)\bigr)^{-1}\Bigr)^{\top}\bigl(-x\bigr)
=\bigl(D^{2}\psi(Q)\bigr)^{-1}x ,
\]
using \eqref{eq:DQ}, the symmetry of the Hessian, and $\nabla_{q}\psi(Q)=-x$ (definition of
$Q$). By the differentiated Euler identity \eqref{eq:diffeuler}, $\bigl(D^{2}\psi(Q)\bigr)^{-1}x=Q(x,\lambda)$. Hence
$\nabla_{x}u=Q\gg0$. Strong monotonicity: for $x'\ge x$, $x'\ne x$, the segment
$x+t(x'-x)$ lies in $\R^{3}_{++}$ and
\[
u(x')-u(x)=\int_{0}^{1}\nabla_{x}u\bigl(x+t(x'-x)\bigr)\cdot(x'-x)\,dt>0,
\]
because the integrand is a sum of nonnegative terms with at least one strictly positive
($\nabla_{x}u\gg0$ and $x'-x\ge0$, $\ne0$).

(c) Differentiating $\nabla_{x}u(x)=Q(x,\lambda)$ in $x$ and using \eqref{eq:DQ},
\[
D^{2}_{x}u(x)=D_{x}Q=-\bigl(D^{2}\psi(Q)\bigr)^{-1}\prec0 ,
\]
since $D^{2}\psi\succ0$ (Lemma~\ref{lem:convexity}). A $C^{2}$ function with everywhere
negative-definite Hessian on a convex open set is strictly concave; strict concavity implies
strict quasiconcavity.
\end{proof}

\begin{theorem}[Exact demand rationalization]\label{thm:rationalize}
Let $k$ be as in Definition~\ref{def:k}, $i\in\{1,2,3\}$, $\lambda\in\Lambda_{1}$. For every
$(p,m)\in\R^{3}_{++}\times(0,\infty)$, the problem
\[
\max\bigl\{u^{i}_{\lambda}(x)\ :\ x\in\R^{3}_{++},\ p\cdot x\le m\bigr\}
\]
has the unique solution
\begin{equation}\label{eq:demand}
x^{i}_{\lambda}(p,m)\;=\;-\nabla\psi^{i}_{\lambda}\!\Bigl(\frac{p}{m}\Bigr)
\;=\;-\,m\,\nabla\psi^{i}_{\lambda}(p)
\;=\;\frac{m}{3}\,\widehat p^{\,-1}\one\;-\;\frac{m}{k}\,\nabla\ph^{\,i}_{\lambda}(p),
\end{equation}
which satisfies the budget constraint \emph{with equality}, $p\cdot x^{i}_{\lambda}(p,m)=m$,
is strictly positive in every component, is homogeneous of degree zero in $(p,m)$, and is
$C^{\infty}$ jointly in $(p,m,\lambda)$.
\end{theorem}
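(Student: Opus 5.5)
The plan is to reduce the utility maximization to the duality already set up in Lemma~\ref{lem:equivalence} and Theorem~\ref{thm:direct}: the candidate maximizer is $x^{*}\deq G^{i}_{\lambda}(p/m)=-\nabla\psi^{i}_{\lambda}(p/m)$. The first two expressions in \eqref{eq:demand} agree by the gradient homogeneity \eqref{eq:gradhom}, since $\nabla\psi(p/m)=m\nabla\psi(p)$. The third expression follows by differentiating Definition~\ref{def:gauge}: $-\nabla\psi=\tfrac13\widehat p^{-1}\one-\tfrac1k\nabla\ph$. Strict positivity of $x^{*}$ comes from Lemma~\ref{lem:positivity}. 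The budget equality comes from the automatic budget identity \eqref{eq:autobudget}: $(p/m)\cdot x^{*}=1$.

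Next I show that $x^{*}$ is optimal. By Proposition~\ref{prop:diffeo}, $Q^{i}(x^{*},\lambda)=p/m$, so Theorem~\ref{thm:direct}(b) gives $\nabla u(x^{*})=p/m$. Take any feasible $x\gg0$ with $p\cdot x\le m$. Concavity of $u$ on the convex open orthant (Theorem~\ref{thm:direct}(c)) gives
\[
u(x)\le u(x^{*})+\tfrac1m\,p\cdot(x-x^{*})\le u(x^{*}),
\]
because $p\cdot x\le m=p\cdot x^{*}$. Strict concavity makes the first inequality strict whenever $x\neq x^{*}$, so the maximizer is unique.

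An alternative route avoids the gradient identity. By Lemma~\ref{lem:equivalence}, $u(x)\le\psi(q)$ for every $q$ with $q\cdot x\le 1$. Taking $q=p/m$ gives $u(x)\le\psi(p/m)$, and this bound equals $u(x^{*})$ because $Q(x^{*})=p/m$. This gives optimality directly; uniqueness still uses strict concavity. I will present the concavity argument and mention this as a check.

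The remaining properties are straightforward. Homogeneity of degree zero in $(p,m)$ holds because $x^{*}$ depends only on $p/m$. Joint smoothness follows from the formula $\tfrac m3\widehat p^{-1}\one-\tfrac mk\nabla\ph^{\,i}_{\lambda}(p)$ together with Lemma~\ref{lem:extension}(i), which gives joint $C^\infty$ dependence of $\ph$ on $(p,\lambda)$.

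I do not expect a serious obstacle. The one point needing care is that the problem is posed on the open orthant $\R^{3}_{++}$, so existence is not given by compactness. That is why I exhibit the maximizer explicitly and use the supporting-hyperplane inequality rather than any compactness argument. Extending the conclusion to the closed budget set on $\R^3_+$ belongs to the later boundary-extension step, not to this statement.
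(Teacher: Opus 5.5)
Your proposal is correct and follows the paper's proof essentially step by step: the candidate $G^{i}_{\lambda}(p/m)$, the formulas via \eqref{eq:gradhom}, positivity and the budget identity, uniqueness from strict concavity, and homogeneity and smoothness from the explicit formula. The only difference is that the paper proves optimality with the duality bound of Lemma~\ref{lem:equivalence}, $u^{i}_{\lambda}(x)\le\psi^{i}_{\lambda}(p/m)=u^{i}_{\lambda}(x^{*})$ (your ``alternative route''), whereas your main argument uses the supporting-hyperplane inequality from $\nabla u^{i}_{\lambda}(x^{*})=p/m$ together with concavity; both are valid.
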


\begin{proof}
Set $q\deq p/m\in\R^{3}_{++}$ and $\bar x\deq G^{i}_{\lambda}(q)=-\nabla\psi^{i}_{\lambda}(q)$.

\emph{Formulas.} The second equality in \eqref{eq:demand} is the degree-$(-1)$ homogeneity
of $\nabla\psi$ \eqref{eq:gradhom}; the third is
$\nabla\psi(p)=\tfrac1k\nabla\ph(p)-\tfrac13\widehat p^{\,-1}\one$. This is Mantel's demand.

\emph{Budget with equality and positivity.} By the automatic budget identity
\eqref{eq:autobudget}, $q\cdot\bar x=1$, i.e.\ $p\cdot\bar x=m$: the constraint binds
exactly, as an identity. Positivity is Lemma~\ref{lem:positivity}, quantitatively
$\bar x_{j}\ge m\,m_{k}/p_{j}>0$.

\emph{Optimality.} Let $x$ be feasible: $p\cdot x\le m$, i.e.\ $q\cdot x\le1$. Then $q$ is
feasible for the minimization defining $u^{i}_{\lambda}(x)$ (Lemma~\ref{lem:equivalence}),
so
\[
u^{i}_{\lambda}(x)\le\psi^{i}_{\lambda}(q).
\]
On the other hand, since $Q^{i}(\bar x,\lambda)=\bigl(G^{i}_{\lambda}\bigr)^{-1}\bigl(G^{i}_{\lambda}(q)\bigr)=q$,
Definition~\ref{def:direct} gives $u^{i}_{\lambda}(\bar x)=\psi^{i}_{\lambda}(q)$. Hence
$u^{i}_{\lambda}(x)\le u^{i}_{\lambda}(\bar x)$ for every feasible $x$: $\bar x$ is a
maximizer.

\emph{Uniqueness.} If $x\ne\bar x$ were another maximizer, then by strict concavity
(Theorem~\ref{thm:direct}(c)) the feasible point $\tfrac12(x+\bar x)$ (the budget set is
convex) would satisfy $u(\tfrac12(x+\bar x))>\tfrac12u(x)+\tfrac12u(\bar x)=u(\bar x)$,
contradicting maximality.

\emph{Regularity and homogeneity.} $(p,m,\lambda)\mapsto-\nabla\psi^{i}_{\lambda}(p/m)$ is a
composition of $C^{\infty}$ maps; and $(tp)/(tm)=p/m$ gives degree-zero homogeneity.
\end{proof}

\subsubsection{The extension to the closed orthant}\label{subsec:extension}

We now extend the block-1 utilities to the closed orthant, so that the consumers fit the
standard class of Definition~\ref{def:Meconomy}, and we show that the extension changes
nothing in the interior analysis. The comparison with the symmetric Cobb--Douglas
benchmark is used only to control boundary behavior: it bounds the constructed homogeneous
utility above and below by fixed multiples of the geometric mean. The engine is the
following uniform two-sided estimate.

\begin{proposition}[Uniform Cobb--Douglas sandwich]\label{prop:boundary}
Let $k$ be as in Definition~\ref{def:k}, $m_{k}=\tfrac13-\tfrac{\max(C,C_{1})}{k}>0$, and
set
\[
C^{*}\;\deq\;\frac{B}{k}+\log\frac{1}{m_{k}}\;<\;\infty.
\]
For every $i\in\{1,2,3\}$, $\lambda\in\Lambda_{1}$, and $x\in\R^{3}_{++}$:
\begin{enumerate}[label=\textup{(\roman*)},leftmargin=2.2em]
\item $\bigl|\,u^{i}_{\lambda}(x)-\tfrac13\sum_{j=1}^{3}\log x_{j}\,\bigr|\le C^{*}$;
equivalently, in multiplicative form,
\begin{equation}\label{eq:sandwich}
e^{-C^{*}}\,(x_{1}x_{2}x_{3})^{1/3}
\;\le\; e^{\,u^{i}_{\lambda}(x)}
\;\le\; e^{C^{*}}\,(x_{1}x_{2}x_{3})^{1/3};
\end{equation}
\item consequently $u^{i}_{\lambda}(x)\to-\infty$, uniformly in $(i,\lambda)$, whenever $x$
converges in $\R^{3}$ to a point of $\partial\R^{3}_{+}$; every upper contour set
$\{x\in\R^{3}_{++}:u^{i}_{\lambda}(x)\ge c\}$ is therefore closed in $\R^{3}$, i.e.\
indifference surfaces do not accumulate on the boundary of the orthant.
\end{enumerate}
\end{proposition}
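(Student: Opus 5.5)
The plan is to compare $u^{i}_{\lambda}$ with the symmetric Cobb--Douglas value function, using the minimization formula of Lemma~\ref{lem:equivalence}:
\[
u^{i}_{\lambda}(x)=\min\{\psi^{i}_{\lambda}(q):q\in\R^{3}_{++},\ q\cdot x=1\}.
\]
By Definition~\ref{def:gauge}, $\psi^{i}_{\lambda}(q)=\ph^{\,i}_{\lambda}(q)/k+\ell(q)$ with $\ell(q)\deq-\tfrac13\sum_j\log q_j$. Lemma~\ref{lem:extension}(iii) gives $|\ph^{\,i}_{\lambda}/k|\le B/k$ on all of $\R^{3}_{++}$, uniformly in $i$ and in $\lambda\in\Lambda_1$. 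Hence, pointwise on the budget face, $\ell(q)-B/k\le\psi^{i}_{\lambda}(q)\le\ell(q)+B/k$.

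Taking minima over the face $\{q\cdot x=1\}$ preserves these inequalities. It therefore suffices to compute $v(x)\deq\min\{\ell(q):q\cdot x=1\}$ explicitly. This is the dual of the symmetric Cobb--Douglas problem. Minimizing $-\tfrac13\sum\log q_j$ subject to $q\cdot x=1$ gives, by strict convexity and a one-line Lagrange computation, the minimizer $q_j=1/(3x_j)$. The minimum value is
\[
v(x)=\log3+\tfrac13\sum_j\log x_j .
\]
Concretely, I would obtain the two bounds directly, without any envelope argument:
\begin{itemize}
\item \emph{Upper bound.} Evaluate $\psi$ at the Cobb--Douglas point $q^{\mathrm{CD}}=(1/(3x_j))_j$. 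This gives $u^{i}_{\lambda}(x)\le v(x)+B/k$.
\item \emph{Lower bound.} At the true minimizer $Q^{i}(x,\lambda)$, which lies on the face, we have $u^{i}_{\lambda}(x)=\psi(Q^{i})\ge\ell(Q^{i})-B/k\ge v(x)-B/k$.
\end{itemize}
Together these give $\bigl|u^{i}_{\lambda}(x)-\tfrac13\sum_j\log x_j-\log3\bigr|\le B/k$.

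To reach the stated constant, note that $m_k\le\tfrac13$, so $\log(1/m_k)\ge\log3$. Therefore
\[
\Bigl|u^{i}_{\lambda}(x)-\tfrac13\sum_j\log x_j\Bigr|\le \frac Bk+\log3\le C^{*},
\]
which is (i). Exponentiating gives the multiplicative form \eqref{eq:sandwich}. The statement's constant is slightly wasteful, so I expect no obstacle here. The only point to watch is that $B$ is a bound valid on the whole orthant, not only on the target cone, and Lemma~\ref{lem:extension}(iii) supplies exactly this through hom-$0$ invariance and the compact support of the cutoff.

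For (ii), let $x^{(n)}\to\bar x\in\partial\R^{3}_{+}$ in $\R^3$. The coordinates of $x^{(n)}$ are bounded, and at least one tends to $0$, so $(x^{(n)}_1x^{(n)}_2x^{(n)}_3)^{1/3}\to0$. The upper inequality in \eqref{eq:sandwich}, with $C^{*}$ independent of $(i,\lambda)$, then forces $u^{i}_{\lambda}(x^{(n)})\to-\infty$ uniformly.

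For closedness of an upper contour set $\{u^{i}_{\lambda}\ge c\}$, take a limit point of the set. If it were on $\partial\R^{3}_{+}$, the values of $u^{i}_{\lambda}$ along the approximating sequence would tend to $-\infty$, contradicting $u^{i}_{\lambda}\ge c$. If it lies in $\R^{3}_{++}$, it belongs to the set by continuity of $u^{i}_{\lambda}$ (Theorem~\ref{thm:direct}(a)). Hence the set is closed in $\R^3$.
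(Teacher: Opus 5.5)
Your proof is correct. Part (ii) matches the paper's argument, but your route to (i) differs.

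\emph{The paper's route.} It works only at the supporting price $q=Q^{i}(x,\lambda)$. It combines the componentwise bound $x_j\ge m_k/q_j$ from Lemma~\ref{lem:positivity} with the budget identity $q\cdot x=1$, so that $m_k\le q_jx_j\le 1$. This traps $-\tfrac13\sum_j\log q_j-\tfrac13\sum_j\log x_j$ in $[0,\log(1/m_k)]$, and the bound $|\ph^{\,i}_{\lambda}|\le B$ then finishes the estimate. The result is the asymmetric interval $u^{i}_{\lambda}(x)-\tfrac13\sum_j\log x_j\in[-B/k,\,B/k+\log(1/m_k)]$, which is exactly where the constant $C^{*}$ comes from.

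\emph{Your route.} You compare two value functions over the same budget face: the perturbed gauge against the pure logarithmic gauge. You use the minimization formula of Lemma~\ref{lem:equivalence}, the sup bound $B$, and the explicit Cobb--Douglas dual value $\log3+\tfrac13\sum_j\log x_j$. That value also follows from AM--GM, since $\prod_j(3q_jx_j)\le(\sum_j q_jx_j)^3=1$ on the face.

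\emph{What each buys.} Your version never uses $m_k$ or the positivity lemma. It gives the sharper, centred estimate $|u^{i}_{\lambda}(x)-\tfrac13\sum_j\log x_j-\log3|\le B/k$, which you correctly weaken to $C^{*}$ via $m_k\le\tfrac13$. It is the standard sup-norm perturbation argument for value functions, and it would survive any bounded hom-$0$ perturbation of the logarithmic gauge. The paper's version needs only the single point $Q^{i}(x,\lambda)$ and inequalities already established. It uses no competitor price and does not appeal to the minimum characterization over the face.
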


\begin{proof}
(i) Let $q=Q^{i}(x,\lambda)$, so that $x=G^{i}_{\lambda}(q)$ and $q\cdot x=1$
\eqref{eq:autobudget}. From the componentwise bound of Lemma~\ref{lem:positivity},
$x_{j}=\bigl(G^{i}_{\lambda}(q)\bigr)_{j}\ge m_{k}/q_{j}$, i.e.\ $q_{j}x_{j}\ge m_{k}$;
from $q\cdot x=1$ with positive terms, $q_{j}x_{j}\le1$. Hence
\[
\log q_{j}=\log(q_{j}x_{j})-\log x_{j}\in\bigl[\log m_{k}-\log x_{j},\ -\log x_{j}\bigr],
\]
so that $-\tfrac13\sum_{j}\log q_{j}-\tfrac13\sum_{j}\log x_{j}\in[0,\log(1/m_{k})]$. Since
$u^{i}_{\lambda}(x)=\psi^{i}_{\lambda}(q)=\ph^{\,i}_{\lambda}(q)/k-\tfrac13\sum_{j}\log q_{j}$
and $\abs{\ph^{\,i}_{\lambda}}\le B$ (Lemma~\ref{lem:extension}(iii)), the claim follows;
exponentiating gives \eqref{eq:sandwich}.

(ii) If $x_{n}\to x^{*}\in\partial\R^{3}_{+}$ with $x^{*}$ finite, then some coordinate
tends to $0$ while all remain bounded, so $(x_{n,1}x_{n,2}x_{n,3})^{1/3}\to0$ and, by
\eqref{eq:sandwich}, $e^{u^{i}_{\lambda}(x_{n})}\to0$ uniformly in $(i,\lambda)$, i.e.\
$u^{i}_{\lambda}(x_{n})\to-\infty$. Hence no sequence in an upper contour set
$\{u\ge c\}$ accumulates at a finite boundary point; by continuity of $u^{i}_{\lambda}$ on
the open orthant, $\{u\ge c\}$ is closed in $\R^{3}$.
\end{proof}

\begin{proposition}[The extended utilities]\label{prop:extension}
For $i\in\{1,2,3\}$ and $\lambda\in\Lambda_{1}$ define
$U^{i}_{\lambda}\colon\R^{3}_{+}\to\R$,
\[
U^{i}_{\lambda}(x)\;\deq\;
\begin{cases}
\exp\bigl(u^{i}_{\lambda}(x)\bigr), & x\in\R^{3}_{++},\\[1mm]
0, & x\in\partial\R^{3}_{+}.
\end{cases}
\]
Then:
\begin{enumerate}[label=\textup{(\roman*)},leftmargin=2.2em]
\item \textbf{(Joint continuity on the closed orthant.)} $U^{i}_{\lambda}$ is continuous on
$\R^{3}_{+}$, and $(x,\lambda)\mapsto U^{i}_{\lambda}(x)$ is jointly continuous on
$\R^{3}_{+}\times\Lambda_{1}$: if $(x_{n},\lambda_{n})\to(x,\lambda)$ with
$x\in\partial\R^{3}_{+}$, then $U^{i}_{\lambda_{n}}(x_{n})\to0=U^{i}_{\lambda}(x)$, because
the constant $C^{*}$ in \eqref{eq:sandwich} does not depend on $\lambda\in\Lambda_{1}$;
\item \textbf{(Global monotonicity.)} $x'\ge x$ implies $U^{i}_{\lambda}(x')\ge
U^{i}_{\lambda}(x)$, with strict inequality whenever moreover $x'\ne x$ and
$x'\in\R^{3}_{++}$;
\item \textbf{(Global quasiconcavity.)} for all $x,y\in\R^{3}_{+}$ and $t\in[0,1]$,
$U^{i}_{\lambda}\bigl(tx+(1-t)y\bigr)\ge\min\bigl\{U^{i}_{\lambda}(x),U^{i}_{\lambda}(y)\bigr\}$;
hence the induced preference is convex on $\R^{3}_{+}$;
\item \textbf{(Interior strict properties.)} on $\R^{3}_{++}$, $U^{i}_{\lambda}$ is
$C^{\infty}$, strongly monotone and strictly quasiconcave, and it represents the same
preference as $u^{i}_{\lambda}$;
\item \textbf{(Local nonsatiation.)} every relative neighborhood in $\R^{3}_{+}$ of every
point of $\R^{3}_{+}$ contains a strictly preferred bundle.
\end{enumerate}
\end{proposition}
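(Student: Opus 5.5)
The plan is to derive all five properties from three earlier facts: the uniform Cobb--Douglas sandwich of Proposition~\ref{prop:boundary}, the interior analysis of Theorem~\ref{thm:direct}, and the joint smoothness of $(x,\lambda)\mapsto u^{i}_{\lambda}(x)$. Boundary behaviour will always be reduced to \eqref{eq:sandwich}.

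\emph{Joint continuity and interior properties.} At interior points, $U^{i}_{\lambda}=\exp\circ u^{i}_{\lambda}$. This is jointly continuous, and in fact jointly $C^\infty$, on $\R^{3}_{++}\times W$ by Theorem~\ref{thm:direct}(a). On $\Lambda_1\setminus W$, I would use instead that $Q^i$ and $\psi^i_\lambda$ are continuous on the compact $\Lambda_1$: the global-diffeomorphism argument of Proposition~\ref{prop:diffeo} holds for every $\lambda\in\Lambda_1$, and the implicit function theorem works for $\lambda$ near any point of $\R^2$ because the data are polynomial. At a boundary limit point $x$, I would take $(x_n,\lambda_n)\to(x,\lambda)$ and bound $0\le U^{i}_{\lambda_n}(x_n)\le e^{C^*}(x_{n,1}x_{n,2}x_{n,3})^{1/3}$ when $x_n$ is interior, and $U^{i}_{\lambda_n}(x_n)=0$ otherwise. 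The right side tends to $0$ because the geometric mean is continuous and vanishes at $x$. The uniformity of $C^*$ in $\lambda$ is exactly what makes this a joint statement. Item (iv) is immediate: $\exp$ is smooth and strictly increasing, so it preserves smoothness, strong monotonicity and the represented preference, and strict quasiconcavity is ordinal.

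\emph{Monotonicity and quasiconcavity.} For (ii) I would split into cases. If $x'$ is on the boundary, then $x\le x'$ also has a zero coordinate, so both values are $0$. If $x'$ is interior and $x$ is on the boundary, then $U(x')>0=U(x)$. If both are interior, Theorem~\ref{thm:direct}(b) gives the strict inequality.

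For (iii), take $x,y\in\R^3_+$ and $t\in[0,1]$, and put $w=tx+(1-t)y$. If either $U(x)$ or $U(y)$ is $0$, the claim holds since $U\ge0$. Otherwise $x,y$ are both interior, so the segment joining them is interior, and concavity of $u^i_\lambda$ gives $u(w)\ge\min\{u(x),u(y)\}$. Applying the monotone map $\exp$ then yields (iii). Convexity of the induced preference follows because upper contour sets of a quasiconcave function are convex.

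\emph{Local nonsatiation.} For (v), fix $x\in\R^3_+$ and a neighbourhood of $x$, and take $x'=x+\varepsilon\one$ with $\varepsilon>0$ small enough to stay in the neighbourhood. Then $x'\in\R^{3}_{++}$, $x'\ge x$ and $x'\ne x$, so (ii) gives $U(x')>U(x)$.

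The only delicate step is the joint continuity at boundary points and near $\partial W$. The sandwich handles the former completely. For the latter I would simply note that all constructions are uniform on $\Lambda_1$, with a slightly larger open set available if needed, as remarked in (D3).
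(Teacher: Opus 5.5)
Your proposal is correct and follows the paper's proof essentially step for step: the uniform sandwich \eqref{eq:sandwich} gives joint continuity at boundary points, case splits reduce (ii) and (iii) to Theorem~\ref{thm:direct}, and $\exp$ serves as a strictly increasing reparametrization for (iv). The only differences are minor: you use a single-case nonsatiation argument via $x+\varepsilon\one$ where the paper treats interior and boundary points separately, and you justify interior joint continuity for $\lambda\in\Lambda_1\setminus W$ explicitly with the implicit function theorem, a point the paper simply calls clear.
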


\begin{proof}
(i) Continuity at interior points is clear ($u^{i}_{\lambda}$ smooth). At a boundary point
$x$, the upper bound in \eqref{eq:sandwich} gives, for any sequence
$(x_{n},\lambda_{n})\to(x,\lambda)$ in $\R^{3}_{++}\times\Lambda_{1}$,
\[
0\le U^{i}_{\lambda_{n}}(x_{n})\le e^{C^{*}}\,(x_{n,1}x_{n,2}x_{n,3})^{1/3}
\longrightarrow e^{C^{*}}\,(x_{1}x_{2}x_{3})^{1/3}=0 ,
\]
since $C^{*}$ is uniform over $\Lambda_{1}$ and the geometric mean is continuous on
$\R^{3}_{+}$ and vanishes on the boundary; sequences approaching $x$ within the boundary
are trivial. Joint continuity follows.

(ii) If $x\in\R^{3}_{++}$, then $x'\ge x$ forces $x'\in\R^{3}_{++}$ and the claim (with
strictness for $x'\ne x$) is the strong monotonicity of $u^{i}_{\lambda}$
(Theorem~\ref{thm:direct}(b)) composed with $\exp$. If $x\in\partial\R^{3}_{+}$, then
$U^{i}_{\lambda}(x)=0\le U^{i}_{\lambda}(x')$, strictly when $x'\in\R^{3}_{++}$.

(iii) If $x,y\in\R^{3}_{++}$, the segment lies in $\R^{3}_{++}$ and the claim follows from
concavity of $u^{i}_{\lambda}$ and monotonicity of $\exp$. If at least one of $x,y$ lies on
the boundary, then $\min\{U(x),U(y)\}=0$ while $U\ge0$ everywhere, so the inequality holds
(in particular it holds, with equality $0=0$, when $x$, $y$ and the segment all lie in one
boundary face). Convexity of upper contour sets follows: $\{U\ge c\}$ equals $\R^{3}_{+}$
for $c\le0$ and $\{x\gg0:u^{i}_{\lambda}(x)\ge\log c\}$ for $c>0$, which is convex by
concavity of $u^{i}_{\lambda}$.

(iv) On the open orthant $\exp$ is a strictly increasing $C^{\infty}$ reparametrization, so
$U^{i}_{\lambda}$ and $u^{i}_{\lambda}$ have the same indifference relation, the same
strict monotonicity (Theorem~\ref{thm:direct}(b)) and the same strict quasiconcavity
(Theorem~\ref{thm:direct}(c)).

(v) At an interior point, strong monotonicity provides strictly preferred bundles in every
neighborhood. At a boundary point $x$, every relative neighborhood of $x$ in $\R^{3}_{+}$
contains points of $\R^{3}_{++}$ (the open orthant is dense in the closed one), where
$U^{i}_{\lambda}>0=U^{i}_{\lambda}(x)$.
\end{proof}

\begin{corollary}[Concave linearly homogeneous representations]\label{cor:homothetic}
For each explicitly constructed consumer, $U^i_\lambda$ is positively homogeneous of
degree one and concave on $\R^3_+$. Hence the induced preference is homothetic, and its
unique Marshallian demand is linear in wealth. On $\R^3_{++}$ the same preference is
represented by the smooth strictly concave function $u^i_\lambda=\log U^i_\lambda$.
\end{corollary}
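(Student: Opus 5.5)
The plan is to prove degree-one homogeneity first, on the open orthant, and then pass to the closed orthant.

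\emph{Step 1: log-homogeneity of $u^i_\lambda$.} Fix $x\in\R^3_{++}$ and $t>0$, and use the minimization formula of Lemma~\ref{lem:equivalence}. The constraint $q\cdot(tx)\le1$ is equivalent to $(tq)\cdot x\le1$. Substituting $q'=tq$ and using the log-homogeneity identity \eqref{eq:loghom}, $\psi^i_\lambda(q)=\psi^i_\lambda(q'/t)=\psi^i_\lambda(q')+\log t$, we get
\[
u^i_\lambda(tx)=\min\{\psi^i_\lambda(q'):q'\cdot x\le1\}+\log t=u^i_\lambda(x)+\log t.
\]
An equivalent route is through the inverse map: since $G^i_\lambda(q/t)=tG^i_\lambda(q)$ by \eqref{eq:gradhom}, we have $Q^i(tx,\lambda)=Q^i(x,\lambda)/t$. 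Exponentiating gives $U^i_\lambda(tx)=tU^i_\lambda(x)$ on $\R^3_{++}$. On $\partial\R^3_+$ both sides are zero, because $tx$ stays on the boundary. This gives positive homogeneity of degree one on all of $\R^3_+$.

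\emph{Step 2: concavity.} This is the step needing care. The standard fact is that a nonnegative, quasiconcave, degree-one homogeneous function is concave, and I would give a short proof rather than cite it.

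Take $x,y\in\R^3_+$ and $s\in[0,1]$. If $U(x)=0$ or $U(y)=0$, I need $U(sx+(1-s)y)\ge sU(x)+(1-s)U(y)$. Say $U(x)=0$. Monotonicity (Proposition~\ref{prop:extension}(ii)) gives $U(sx+(1-s)y)\ge U((1-s)y)=(1-s)U(y)$, using $sx\ge0$ and homogeneity.

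If both values are positive, set $a=U(x)$, $b=U(y)$, $\theta=sa/(sa+(1-s)b)$. Then
\[
sx+(1-s)y=(sa+(1-s)b)\bigl(\theta\,x/a+(1-\theta)\,y/b\bigr).
\]
By homogeneity $U(x/a)=U(y/b)=1$. Quasiconcavity (Proposition~\ref{prop:extension}(iii)) then gives $U(\theta x/a+(1-\theta)y/b)\ge1$, and homogeneity yields $U(sx+(1-s)y)\ge sa+(1-s)b$.

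\emph{Step 3: homotheticity, wealth-linear demand, and the log representation.} Homotheticity follows because a degree-one homogeneous representation has $x\succeq y$ if and only if $tx\succeq ty$.

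For the demand, I first check that the maximizer over the closed budget set lies in the interior. Interior bundles exist in every budget with $m>0$ and give $U>0$, while boundary bundles give $U=0$. So the maximizer is the interior maximizer of Theorem~\ref{thm:rationalize}, and uniqueness carries over. Formula \eqref{eq:demand} then shows the demand equals $m$ times a function of $p$, which is linear in wealth; this also follows directly from homogeneity.

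Finally, $u^i_\lambda=\log U^i_\lambda$ on $\R^3_{++}$ by the definition in Proposition~\ref{prop:extension}. Its smoothness and strict concavity were established in Theorem~\ref{thm:direct}.
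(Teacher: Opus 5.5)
Your proof is correct and follows essentially the same route as the paper. Degree-one homogeneity comes from log-homogeneity of $u^i_\lambda$ (your inverse-map variant $Q^i(tx,\lambda)=Q^i(x,\lambda)/t$ is exactly the paper's argument) plus the zero boundary extension, and concavity comes from quasiconcavity, monotonicity and homogeneity by normalizing to $x/U(x)$ and $y/U(y)$. The only cosmetic difference is that the paper first proves superadditivity $U(x+y)\ge U(x)+U(y)$ and then deduces concavity, whereas you work directly with the convex combination through the weight $\theta$.
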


\begin{proof}
The inverse-demand map $G=-\nabla\psi$ is homogeneous of degree $-1$, so its inverse
satisfies $Q(tx)=t^{-1}Q(x)$ for $t>0$. By log-homogeneity of the gauge,
\[
u^i_\lambda(tx)=\psi^i_\lambda\bigl(t^{-1}Q(x)\bigr)
=u^i_\lambda(x)+\log t,
\]
and therefore $U^i_\lambda(tx)=tU^i_\lambda(x)$ on the open orthant. Continuity and the
zero boundary extension give the same identity on $\R^3_+$, including $t=0$.

It remains to prove concavity. If $U(x),U(y)>0$, quasiconcavity and homogeneity, applied to
$x/U(x)$ and $y/U(y)$, give $U(x+y)\ge U(x)+U(y)$. If one value is zero, the same
superadditivity follows from monotonicity, because $x+y\ge x$ and $U(y)=0$; if both vanish,
it follows from nonnegativity. Thus $U$ is superadditive. Positive degree-one homogeneity
then yields, for $a\in[0,1]$,
\[
U(ax+(1-a)y)\ge U(ax)+U((1-a)y)=aU(x)+(1-a)U(y),
\]
so $U$ is concave. Wealth linearity of unique Marshallian demand follows from homotheticity
and homogeneity, and is also explicit in \eqref{eq:demand}.
\end{proof}

\begin{lemma}[Demand localization]\label{lem:localization}
Let $k$ be as in Definition~\ref{def:k}, $i\in\{1,2,3\}$, $\lambda\in\Lambda_{1}$, and let
$p\gg0$, $m>0$. On the compact budget set $B(p,m)=\{x\in\R^{3}_{+}:p\cdot x\le m\}$ the
utility $U^{i}_{\lambda}$ attains its maximum; no maximizer lies on
$\partial\R^{3}_{+}$; and the unique maximizer is the interior demand of
Theorem~\ref{thm:rationalize},
\[
x^{i}_{\lambda}(p,m)=-\nabla\psi^{i}_{\lambda}\!\Bigl(\frac{p}{m}\Bigr).
\]
In particular the demand, its exact budget balance, its strict positivity, its degree-zero
homogeneity and its joint smoothness are unchanged by the extension.
\end{lemma}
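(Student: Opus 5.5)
The plan is to prove the three assertions in order: existence of a maximizer on the closed budget set, exclusion of boundary maximizers, and identification of the maximizer with the interior demand of Theorem~\ref{thm:rationalize}.

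\emph{Existence.} Because $p\gg0$, the budget set $B(p,m)$ is compact. By Proposition~\ref{prop:extension}(i), $U^{i}_{\lambda}$ is continuous on $\R^{3}_{+}$. Weierstrass's theorem therefore gives a maximizer.

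\emph{Exclusion of the boundary.} Since $m>0$, the bundle $\bar x=x^{i}_{\lambda}(p,m)$ of Theorem~\ref{thm:rationalize} lies in $\R^{3}_{++}$ and satisfies $p\cdot\bar x=m$, so it belongs to $B(p,m)$. It also satisfies $U^{i}_{\lambda}(\bar x)=e^{u^{i}_{\lambda}(\bar x)}>0$. Every boundary bundle has utility $0$ by definition of the extension, so no boundary point can be a maximizer. Consequently every maximizer of $U^{i}_{\lambda}$ on $B(p,m)$ lies in $B(p,m)\cap\R^{3}_{++}$.

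\emph{Identification.} On $\R^{3}_{++}$ we have $U^{i}_{\lambda}=\exp\circ u^{i}_{\lambda}$ with $\exp$ strictly increasing (Proposition~\ref{prop:extension}(iv)). Hence maximizing $U^{i}_{\lambda}$ over $B(p,m)\cap\R^{3}_{++}$ is the same problem as maximizing $u^{i}_{\lambda}$ over $\{x\in\R^{3}_{++}:p\cdot x\le m\}$. Theorem~\ref{thm:rationalize} shows that this problem has the unique solution $\bar x=-\nabla\psi^{i}_{\lambda}(p/m)$. The set of maximizers is therefore exactly $\{\bar x\}$.

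The remaining properties are then inherited directly from Theorem~\ref{thm:rationalize}: exact budget balance, strict positivity, degree-zero homogeneity, and joint smoothness in $(p,m,\lambda)$.

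The argument is essentially bookkeeping. The only point that needs care is the boundary exclusion, and it must rely on strict positivity of $U^{i}_{\lambda}$ at one feasible interior point rather than on any interior first-order condition. The zero boundary extension makes this immediate, so no appeal to the sandwich estimate of Proposition~\ref{prop:boundary} is needed beyond its use in establishing continuity.
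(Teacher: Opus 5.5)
Your proof is correct and follows the paper's argument essentially step for step: Weierstrass on the compact budget set, exclusion of the boundary via a feasible interior bundle with strictly positive utility, and identification of the maximizer through Theorem~\ref{thm:rationalize}. The only difference is cosmetic: you use the demand $\bar x$ itself as the interior witness, whereas the paper uses $\varepsilon\one$ with $\varepsilon=m/(\one\cdot p)$.
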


\begin{proof}
$B(p,m)$ is compact ($p\gg0$) and $U^{i}_{\lambda}$ is continuous on it
(Proposition~\ref{prop:extension}(i)), so a maximizer exists. The bundle
$\varepsilon\one$ with $\varepsilon=m/(\one\cdot p)>0$ is feasible and interior, with
$U^{i}_{\lambda}(\varepsilon\one)>0$; since $U^{i}_{\lambda}\equiv0$ on the boundary, no
boundary point can be optimal. An interior maximizer of $U^{i}_{\lambda}$ on $B(p,m)$
maximizes $U^{i}_{\lambda}$, hence $u^{i}_{\lambda}=\log U^{i}_{\lambda}$, over
$B(p,m)\cap\R^{3}_{++}$; by Theorem~\ref{thm:rationalize} that problem has the unique
solution $x^{i}_{\lambda}(p,m)$, which is therefore the unique maximizer on all of
$B(p,m)$.
\end{proof}

\subsection{The varying-block aggregate}\label{sec:block1}

The individual-demand construction is now complete. We aggregate the three explicit
consumers to determine exactly how their excess demand compares with the target formal
field.

\begin{definition}[Varying block]\label{def:block1}
Consumers $i=1,2,3$ have utilities $U^{i}_{\lambda}$ on $X=\R^{3}_{+}$
(Proposition~\ref{prop:extension}) and endowments
\[
\omega^{i}\deq k\,\mathbf e_i\in X\qquad(\text{$k$ units of good $i$; independent of
$\lambda$}),
\]
which belong to the consumption set, with $\sum_{i=1}^{3}\omega^{i}=k\one\gg0$. Consumer $i$'s
wealth at prices $p\gg0$ is $m_{i}(p)=p\cdot\omega^{i}=k\,p_{i}>0$, the demand is the
unique maximizer on the compact budget set (Lemma~\ref{lem:localization}), and the
individual excess demand is $z^{i}_{\lambda}(p)=x^{i}_{\lambda}\bigl(p,kp_{i}\bigr)-k\,\mathbf e_i$.
\end{definition}

\begin{proposition}[Block-1 aggregation]\label{prop:block1}
For all $p\in\R^{3}_{++}$ and $\lambda\in\Lambda_{1}$,
\begin{equation}\label{eq:block1general}
\sum_{i=1}^{3}z^{i}_{\lambda}(p)
\;=\;k\,R(p)\;-\;\sum_{i=1}^{3}p_{i}\,\nabla\ph^{\,i}_{\lambda}(p),
\qquad
R(p)\deq\frac{\one\cdot p}{3}\,\widehat p^{\,-1}\one-\one.
\end{equation}
Moreover, for all $p$ in the target cone $\Gamma$,
\begin{equation}\label{eq:block1exact}
\sum_{i=1}^{3}z^{i}_{\lambda}(p)\;=\;Z(p;\lambda)\;+\;k\,R(p).
\end{equation}
\end{proposition}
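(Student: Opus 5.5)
The plan is to substitute the explicit demand formula \eqref{eq:demand} of Theorem~\ref{thm:rationalize}, which by Lemma~\ref{lem:localization} is the actual demand on the closed orthant, at the wealth $m_i(p)=kp_i$ of Definition~\ref{def:block1}. This gives
\[
x^{i}_{\lambda}(p,kp_i)=\frac{kp_i}{3}\,\widehat p^{\,-1}\one-p_i\,\nabla\ph^{\,i}_{\lambda}(p),
\]
where the factor $m/k=p_i$ cancels the $1/k$ in front of the cutoff potential. Summing over $i=1,2,3$, the first terms combine to $\frac{k(\one\cdot p)}{3}\widehat p^{\,-1}\one$. The endowments sum to $\sum_i k\mathbf e_i=k\one$. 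Subtracting and factoring $k$ yields $kR(p)$ with $R$ as displayed, while the second terms give exactly $-\sum_i p_i\nabla\ph^{\,i}_{\lambda}(p)$. This proves \eqref{eq:block1general} for every $p\gg0$ and $\lambda\in\Lambda_1$. No restriction to $\Gamma$ is needed, since the demand formula holds globally once $k$ is fixed as in Definition~\ref{def:k}.

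For \eqref{eq:block1exact} I would invoke the flatness statement, Lemma~\ref{lem:extension}(ii). On an open set containing $\Gamma$ we have $\ph^{\,i}_{\lambda}=\widehat\varphi^{\,i}_{\lambda}$, so the gradients agree at every point of $\Gamma$. The weighted-gradient identity \eqref{eq:minusZ}, which comes from Proposition~\ref{prop:potential} with $\varphi^i=-g_i$, then gives $\sum_i p_i\nabla\ph^{\,i}_{\lambda}(p)=-Z(p;\lambda)$ on $\Gamma$. Inserting this into \eqref{eq:block1general} gives $Z+kR$.

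The only delicate point is the bookkeeping that makes the wealth and $k$ scalings cancel, and the requirement that gradient identities (not just values) transfer. The latter is exactly why the cutoff must be identically one on an open neighborhood of $\overline D_{\rho_1}$, so the derivatives coincide on $\Gamma$ itself. I would also note in passing that $R$ is homogeneous of degree zero and Walrasian, $p\cdot R(p)=\frac{\one\cdot p}{3}\cdot 3-\one\cdot p=0$, and independent of $\lambda$, since this is what the subsequent residual-block step uses.
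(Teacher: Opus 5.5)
Your proposal is correct and follows the paper's proof essentially line by line. You substitute \eqref{eq:demand} at wealth $kp_i$ and sum against the endowments $k\mathbf e_i$ to get \eqref{eq:block1general} on the whole orthant; then flatness of the cutoff on a neighbourhood of $\Gamma$ (Lemma~\ref{lem:extension}(ii)), which makes the gradients and not just the values agree there, together with \eqref{eq:minusZ}, yields \eqref{eq:block1exact}.
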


\begin{proof}
By \eqref{eq:demand} with $m=kp_{i}$,
\[
z^{i}_{\lambda}(p)
=\frac{kp_{i}}{3}\widehat p^{\,-1}\one-p_{i}\nabla\ph^{\,i}_{\lambda}(p)-k\,\mathbf e_i.
\]
Summing over $i$ and using $\sum_{i}p_{i}=\one\cdot p$, $\sum_i\mathbf e_i=\one$ gives
\eqref{eq:block1general}. On $\Gamma$, Lemma~\ref{lem:extension}(ii) allows replacing
$\nabla\ph^{\,i}_{\lambda}$ by $\nabla\widehat\varphi^{\,i}_{\lambda}$ (this uses
precisely the flatness of $\chi$ on a \emph{neighborhood} of $\Gamma$: the identity needed
is between \emph{gradients}, not values), and then \eqref{eq:minusZ} gives
$\sum_{i}p_{i}\nabla\widehat\varphi^{\,i}_{\lambda}=-Z(\,\cdot\,;\lambda)$, whence
\eqref{eq:block1exact}.
\end{proof}

Equation~\eqref{eq:block1exact} is the payoff of the explicit block: all parameter
dependence is already exactly correct, and the only remaining discrepancy is the universal
field $kR$, which is independent of $\lambda$.

The field $R$ is homogeneous of degree zero, independent of $\lambda$, and Walrasian,
since $p\cdot R(p)=0$. It is Mantel's universal residual for own-good endowments
$\omega^{i}=k \mathbf e_i$.

\subsection{The residual block via Debreu}\label{sec:block2}

It remains to offset the term $kR$ in \eqref{eq:block1exact}. Since $R$ is independent of
$\lambda$, the residual block can be constructed once and for all by a single application
of Debreu's realization theorem; no parametric realization is required. Although explicit
utility representations are available in related constructions due to Mantel
\cite{mantel74} and Geanakoplos \cite{geanakoplos84}, they are unnecessary here: the
residual consumers are fixed across the parameter family and are used only to generate the
aggregate excess demand $-kR$.

\begin{theorem}[Debreu's exact realization in the normalization used here]\label{thm:debreu}
Let $f\colon\Delta^{\circ}\to\R^{L}$ be continuous on the open normalized price simplex
$\Delta^{\circ}=\{p\in\R^{L}_{++}:\sum_jp_j=1\}$ and satisfy Walras' law
$p\cdot f(p)=0$. For every compact set $P\Subset\Delta^{\circ}$, there exist $L$
consumers with consumption sets $\R^{L}_{+}$, continuous monotone strictly convex
preferences and endowments in $\R^{L}_{+}$ whose individual excess demands
$f^{1},\ldots,f^{L}$ satisfy
\[
\sum_{i=1}^{L} f^{i}(p)=f(p)\qquad(p\in P).
\]
\end{theorem}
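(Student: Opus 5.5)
This is Debreu's theorem \cite{debreu74}, and I would reproduce his argument, adapted to the compact-interior setting in which it is used here. First I would pass from the open simplex to the positive part of the unit sphere. I extend $f$ as a homogeneous-of-degree-zero field on $\R^{L}_{++}$ and work at prices with $|p|=1$; the compact set $P$ is mapped homeomorphically onto a compact set $P'$ of the open spherical orthant. Since a continuous Walrasian field is only required to be matched on $P'$, I would also shrink to a compact neighbourhood $P''\Subset S^{L-1}_{++}$ of $P'$ on which every $p_i\ge\underline p>0$. Outside $P''$ I would not need $f$ at all, which is what makes the compact-interior form easier than Debreu's full statement.

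The second step is Debreu's decomposition. For $|p|=1$ the vectors $e_i-p_ip$ are the orthogonal projections of the coordinate vectors onto the tangent space $p^{\perp}$. Walras' law gives the identity
\[
f(p)=\sum_{i=1}^{L}f_i(p)\,(e_i-p_ip),
\]
and also $\sum_i p_i(e_i-p_ip)=p-|p|^2p=0$. Hence for any $\mu>0$,
\[
f(p)=\sum_{i=1}^{L}\beta_i(p)\,(e_i-p_ip),\qquad \beta_i(p)=f_i(p)+\mu p_i .
\]
Because $f$ is bounded on the compact set $P''$ and $p_i\ge\underline p$ there, a large $\mu$ makes every $\beta_i$ continuous and bounded between two positive constants on $P''$. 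Using Tietze, I would then extend each $\beta_i$ continuously and positively to the whole closed spherical orthant. Consumer $i$ is assigned the target excess demand $f^{i}(p)=\beta_i(p)(e_i-p_ip)$ and an endowment $\omega_i=a\,e_i$, with $a$ large relative to $\sup\beta_i$. The demand $x^i(p)=a e_i+\beta_i(p)(e_i-p_ip)$ is then nonnegative, and it satisfies $p\cdot x^i=p\cdot\omega_i$.

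The main obstacle is the single-consumer rationalization. I must show that each map $p\mapsto x^i(p)$ is the demand of a continuous, monotone, strictly convex preference on $\R^{L}_{+}$ with the given endowment. My first attempt would follow Debreu's geometric construction. The demands $x^i(p)$ all lie on the budget hyperplanes through $\omega_i$, and the special form of $e_i-p_ip$ makes the curve of demands sweep out a "cap" around $\omega_i$ along the $i$th axis. I would build one indifference surface as the boundary of the intersection of the supporting half-spaces $\{x:p\cdot x\ge p\cdot x^i(p)\}$, and check that $x^i(p)$ is the unique support point of this surface at $p$. The direction $e_i$ points strictly into the upper side, because $p\cdot e_i=p_i>0$ while the tangential displacement is controlled by $\beta_i\le\sup\beta_i\ll a$. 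Scaling along rays, or translating along $e_i$, gives a nested family of such surfaces, which defines the preference.

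Strict convexity and monotonicity come from positivity of the normal vectors $p\gg0$ together with the strict-support property. I would verify strict support through a weak-axiom check: $p\cdot x^i(p')\le p\cdot\omega_i$ with $p\ne p'$ implies $p'\cdot x^i(p)>p'\cdot\omega_i$. For vectors of this particular form the check reduces to a computation in the two-dimensional span of $p$ and $p'$. If this step resisted, I would instead use a homothetic Mantel-type gauge as in Appendix~\ref{app:analytic}. The compactness and positivity bounds above supply exactly the uniform margins needed there. Summing the $L$ individual excess demands then recovers $f$ on $P$.
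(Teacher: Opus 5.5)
The paper does not reprove Debreu's theorem. Its proof is exactly your first paragraph: extend $f$ by degree-zero rescaling, note that the radial image of $P$ lies in a truncated region $S_\varepsilon$ of the positive unit sphere, and transfer the identity back by homogeneity. That reduction is then followed by a citation of Debreu (1974). Had you stopped there, you would be done.

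Your reconstruction of Debreu's argument, however, has genuine gaps. First, the nonnegativity claim is false. For $j\neq i$ the $j$th coordinate of $x^i(p)=a e_i+\beta_i(p)(e_i-p_ip)$ is $-\beta_i(p)p_ip_j<0$, and enlarging $a$ only affects coordinate $i$. Flipping the sign of $\beta_i$ does not help. For $\beta_i>0$ your weak-axiom check works: if $p\cdot f^i(p')\le0$ with $p\neq p'$, then $p_i\le p'_i\,(p\cdot p')<p'_i$, whence $p'\cdot f^i(p)\ge\beta_i(p)p'_i\bigl(1-(p\cdot p')^2\bigr)>0$. For $\beta_i<0$, however, any two distinct prices with $p_i=p'_i$ make each demand strictly affordable at the other price. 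So consumer $i$ must own the goods $j\neq i$ (e.g.\ $\omega_i=a\one$), and your argument that $e_i$ points into the upper side has to be redone.

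Second, and more seriously, the rationalization step is not supplied, and it is the entire content of Debreu's theorem. The construction you sketch cannot work. All budget hyperplanes pass through $\omega_i$, so $p\cdot x^i(p)=p\cdot\omega_i$ for every $p$. Hence $\bigcap_p\{x:p\cdot x\ge p\cdot x^i(p)\}$ is just $\omega_i$ plus the dual cone of the price set. It does not even contain $x^i(p)$ when $f^i(p)\neq0$, since a nearby price $p'$ with smaller $i$th coordinate gives $p'\cdot f^i(p)<0$. Your own computation explains why: demands at different prices are strictly revealed-ranked by $p_i$. They therefore lie on different indifference surfaces, and there is no single surface to scale or translate.

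What is needed is a level-by-level construction. For each $t$, build a convex set from the prices with $p_i=t$ whose boundary contains $\{x^i(p):p_i=t\}$. You must then prove that these sets are nested and vary continuously, and that the resulting preference is continuous, monotone and strictly convex with exactly the prescribed demand. Alternatively, invoke a rationalization theorem for strong-axiom demand that yields such a preference. A weak-axiom check at one wealth level per price does not deliver either.

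The Mantel fallback fails too. It requires uniform Hessian bounds, whereas $f$ is only continuous. Moreover, a homothetic consumer's wealth-normalized demand is the gradient $\nabla\log e(p)$ of the log unit-expenditure function, which $x^i(p)/(p\cdot\omega_i)$ generally is not.
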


\begin{proof}
Debreu \cite{debreu74} states his theorem on the strictly positive part of the unit
Euclidean sphere, rather than on the simplex normalization used here. Extend $f$ from
$\Delta^{\circ}$ to $\R^L_{++}$ by degree-zero rescaling,
\[
\widetilde f(q)\deq f\!\left(\frac{q}{\sum_j q_j}\right).
\]
Then $\widetilde f$ is continuous, homogeneous of degree zero, and Walrasian. Because
$P\Subset\Delta^{\circ}$, its radial image on the positive unit sphere is contained in
a Debreu truncated region $S_{\varepsilon}=\{q:\|q\|=1,\ q_j\ge\varepsilon\ \text{for all }j\}$
for some $\varepsilon>0$. Debreu's exact realization theorem therefore supplies $L$
consumers whose aggregate excess demand equals $\widetilde f$ on $S_{\varepsilon}$.
Individual excess demands are homogeneous of degree zero, so radial rescaling transfers the
identity back to the corresponding normalized simplex region, in particular to $P$. This
is the simplex formulation stated above; see also \cite[Section~17.E]{mwg95}.
\end{proof}

\begin{corollary}[Residual block]\label{cor:residual}
There exist three consumers $(\succeq^{4},\omega^{4})$, $(\succeq^{5},\omega^{5})$,
$(\succeq^{6},\omega^{6})$, all independent of $\lambda$, with continuous, monotone,
strictly convex preferences, such that
\[
\sum_{i=4}^{6}z^{i}(p)\;=\;-\,k\,R(p)\qquad\text{for all }p\in\Gamma.
\]
\end{corollary}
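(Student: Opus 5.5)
The plan is to apply Theorem~\ref{thm:debreu} once, with $L=3$, to the parameter-free field $f\deq-kR$ restricted to the open simplex.

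\textbf{Step 1: $f$ satisfies the hypotheses.} We check that $f$ is continuous on $\Delta^{\circ}$ and Walrasian. Continuity is immediate, since $R(p)=\frac{\one\cdot p}{3}\widehat p^{\,-1}\one-\one$ is smooth on $\R^{3}_{++}$. Walras' law follows from
\[
p\cdot R(p)=\frac{\one\cdot p}{3}\sum_{j}p_j\,p_j^{-1}-\one\cdot p=0 .
\]
The field $R$ is also homogeneous of degree zero, so $f$ is determined by its values on $\Delta^{\circ}$. Finally, $f$ does not depend on $\lambda$, because $k$ was fixed in Definition~\ref{def:k} uniformly over $\Lambda_1$.

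\textbf{Step 2: a compact set carrying $\Gamma$.} Let $P\deq\{p/(\one\cdot p):p\in\Gamma\}$ be the normalized image of the target cone. On the slice $p_3=1$, the cone $\Gamma=\Gamma_{\rho_1}$ consists of the points $(1+u,1+v,1)$ with $u^2+v^2\le\rho_1^2$. Since $\rho_1<1$, all coordinates lie in the compact interval $[1-\rho_1,1+\rho_1]$, which is contained in $(0,\infty)$. The map $p\mapsto p/(\one\cdot p)$ is continuous, so $P$ is a compact subset of $\Delta^{\circ}$, that is, $P\Subset\Delta^{\circ}$.

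\textbf{Step 3: apply Debreu's theorem and transfer the identity.} Theorem~\ref{thm:debreu} now yields three consumers $(\succeq^{4},\omega^4),(\succeq^{5},\omega^5),(\succeq^{6},\omega^6)$. Their consumption sets are $\R^3_+$, their preferences are continuous, monotone and strictly convex, and their endowments lie in $\R^3_+$. Their excess demands satisfy $\sum_{i=4}^6 z^i=-kR$ on $P$. Both sides of this identity are homogeneous of degree zero: individual excess demands are, because budget sets are invariant under $(p,\omega)\mapsto(tp,\omega)$, and $R$ is. Hence the identity extends from $P$ to the whole cone $\Gamma$.

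\textbf{Step 4: independence of $\lambda$.} The data fed into Debreu's theorem are $f$ and $P$, and neither involves $\lambda$. So the consumers are chosen once, and this independence requires no separate argument.

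There is no real obstacle in this proof. The only points needing care are the compactness $P\Subset\Delta^{\circ}$, which is where $\rho_1<1$ from Assumption~\ref{ass:data} is used, and checking that Walras' law holds identically for $R$.
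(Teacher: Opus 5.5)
Your proof is correct and follows the paper's argument: one application of Theorem~\ref{thm:debreu} to the $\lambda$-free Walrasian field $-kR$ on a compact subset of $\Delta^{\circ}$ containing the normalized image of $\Gamma$, where $\rho_1<1$ guarantees compact containment, followed by extension of the identity to the cone $\Gamma$ via degree-zero homogeneity. The only cosmetic difference is that the paper first enlarges $\Delta(\Gamma)$ to a truncated simplex $\Delta_{\varepsilon}$ and applies the theorem there, whereas you apply it to $\Delta(\Gamma)$ itself, which the theorem as stated also permits.
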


\begin{proof}
\emph{Normalization.} The simplex image
\[
\Delta(\Gamma)=\Bigl\{\tfrac{p}{\one\cdot p}:p\in\Gamma\Bigr\}
\]
is compactly contained in $\Delta^{\circ}$ because $\rho_{1}<1$. Hence
$\Delta(\Gamma)\subset\Delta_{\varepsilon}$ for some $\varepsilon>0$; fix any such
$\varepsilon$. For the illustrative choice $\rho_{1}=0.72$ in Remark~\ref{rem:numbers}, one may take
$\varepsilon=1/20$, since every normalized price
component on $\Delta(\Gamma)$ is at least $0.28/4.44>1/20$.

\emph{Application of Debreu.} The function $f\deq-kR$ is smooth, Walrasian, and independent of $\lambda$.
Theorem~\ref{thm:debreu} with $L=3$ and $P=\Delta_{\varepsilon}$ produces three consumers
with the stated preference properties and $\sum_{i=4}^{6}z^{i}=-kR$ on
$\Delta_{\varepsilon}$.

\emph{Conical extension of the identity.} Each $z^{i}$ ($i=4,5,6$) is the excess demand of
a consumer with wealth $p\cdot\omega^{i}$, hence homogeneous of degree zero in $p$; $R$ is
hom-$0$ as well. Two hom-$0$ functions that agree on $\Delta_{\varepsilon}$ agree on its cone, which
contains $\Gamma$.
\end{proof}

\subsection{Assembly: proof of the realization theorem}\label{sec:assembly}

All ingredients are now available. The theorem follows by combining the explicit
parameterized block with the fixed Debreu residual block and checking the regularity and
continuity properties established above; no new construction is introduced at this stage.

\begin{proof}[Proof of Theorem~\ref{thm:parametric}]
Take $k$ from Definition~\ref{def:k}; consumers $1,2,3$ as in
Definition~\ref{def:block1}; consumers $4,5,6$ from Corollary~\ref{cor:residual}.

\emph{Class membership.} All six consumers have consumption set $X=\R^{3}_{+}$: for
$i=1,2,3$ the utilities $U^{i}_{\lambda}$ are continuous, monotone and quasiconcave on $X$
(Proposition~\ref{prop:extension}(i)--(iii)), hence induce continuous, monotone, convex
preferences; consumers $4,5,6$ carry Debreu's continuous, monotone, strictly convex
preferences on $X$ natively (Corollary~\ref{cor:residual}). Every endowment lies in $X$,
and $\sum_{i=1}^{3}\omega^{i}=k\one\gg0$ already meets the aggregate-resources requirement,
whatever the (nonnegative) residual-block endowments. Hence each $\eco_{\lambda}$ belongs
to the class of Definition~\ref{def:Meconomy}.

\emph{(a).} The global continuity and monotonicity properties and the joint continuity of
$(x,\lambda)\mapsto U^{3}_{\lambda}(x)$ on $\R^{3}_{+}\times\Lambda$ are
Proposition~\ref{prop:extension}(i)--(iii); degree-one homogeneity and concavity are
Corollary~\ref{cor:homothetic}, and vanishing exactly on the boundary follows from the lower
bound in \eqref{eq:sandwich}. The interior strict and smooth properties are
Proposition~\ref{prop:extension}(iv) together with Theorem~\ref{thm:direct};
$U^{1},U^{2}$ are $\lambda$-independent by construction. Demands exist, are
unique and interior at all $p\gg0$, $m>0$ (Lemma~\ref{lem:localization}), which is the
sense in which strict quasiconcavity, and hence strict convexity of the induced preferences,
is needed---and available---only where demand lives.

\emph{(b).} Corollary~\ref{cor:residual}.

\emph{(c).} By Lemma~\ref{lem:localization} the block-1 demands coincide with the interior
demands of Theorem~\ref{thm:rationalize}, so for $p\in\Gamma$ and $\lambda\in\Lambda$,
Proposition~\ref{prop:block1} and Corollary~\ref{cor:residual} give
\[
\sum_{i=1}^{6}z^{i}_{\lambda}(p)
=\bigl(Z(p;\lambda)+kR(p)\bigr)+\bigl(-kR(p)\bigr)
=Z(p;\lambda).
\]
Off the flat region of the cutoff the gauge components are modified, and the identity is
not asserted there.

\emph{(d).} Joint smoothness of the block-1 demands is Theorem~\ref{thm:rationalize}
(unaffected by the extension, by Lemma~\ref{lem:localization}). For the map
$\lambda\mapsto u^{3}_{\lambda}$: by Theorem~\ref{thm:direct}(a) the function
$(x,\lambda)\mapsto u^{3}_{\lambda}(x)$ is $C^{\infty}$ on the open set
$\R^{3}_{++}\times W$ containing $\R^{3}_{++}\times\Lambda$; hence all mixed derivatives
$\partial^{\beta}_{\lambda}\partial^{\alpha}_{x}u^{3}_{\lambda}(x)$ exist and are jointly
continuous, uniformly on compact subsets of $\R^{3}_{++}\times\Lambda$. This is the
primary formulation of (d), and it is what Section~\ref{sec:main} uses. The reformulation as
smoothness of the map $\lambda\mapsto u^{3}_{\lambda}$ from $W$ into the Fr\'echet space
$C^{\infty}(\R^{3}_{++})$ with the compact-open $C^{\infty}$ topology is the exponential
law for spaces of smooth maps \cite[Ch.~I]{km97}; we record it for convenience and rely on the
primary formulation. Joint continuity of $(x,\lambda)\mapsto U^{3}_{\lambda}(x)$ on the
closed orthant is Proposition~\ref{prop:extension}(i). Finally, $k$ and $\chi$ were fixed
once in Assumption~\ref{ass:data} and Definition~\ref{def:k}, the endowments in
Definition~\ref{def:block1} and Corollary~\ref{cor:residual}, and $U^{1},U^{2}$ and the
residual preferences do not involve $\lambda$: the parameter enters through consumer $3$
alone.
\end{proof}

\begin{remark}[Analytic realization versus monodromy topology]\label{rem:interface}
Appendix~\ref{app:analytic} proves only the analytic realization statement. The annular
topology and the branch permutation play no role in constructing the consumers. They enter
only when the exact aggregate identity on the target cone is combined with the formal
covering of Section~\ref{sec:formal}. This modularity is useful: the realization machinery
controls the economic primitives, while the formal model controls the monodromy.
\end{remark}

\section{Examples from the literature}\label{app:anchors}

This appendix applies Theorem~\ref{thm:anchored} to two equilibrium-price configurations
from the literature. In each application we recover the selected price rays and their local
indices from the published primitives and use only those data to specify the initial fibre.
The monodromic family produced by Theorem~\ref{thm:anchored} is newly constructed; no
monodromy property is attributed to the source economy. In particular, the Kehoe source
model contains production whereas the realizing family constructed here is pure exchange.

\subsection{A four-good pure-exchange configuration: Gauthier--Kehoe--Quintin}

Gauthier, Kehoe, and Quintin construct a four-consumer, four-good pure-exchange CES economy
and report fifteen regular equilibria \cite{gkq22}. Their elasticity is $\eta=1/5$. With
$\theta^{i}_{j}=(a^{i}_{j})^{\eta}$, the transformed CES weight matrix $\Theta$ and
endowment matrix $\Omega$ (rows are consumers, columns are goods) are
\[
\Theta=
\begin{pmatrix}
16&4&1&1\\
4&16&1&1\\
1&1&16&4\\
1&1&4&16
\end{pmatrix},
\qquad
\Omega=
\begin{pmatrix}
144&12&1&1\\
12&144&1&1\\
1&1&144&12\\
1&1&12&144
\end{pmatrix}.
\]
Individual demand is
\[
x^{i}_{j}(p)=
\frac{\theta^{i}_{j}\sum_{\ell}p_{\ell}\omega^{i}_{\ell}}
{p_{j}^{\eta}\sum_{\ell}\theta^{i}_{\ell}p_{\ell}^{1-\eta}}.
\]
From their Table~3, choose the symmetric equilibrium $q_{1}$ and the pair $q_{6},q_{7}$:
\begin{gather*}
q_{1}=(0.25,0.25,0.25,0.25),\qquad \ind(q_{1})=-1,\\
q_{6}\approx(0.2564013031,0.2564013031,0.4203610317,0.0668363621),
\qquad \ind(q_{6})=+1,\\
q_{7}\approx(0.2564013031,0.2564013031,0.0668363621,0.4203610317),
\qquad \ind(q_{7})=+1.
\end{gather*}
The last two are exchanged by interchanging goods $3$ and $4$. A high-precision
replication of the CES demand system gives excess-demand residuals below $10^{-45}$.
Our recomputation, in the convention $\ind=\sgn\det(-D_xF)$ of
Section~\ref{sec:prelim}, agrees with the indices reported in their Table~3.
In the orthonormal version of the log-price coordinates used in the proof of
Theorem~\ref{thm:anchored}, with $q_{6}\mapsto-1$ and $q_{7}\mapsto1$, the fixed point is
numerically
\[
z_{0}\approx0.1541033243+0.2179350112i,
\qquad |z_{0}|\approx0.2669147873<1,
\]
so it is uniformly separated from the mobile unit circle.

\begin{corollary}[A monodromic family with the GKQ equilibrium-price fibre]\label{cor:gkq}
There exists an eight-consumer, four-good pure-exchange family with fixed endowments whose
base restricted fibre is exactly $\{q_{6},q_{7},q_{1}\}$ and whose monodromy is the
transposition $(q_{6}\ q_{7})$, with $q_{1}$ fixed.
\end{corollary}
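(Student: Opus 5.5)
The plan is to apply Theorem~\ref{thm:anchored} with $L=4$ and the three rays $q_{+}=q_{7}$, $q_{-}=q_{6}$, $q_{0}=q_{1}$. First I check the hypotheses. The three price vectors above are distinct and strictly positive, so they define three distinct rays in $\R^4_{++}$. I then normalize by $p_4=1$ and pass to the log-relative-price coordinates $x(p)=(\log(p_1/p_4),\log(p_2/p_4),\log(p_3/p_4))$. Because $q_6$ and $q_7$ differ only by interchanging goods $3$ and $4$, the points $x_6$ and $x_7$ are distinct, and $x_1=0$.

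Next I build the linear isomorphism $T$ required in the proof of Theorem~\ref{thm:anchored}. It sends $x_7-m$ to $(1,0,0)$ and $x_6-m$ to $(-1,0,0)$, with $m=(x_6+x_7)/2$, and it sends $x_1-m$ into the plane $y_3=0$. For this I use the orthonormal frame described just before the corollary: the first axis is along $x_7-x_6$, scaled so that $x_7-m$ has length one, and the second axis is the normalized component of $x_1-m$ orthogonal to it. I complete this to a basis with any third vector. The reported numerical value gives $|z_0|\approx 0.2669<1$, so $a^2+b^2\ne1$ holds without rescaling. Here $z_0=a+ib$ is the image of $q_1$, and the only place the numerics enter is this inequality. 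I will fix $\varepsilon>0$ so that $|z_0|^2\approx0.0712\notin[1-\varepsilon,1+\varepsilon]$; for example $\varepsilon=1/2$ works. This gives the annulus $A_*$.

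Theorem~\ref{thm:anchored} then yields a continuous family of pure-exchange economies with fixed endowments and at most $2L=8$ consumers, built through Proposition~\ref{prop:dimfree}. Over the base point its restricted fibre is exactly $\{q_6,q_7,q_1\}$. The two mobile rays are exchanged by the generator and $q_1$ stays fixed, which is the transposition $(q_6\ q_7)$.

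The formal model assigns indices $+1,+1,-1$ to these three points, and these agree with the source indices $\ind(q_6)=\ind(q_7)=+1$ and $\ind(q_1)=-1$ recomputed above. By Remark~\ref{rem:anchor-scope} this matching is bookkeeping, not a hypothesis of the theorem. To obtain exactly eight consumers rather than at most eight, I will note that the Proposition~\ref{prop:dimfree} construction uses $L$ explicit consumers plus $L$ Debreu consumers from Theorem~\ref{thm:debreu}. Any Debreu consumers that happen to be unused can be made trivial, or the count can be read as the constructive bound $2L=8$.

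The main obstacle is the rigorous justification of $|z_0|\ne1$, since it depends on high-precision numerical values. I would handle it with an explicit error bound: the reported digits give $|z_0|$ within $10^{-9}$ of $0.2669$, far from $1$.
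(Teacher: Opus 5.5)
Your proposal is correct and is essentially the paper's proof, which simply applies Theorem~\ref{thm:anchored} to the three distinct rays; assigning $q_{+}=q_{7}$ and $q_{-}=q_{6}$ rather than the reverse is immaterial. The numerical check of $|z_0|\ne1$ is not needed, because the theorem requires only distinct rays and its proof rescales a transverse coordinate to avoid that radius; likewise Theorem~\ref{thm:debreu} supplies exactly $L=4$ residual consumers, so the count is exactly eight without any padding.
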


\begin{proof}
Apply Theorem~\ref{thm:anchored} with
$(q_{+},q_{-},q_{0})=(q_{6},q_{7},q_{1})$.
\end{proof}

\subsection{A four-good production configuration: Kehoe}

Kehoe's numerical example has four goods, four Cobb--Douglas consumers, and an
activity-analysis production technology with two non-disposal production activities
\cite{kehoe85num}. The budget-share matrix $\boldsymbol{\alpha}=(\alpha_{ji})$ and the
net-output matrix $\mathcal B$ of the two activities are
\[
\boldsymbol{\alpha}=
\begin{pmatrix}
0.52&0.86&0.50&0.06\\
0.40&0.10&0.20&0.25\\
0.04&0.02&0.2975&0.0025\\
0.04&0.02&0.0025&0.6875
\end{pmatrix},
\qquad
\mathcal B=
\begin{pmatrix}
6&-1\\-1&3\\-4&-1\\-1&-1
\end{pmatrix}.
\]
Kehoe's original endowments are
\[
50\mathbf e_{1},\qquad 50\mathbf e_{2},\qquad
400\mathbf e_{3},\qquad 400\mathbf e_{4}.
\]
Since a common positive rescaling of all endowments merely rescales aggregate consumer
excess demand and the equilibrium activity levels, without changing the equilibrium price
rays or their index signs, we divide all endowments by $10$ and work with
\[
\omega^{1}=5\mathbf e_{1},\qquad \omega^{2}=5\mathbf e_{2},\qquad
\omega^{3}=40\mathbf e_{3},\qquad \omega^{4}=40\mathbf e_{4}.
\]
Here columns of $\boldsymbol{\alpha}$ index consumers. Both non-disposal production
activities operate at all three equilibria.  Normalizing $p_{2}=1$ and writing $p_{4}=t$, the zero-profit equations
give
\[
p(t)=\left(\frac{13-3t}{10},1,\frac{17-7t}{10},t\right).
\]
Substitution into aggregate consumer excess demand reduces equilibrium exactly to
\[
\frac{(t-1)(7980t^{2}-22923t+11713)}
{170(3t-13)(7t-17)}=0.
\]
Thus the three positive price rays correspond to
\[
t_{-}=\frac{22923-\sqrt{151584969}}{15960},\qquad
 t_{0}=1,\qquad
 t_{+}=\frac{22923+\sqrt{151584969}}{15960}.
\]
Numerically, on the slice $p_{2}=1$,
\begin{align*}
k_{-}&\approx(1.1005448265,1,1.2346045952,0.6648505783),\\
k_{0}&=(1,1,1,1),\\
k_{+}&\approx(0.6376882562,1,0.1546059311,2.2077058127).
\end{align*}
Kehoe's bordered-index formula for activity-analysis production gives a directly
reproducible sign calculation \cite{kehoe80}. Using the corresponding reduced coordinate
choice with commodity $4$ omitted, let
\[
\bar J=D_p z^{\mathrm{cons}}_{1:3,1:3},\qquad
\bar B=\mathcal B_{1:3,:}.
\]
Kehoe's bordered determinant has the sign of
\[
\det\begin{pmatrix}-\bar J&-\bar B\\\bar B^{\top}&0\end{pmatrix}.
\]
The accompanying script uses the algebraically equivalent matrix
$\bigl(\begin{smallmatrix}-\bar J&\bar B\\-\bar B^{\top}&0\end{smallmatrix}\bigr)$,
obtained by reversing both off-diagonal block signs, which leaves the determinant unchanged.
Evaluating at the displayed $p_2=1$ representative of each price ray gives approximately
$20.19086705$, $-32.3$, and $3329.629094$. Their magnitudes depend on the chosen
representative and normalization, whereas the signs are the invariant information needed
here. The index pattern is therefore $(+1,-1,+1)$, in agreement with the reduced convention
$\ind=\sgn\det(-D_xF)$ of Section~\ref{sec:prelim}. With $k_{-}$ mapped to $+1$ and
$k_{+}$ to $-1$ in the active coordinate, the fixed unit-price equilibrium has
\[
z_{0}\approx0.5537657458+0.1092493308i,
\qquad |z_{0}|\approx0.5644394719<1.
\]

\begin{corollary}[A monodromic family with Kehoe's equilibrium-price fibre]\label{cor:kehoe}
There exists an eight-consumer, four-good pure-exchange family with fixed endowments whose
base restricted fibre consists exactly of the three Kehoe price rays and whose monodromy
exchanges the two positive-index rays while fixing the unit-price, negative-index ray.
\end{corollary}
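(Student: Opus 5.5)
The plan is to apply Theorem~\ref{thm:anchored} with $L=4$ and the ordered triple $(q_{+},q_{-},q_{0})=(k_{+},k_{-},k_{0})$, exactly as was done for Corollary~\ref{cor:gkq}. The hypotheses of that theorem only require three distinct positive price rays, so the first step is to verify distinctness and strict positivity. Positivity is explicit from the displayed representatives on the slice $p_2=1$, since $t_{\pm}$ both lie strictly between $0$ and $13/7$ (using $\sqrt{151584969}\approx12312$, so $t_{-}\approx0.665$ and $t_{+}\approx2.208$). This makes all four components of $p(t)$ positive. Distinctness is immediate because $t_{-}<t_{0}=1<t_{+}$ give different fourth coordinates on the common normalization.

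Next, Theorem~\ref{thm:anchored} supplies a parameter annulus, a compact region $K_{*}$, and a continuous family of pure-exchange economies with fixed endowments and at most $2L=8$ consumers. In that family the base restricted fibre is exactly $\{k_{+},k_{-},k_{0}\}$, and the indices are $+1,+1,-1$ in the order $(q_{+},q_{-},q_{0})$. Its monodromy exchanges $q_{+}=k_{+}$ and $q_{-}=k_{-}$ and fixes $q_{0}=k_{0}$. This gives the eight-consumer four-good statement directly. If Proposition~\ref{prop:dimfree} happens to produce fewer consumers, I will pad the family with inactive fixed consumers, for example a Cobb--Douglas consumer with zero endowment has zero excess demand. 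That keeps the count at eight without altering aggregate excess demand, although the corollary presumably intends ``at most'' anyway.

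The remaining point is the labelling in the corollary: ``positive-index rays'' and ``the unit-price, negative-index ray.'' By Remark~\ref{rem:anchor-scope}, an index belongs to an excess-demand system, and in the constructed family the indices are by construction $+1$ at $k_{\pm}$ and $-1$ at $k_{0}$. So the corollary's description is literally true of the new family. I would add that this matches the source-model pattern $(+1,-1,+1)$ computed above from Kehoe's bordered determinant, whose signs at $k_{-},k_{0},k_{+}$ are positive, negative, positive. This consistency check is not logically needed but justifies the choice of which ray is fixed.

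The main obstacle, such as it is, lies in the proof of Theorem~\ref{thm:anchored} choosing $T$ with $a^2+b^2\ne1$. Here the displayed $|z_0|\approx0.564<1$ already confirms uniform separation from the mobile unit circle. The only real care point is that the orientation convention for $k_{\pm}\mapsto\pm1$ is immaterial, since the transposition is symmetric.
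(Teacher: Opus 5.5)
Your proposal is correct and matches the paper's proof, which simply applies Theorem~\ref{thm:anchored} with $L=4$ to the three Kehoe rays. The order of $k_{+}$ and $k_{-}$ is immaterial, and no padding is needed, since Debreu's theorem already supplies $L$ residual consumers, giving exactly $2L=8$. One small slip: positivity of $p(t)$ requires $0<t<17/7$ (from the third component), not $t<13/7$, a bound your own value $t_{+}\approx2.208$ violates; the conclusion still holds because $t_{+}<17/7$.
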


\begin{proof}
Apply Theorem~\ref{thm:anchored} to $(k_{-},k_{+},k_{0})$.
\end{proof}

\section{Two-good examples: multiplicity without monodromy}\label{app:twogood}

Section~\ref{sec:twogood} uses the quadratic Toda--Walsh example to distinguish real
economic monodromy from complex algebraic monodromy. For completeness, we record here the
primitive specifications and exact factorizations for the benchmark cases used in their
two-good analysis \cite{todawalsh17}. These calculations add no new topological mechanism:
in every case the positive roots remain globally ordered on the real regular region.

\subsection{Quadratic benchmark}
Toda and Walsh consider two consumers with identical endowments
$e_1=e_2=(e,e)$ and symmetric separable utilities
\[
U_1(x_1,x_2)=\alpha u(x_1)+(1-\alpha)u(x_2),\qquad
U_2(x_1,x_2)=(1-\alpha)u(x_1)+\alpha u(x_2),
\]
where
\[
u(x)=x-\frac{x^2}{2\tau},\qquad \tau>e.
\]
Thus $\alpha$ is a preference weight, not an endowment parameter. After cancellation of
a positive factor, their reduced equilibrium equation can be written as
\[
f_\alpha(p)=
\frac{p(p-1)\bigl[2\alpha(1-\alpha)(p+1)^2-p\bigr]}
{(1-\alpha+\alpha p^2)(\alpha+(1-\alpha)p^2)}=0.
\]
Writing $k=2\alpha(1-\alpha)$, the two non-unit roots solve
\[
kp^2+(2k-1)p+k=0,
\]
hence
\[
p_{\pm}(\alpha)=\frac{1-2k\pm\sqrt{1-4k}}{2k},
\qquad p_-p_+=1.
\]
The condition for three distinct positive real equilibria is
$1-8\alpha(1-\alpha)>0$. The parameters $e$ and $\tau$ do not enter this reduced
equilibrium condition.

\subsection{CRRA benchmark}
Toda and Walsh consider the symmetric CRRA utilities
\[
U_1(x_1,x_2)=\frac{\alpha^{\gamma}x_1^{1-\gamma}+(1-\alpha)^{\gamma}x_2^{1-\gamma}}{1-\gamma},
\qquad
U_2(x_1,x_2)=\frac{(1-\alpha)^{\gamma}x_1^{1-\gamma}+\alpha^{\gamma}x_2^{1-\gamma}}{1-\gamma},
\]
with symmetric endowments $e_1=(e,1-e)$ and $e_2=(1-e,e)$. Their concrete example takes
$\gamma=3$, $\alpha=1/7$, and $e=1/49$ \cite{todawalsh17}. With $t=p^{1/3}$,
\[
z_{1}(t^{3})=
\frac{6t^{2}(t-2)(t-1)(2t-1)}
{7(t^{2}+6)(6t^{2}+1)}.
\]
Hence the positive-price roots are
\[
p\in\left\{\frac18,1,8\right\}.
\]

\subsection{Quasi-linear benchmark}
Their quasi-linear specification is
\[
U_1(x_1,x_2)=x_1+u(x_2),\qquad U_2(x_1,x_2)=u(x_1)+x_2,
\]
with identical endowments $e_1=e_2=(e,e)$. For the concrete example,
$u(x)=\beta x^{1-\gamma}/(1-\gamma)$ with $\beta=1$, $\gamma=3$ (so
$\varepsilon=1/3$), and $e=1/4$ \cite{todawalsh17}. Their quasi-linear formulation allows
the quasi-linear good to take negative values; we use this example only as an algebraic
equilibrium-price benchmark. Again with $t=p^{1/3}$,
\[
z_{1}(t^{3})=\frac14(t-1)(t^{2}-3t+1),
\]
whose positive roots are
\[
p\in\left\{
\left(\frac{3-\sqrt5}{2}\right)^{3},\ 1,\
\left(\frac{3+\sqrt5}{2}\right)^{3}
\right\}.
\]

\subsection{HARA benchmark}
For the HARA case, Toda and Walsh write the Bernoulli utility $u$ through
\[
-\frac{u''(x)}{u'(x)}=\frac{1}{\varepsilon_{\mathrm{TW}}(x-c)}
\]
and use the symmetric preferences
\[
U_1=\alpha^{1/\varepsilon_{\mathrm{TW}}}u(x_1)
 +(1-\alpha)^{1/\varepsilon_{\mathrm{TW}}}u(x_2),\qquad
U_2=(1-\alpha)^{1/\varepsilon_{\mathrm{TW}}}u(x_1)
 +\alpha^{1/\varepsilon_{\mathrm{TW}}}u(x_2).
\]
The concrete benchmark used here has identical endowments $(e,e)$ and parameters
$e=1$, $c=2$, $\varepsilon_{\mathrm{TW}}=-1/4$, and $\alpha=1/4$
\cite{todawalsh17}. With $t=p^{1/4}$,
\[
z_{1}(t^{4})=-\frac{2(t-1)(t+1)}{(t^{5}+3)(3t^{5}+1)}
\bigl(3t^{4}-5t^{3}+3t^{2}-5t+3\bigr).
\]
The palindromic quartic reduces under $y=t+t^{-1}$ to
$3y^{2}-5y-3=0$, yielding one reciprocal pair of positive roots in addition to $p=1$.
Thus all three displayed benchmarks have three distinct simple positive roots
$p_{-}<1<p_{+}$ with $p_{-}p_{+}=1$.

\end{document}